\documentclass[11pt]{extarticle}

\usepackage[T1]{fontenc}
\usepackage[utf8]{inputenc}
\usepackage{amsmath,amssymb,amsfonts,amsthm}
\usepackage{mathtools}
\usepackage{bm}
\usepackage{booktabs}
\usepackage{microtype}
\usepackage{graphicx}
\usepackage[top=1in,bottom=1.25in,left=0.75in,right=0.75in]{geometry}
\usepackage{hyperref}

\hypersetup{
  colorlinks=true,
  linkcolor=blue,
  citecolor=blue,
  urlcolor=blue,
  pdftitle={The diffusive spectral gap of the qubit eSWAP chain},
  pdfauthor={Yanying Liang and Haoran Zhu},
  pdfsubject={Spectral gaps of symmetric random quantum circuits},
  pdfkeywords={eSWAP circuits, unitary designs, spectral gaps, interchange process, exclusion process}
}

\newtheorem{theorem}{Theorem}[section]
\newtheorem{lemma}[theorem]{Lemma}
\newtheorem{corollary}[theorem]{Corollary}
\newtheorem{proposition}[theorem]{Proposition}

\theoremstyle{definition}

\theoremstyle{remark}

\numberwithin{equation}{section}

\newcommand{\C}{\mathbb C}
\newcommand{\R}{\mathbb R}
\newcommand{\Sn}{\mathfrak S}
\newcommand{\Univ}{\mathcal U}

\newcommand{\Specht}[1]{\mathsf S^{#1}}
\newcommand{\Schur}[1]{\mathbb S_{#1}}
\newcommand{\ip}[2]{\left\langle #1,\,#2\right\rangle}
\newcommand{\transp}[2]{(#1\,#2)}

\DeclareMathOperator{\Tr}{Tr}

\DeclareMathOperator{\Ker}{Ker}
\DeclareMathOperator{\Ran}{Ran}
\DeclareMathOperator{\Spec}{Spec}
\DeclareMathOperator{\Symm}{Sym}

\begin{document}

\title{The diffusive spectral gap of the qubit eSWAP chain}

\author{%
Yanying Liang\textsuperscript{1,2}\qquad
Haoran Zhu\textsuperscript{3}\\[0.8ex]
\small\textsuperscript{1}College of Mathematics and Informatics,
South China Agricultural University, Guangzhou 510642, China\\
\small\textsuperscript{2}Institute of Advanced Intelligence and Computing,
A*STAR, Singapore 138632, Republic of Singapore\\
\small\textsuperscript{3}Division of Mathematical Sciences,
Nanyang Technological University, Singapore 637371\\[0.5ex]
\small\href{mailto:yyl2022@scau.edu.cn}{yyl2022@scau.edu.cn}\qquad
\href{mailto:zhuh0031@e.ntu.edu.sg}{zhuh0031@e.ntu.edu.sg}}

\date{}

\maketitle

\begin{abstract}
We prove that the second-moment Hamiltonian of the open qubit eSWAP chain
has a diffusive spectral gap, resolving a conjecture of Li, Zheng, Liu,
Jiang, and Liu.  We also characterise
the ground space on every connected interaction graph with at least five
vertices and prove a uniform lower bound for the unnormalised complete-graph
gap.  We apply these results to sequential and brickwork circuits and prove
that the periodic chain also has a diffusive spectral gap.
\end{abstract}

\medskip
\noindent\textbf{Keywords.} symmetric random quantum circuits; eSWAP gates;
unitary designs; diffusive relaxation; spectral gaps; interchange process;
exclusion process; Schur--Weyl duality.

\medskip
\noindent\textbf{2020 Mathematics Subject Classification.} 81P45, 81P40, 20C30, 60J27.

\section{Introduction}\label{sec:introduction}

Unitary designs reproduce Haar moments up to a prescribed order and are
used in randomised benchmarking, decoupling and scrambling.  Exact and approximate
unitary $2$-designs were developed in \cite{DankertEtAl}.  Polynomial
convergence of local random circuits was proved at second moment in
\cite{HarrowLow} and at arbitrary fixed moment in
\cite{BrandaoHarrowHorodecki}; subsequent work sharpened the dependence on
moment order, depth and interaction geometry
\cite{Haferkamp,HarrowMehraban}.  The associated moment Hamiltonians connect
these questions to spectral-gap estimates and statistical mechanics
\cite{HunterJones}.  Finite-size criteria and martingale estimates give
local-to-global gap bounds \cite{Knabe,Nachtergaele,GossetMozgunov}, and
exact second-moment gaps are known for several one-dimensional ensembles
\cite{DenerisEtAl}.

For symmetric circuits, one must first identify the limiting ensemble.
Local symmetric gates may generate a proper subgroup of the globally
symmetric unitaries \cite{Marvian}; attainable design orders have since been
characterised for broad classes of symmetric circuits
\cite{MitsuhashiEtAl,LiuHulseMarvian}.  The ground space of the moment
Hamiltonian records the conserved algebra, while its low-energy spectrum
determines relaxation \cite[Sections~II and IV]{MoudgalyaMotrunich}.
Number-conserving circuits have diffusive second-moment modes
\cite{HearthEtAl,HearthErratum}.  Non-Abelian symmetric random unitaries also
arise in scrambling and covariant quantum error correction
\cite{LiEtAlApplications}.

For $n$ qubits, let $\Univ_\times$ be the group of unitaries on
$(\C^2)^{\otimes n}$ that commute with $u^{\otimes n}$ for every
$u\in\mathrm{SU}(2)$.  An $\mathrm{SU}(2)$-symmetric unitary $2$-design reproduces the second moment
of Haar measure on $\Univ_\times$.  Li, Zheng, Liu, Jiang and Liu constructed
local $\mathrm{SU}(d)$-symmetric ensembles whose design order grows
quadratically with the number of qudits \cite[Theorem~7]{LiEtAl}.
For qubits, they proved that, when $n\geqslant5$, the second-moment fixed
space of the group generated by eSWAP gates agrees with the Haar fixed
space on $\Univ_\times$
\cite[Theorem~8 and Eqs.~(C38)--(C40)]{LiEtAl}.  The remaining question is
how quickly local eSWAP circuits converge.

The eSWAP gates are $e^{-\mathrm i t\tau_e}$, where $\tau_e$ exchanges the
qubits at the endpoints of an edge $e$.  Since
$\tau_{\{i,j\}}=\tfrac12(I+X_iX_j+Y_iY_j+Z_iZ_j)$, an eSWAP gate is,
up to a global phase, isotropic Heisenberg exchange.  This interaction
preserves total spin and, with two ancillary qubits, generates every
rotationally invariant unitary \cite[Appendix~I]{MarvianLiuHulse}.
The second-moment gap measures the relaxation scale imposed by this
non-Abelian conservation law.

At each step, the circuit chooses one of the $n-1$ bonds of an open chain
uniformly and applies an eSWAP gate whose angle is sampled uniformly over a
full period.  Denote this one-step ensemble by $\mathcal E_n$.  Its second-moment
operator is
\begin{equation}\label{eq:intro-moment-operator}
 \mathsf M_n
 =\frac{1}{n-1}\sum_{i=1}^{n-1}\mathsf T_{\{i,i+1\}}^{(2)}
 =I-\frac{1}{n-1}H_n,
 \qquad
 H_n=\sum_{i=1}^{n-1}h_{\{i,i+1\}}.
\end{equation}
Here $\mathsf T_e^{(2)}$ is the local second-moment twirl and
$h_e=I-\mathsf T_e^{(2)}$.  Li et al.\ conjectured that
$\Delta(H_n)=\Theta(n^{-2})$, supported by exact finite-size spectra
\cite[Sec.~IV B, Eq.~(37), and the conjecture following Eq.~(43)]{LiEtAl}.

We prove this spectral-gap conjecture.  We show that every non-zero
eigenvalue of one half of the path Laplacian is an eigenvalue of the moment
Hamiltonian, and obtain a matching lower bound for its spectral gap.  The
lower bound follows from a uniform bound for the reaction gap on the complete
graph and an explicit solution of a Poisson equation for two-particle simple
exclusion on the path.  We then derive convergence bounds for sequential,
brickwork, periodic and all-to-all circuits.

\subsection{Main results}

For a positive semidefinite operator $L$, let $\Delta(L)$ denote its smallest
positive eigenvalue.  If $0\leqslant M\leqslant I$ and $1\in\Spec(M)$, we write
$\operatorname{gap}(M)=\Delta(I-M)$.  Define
\[
 c_0=-\frac{3\sin(\pi\sqrt2)}{10\pi\sqrt2},
 \qquad
 c_*=\left(\frac{1}{4c_0}+\frac12\right)^{-1}.
\]
Both constants are positive, with
$c_0\approx0.0650863$ and $c_*\approx0.230359$.

\begin{theorem}[The eSWAP-chain spectral-gap conjecture]\label{thm:main}
Let $n\geqslant4$.  Then
\begin{equation}\label{eq:main-refined-bounds}
 \left(
   \frac{n(n+2)}{6c_0}
   +\frac{1}{1-\cos(\pi/n)}
 \right)^{-1}
 \leqslant \Delta(H_n)
 \leqslant 1-\cos\frac{\pi}{n}.
\end{equation}
Moreover,
\begin{equation}\label{eq:path-laplacian-branch}
 \left\{1-\cos\frac{k\pi}{n}:1\leqslant k\leqslant n-1\right\}
 \subseteq\Spec(H_n).
\end{equation}
The remaining small cases satisfy $\Delta(H_2)=1$ and $\Delta(H_3)=3/8$.
Consequently, for every $n\geqslant2$,
\begin{equation}\label{eq:main-diffusive-bounds}
 \frac{c_*}{n^2}
 \leqslant \Delta(H_n)
 \leqslant 1-\cos\frac{\pi}{n}
 \leqslant \frac{\pi^2}{2n^2}.
\end{equation}
In particular, $\Delta(H_n)=\Theta(n^{-2})$.
\end{theorem}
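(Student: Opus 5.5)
The argument splits into an upper bound via explicit eigenvectors, a lower bound via a decomposition of $H_n$, and a check of small cases.

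\textbf{Setup.} The local twirl $\mathsf T_e^{(2)}$ averages $U\otimes U\otimes\bar U\otimes\bar U$ over $U=e^{-\mathrm i t\tau_e}$. Its range is spanned by the polynomials in $\tau_e$ acting on the two doubled copies, so $h_e$ is the orthogonal projection onto the complement of the commutant of $\{\tau_e\otimes\tau_e,\ldots\}$. I would work in the vectorised picture: $\mathsf T_e^{(2)}$ projects onto operators that are invariant under conjugation by $e^{-\mathrm i t\tau_e}$ on each copy. Since $\tau_e^2=I$, only the frequencies $0,\pm2$ survive the average.

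\textbf{Upper bound and the Laplacian branch.} I would build eigenvectors of the form $v_f=\sum_i f(i)\,A_i$. Here $A_i$ is a suitable single-site "defect" operator inserted into a ground-state vector, for example a traceless one-qubit operator on site $i$ tensored with identity, in the copy-paired form $\sum_i f(i)\,\sigma_i\otimes\sigma_i$-type combinations. I would compute that $h_{\{i,i+1\}}$ acts on this family as $\tfrac12$ times the edge Laplacian: the twirl averages the defect over the two sites of the edge with weight $1/2$. The family is then invariant under $H_n$, which acts as $\tfrac12 L_{\text{path}}$. With Neumann eigenfunctions $f_k(i)=\cos\bigl(k\pi(i-\tfrac12)/n\bigr)$ this gives \eqref{eq:path-laplacian-branch}. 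I must check that these vectors are nonzero and orthogonal to the kernel; the $k=0$ mode lies in the ground space. The upper bound follows with $k=1$, together with $1-\cos x\le x^2/2$.

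\textbf{Lower bound.} This is the main obstacle. I would decompose the orthogonal complement of $\Ker H_n$ via Schur--Weyl duality into a "transport" sector and a "reaction" sector. The transport sector is the image of the two-particle exclusion dynamics; the reaction sector is the part not in the span generated by symmetric-group moves. On the reaction sector I would compare $H_n$ with the complete-graph Hamiltonian $H_{K_n}=\sum_{i<j}h_{ij}$, assuming its unnormalised gap is at least $c_0$ (the earlier complete-graph result). To transfer this to the path, I would express each $h_{ij}$ with $|i-j|>1$ via the path moves $i\to j$. The naive comparison would lose a factor $|i-j|^2$. Instead I would solve the Poisson equation for two-particle simple exclusion on the path explicitly: quadratic test functions give $\sum_{i<j}(j-i)\cdots$ sums of order $n(n+2)/6$. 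This yields $H_n\ge \frac{6c_0}{n(n+2)}\,\Pi_{\mathrm{reac}}$ in a suitable sense. I would then combine the two sectors through the inequality $\Delta\ge(a^{-1}+b^{-1})^{-1}$, coming from a two-block variational argument, with $a=6c_0/(n(n+2))$ and $b=1-\cos(\pi/n)$. The delicate point is the cross terms between sectors, which I would control by a Cauchy--Schwarz splitting of the energy form.

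\textbf{Small cases and conclusion.} For $n=2,3$ I would diagonalise directly in the commutant basis, which is small by Schur--Weyl. This gives $\Delta(H_2)=1$ and $\Delta(H_3)=3/8$, and both satisfy $c_*/n^2\le\Delta$. For $n\ge4$, I would use $1-\cos(\pi/n)\ge 2/n^2$ and $n(n+2)\le \tfrac32 n^2$ (true for $n\ge4$). These give a lower bound of $n^{-2}\bigl(\tfrac{1}{4c_0}+\tfrac12\bigr)^{-1}=c_*/n^2$.
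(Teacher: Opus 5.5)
Your upper bound and small cases are sound in outline, and the combination formula has the right shape. The lower bound, however, has a genuine gap: you reproduce the constant $n(n+2)/(6c_0)+1/(1-\cos(\pi/n))$ without a mechanism that produces it.

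\textbf{The missing splitting and two of the three inputs.} The right decomposition is not a Schur--Weyl splitting of $(\Ker H_n)^\perp$ into sectors. It is the operator splitting $h_e=p_e+q_e$, where $p_e=\tfrac12(I-S_e)$ is the interchange projection on $V^{\otimes n}$ with $V=\C^{16}$. Thus $H_n=E_{P_n}+R_{P_n}$, and the slow space is $\mathcal S_n=\Ker E_{P_n}=\Symm^n(V)$. Because $R_{P_n}$ does not preserve $\mathcal S_n$, $H_n$ is not block diagonal here, and this is the only source of cross terms. Three inputs are then needed, and your plan supplies none of them.
\begin{itemize}
\item First, $E_{P_n}\geqslant(1-\cos(\pi/n))(I-\mathsf Q_n)$ must hold in every irreducible representation of $\Sn_n$ occurring in $V^{\otimes n}$. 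This is Aldous' spectral-gap theorem (Caputo--Liggett--Richthammer). Two-particle exclusion does not give it, so describing the transport sector as its image is not right.
\item Second, on $\mathcal S_n$ no comparison along path moves is needed. All $q_e$ have the same matrix elements between symmetric vectors, so $\mathsf Q_nR_{P_n}\mathsf Q_n=\tfrac2nR_n$ exactly, and the complete-graph bound gives $\beta=2c_0/n$. Your alternative, comparing $h_{ij}$ with path moves on general vectors, does not reduce to a scalar Poisson equation on pairs. That reduction rests on the $\Sn_n$-equivariance of $\mathbf e_{\{i,j\}}\mapsto q_{\{i,j\}}\xi$, which holds only for symmetric $\xi$.
\end{itemize}

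\textbf{The third input: where the Poisson equation actually enters.} It controls the off-diagonal block $B_n=(I-\mathsf Q_n)R_{P_n}\mathsf Q_n$, which is exactly the cross term you leave to an unspecified Cauchy--Schwarz.
\begin{itemize}
\item For $\xi\in\mathcal S_n$ one has $B_n\xi=\Phi_\xi\mathbf d$, where $\mathbf d$ is the centred path-edge indicator, and $E_{P_n}\Phi_\xi=\Phi_\xi L_n^{(2)}$.
\item The operator $(\Phi_\xi)^*\Phi_\xi$ is scalar on each of $\Specht{(n-1,1)}$ and $\Specht{(n-2,2)}$, with different scalars. So you need the Rayleigh quotient $(n-1)/3$ of $(L_n^{(2)})^+$ separately at $\mathbf d_1$ and $\mathbf d_2$, not merely at $\mathbf d$.
\item Combining this with $B_n^*B_n\leqslant A_n$ gives $B_nA_n^+B_n^*\leqslant\frac{n-1}{3}E_{P_n}$.
\end{itemize}
The factor $n(n+2)/(6c_0)$ equals $(1+\eta)/\beta$ with $\eta=(n-1)/3$, and it is generated by this cross-term bound inside the two-scale Poincar\'e inequality. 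It is not a diagonal bound $H_n\geqslant\frac{6c_0}{n(n+2)}\Pi_{\mathrm{reac}}$ followed by a separate cross-term estimate. In your bookkeeping the Poisson equation is spent on the diagonal, and the actual difficulty remains open.

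\textbf{A correction to the upper-bound construction.} Use the defect family $\sum_i f(i)\,\sigma_i\otimes I$, with the identity on the second copy. There the twirl reduces to the first-moment one, giving $\mathsf T_e^{(2)}(\sigma_i\otimes I)=\tfrac12(\sigma_i+\sigma_{i+1})\otimes I$ for $e=\{i,i+1\}$, so $H_n$ acts as $\tfrac12L_{P_n}$. The copy-paired family $\sum_i f(i)\,\sigma_i\otimes\sigma_i$ is not invariant, because the twirl produces terms $[\sigma_i,\tau_e]\otimes[\sigma_i,\tau_e]$.
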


The eigenvalues in \eqref{eq:path-laplacian-branch} give the diffusive upper
bound, while \eqref{eq:main-diffusive-bounds} supplies the matching lower
bound for $H_n$.  Uniform bond sampling contributes
the additional factor $(n-1)^{-1}$ in \eqref{eq:intro-moment-operator}.

Let $\mathsf P_n$ be the orthogonal projection onto $\Ker H_n$.
Replicas $1,2$ are the ket replicas and $3,4$ the bra replicas.  We write
$\mathcal D_n$ for the subspace fixed by all independent
simultaneous site permutations on the replica pairs $(1,3)$ and $(2,4)$, and
$\mathcal X_n$ for the corresponding fixed space associated
with $(1,4)$ and $(2,3)$.

For a graph $G$, write $H_G=\sum_{e\in E(G)}h_e$.

\begin{theorem}
\label{thm:ground-space}
Let $G$ be a connected graph on $n\geqslant5$ vertices.  Then
\begin{equation}\label{eq:intro-ground-space}
 \Ker H_G=\mathcal D_n+\mathcal X_n,
 \qquad
 \dim\Ker H_G=2\binom{n+3}{3}^2-(n+1)^4.
\end{equation}
\end{theorem}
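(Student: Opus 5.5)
The plan is to prove the two inclusions of $\Ker H_G=\mathcal D_n+\mathcal X_n$ separately and then compute the dimension by inclusion--exclusion.

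\emph{Kernel reduction.} Each $h_e=I-\mathsf T_e^{(2)}$ is positive semidefinite, and $\mathsf T_e^{(2)}$ is the orthogonal projection onto the operators commuting with $e^{-\mathrm i t\tau_e}$ in every replica pair. Hence $\Ker H_G=\bigcap_{e\in E(G)}\Ker h_e$. This is the second-moment commutant of the group generated by the eSWAPs on the edges of $G$.

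\emph{Inclusion $\supseteq$.} A vector in $\mathcal D_n$ is invariant under $\tau_e^{(1)}\otimes\overline{\tau_e}^{(3)}$ and under $\tau_e^{(2)}\otimes\overline{\tau_e}^{(4)}$ separately. Expanding $e^{-\mathrm i t\tau_e}=\alpha(t)I+\beta(t)\tau_e$ shows that such a vector is fixed by $U\otimes U\otimes\bar U\otimes\bar U$ for each eSWAP $U$. The same argument applies to $\mathcal X_n$ with the pairings $(1,4)$ and $(2,3)$, so both subspaces lie in every $\Ker h_e$.

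\emph{Inclusion $\subseteq$.} For connected $G$, the transpositions $\tau_e$ with $e\in E(G)$ generate $\Sn_n$, and since $\alpha I+\beta\tau_e$ lies in the span of $I$ and $\tau_e$, the eSWAP gates on $G$ generate the same closed group whatever the connected graph. This group is determined by the Lie algebra generated by $\{\mathrm i\tau_e\}$. That Lie algebra contains $\mathrm i\tau_{\{i,j\}}$ for every pair: conjugating by $e^{-\mathrm i\pi\tau_e/2}$, which equals $\tau_e$ up to a phase, moves $\tau_{\{j,k\}}$ to $\tau_{\{i,k\}}$. So the group is the one generated by all transposition eSWAPs, which is the complete-graph case. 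The second-moment fixed space of that group equals the Haar fixed space on $\Univ_\times$ for $n\geqslant5$ by \cite[Theorem~8]{LiEtAl}. It then remains to identify the Haar fixed space with $\mathcal D_n+\mathcal X_n$. By Schur--Weyl duality, $\Univ_\times\cong\prod_\lambda \mathrm U(\Specht{\lambda})$, with $\lambda$ ranging over the partitions of $n$ into at most two rows. The commutant of $U^{\otimes2}\otimes\bar U^{\otimes2}$ for a product of independent unitary groups is spanned by the two pairings within each pair of blocks $(\lambda,\mu)$. Summing these pairings over blocks produces exactly the $\Sn_n\times\Sn_n$-invariants for the pairings $(1,3),(2,4)$ and $(1,4),(2,3)$, which are $\mathcal D_n$ and $\mathcal X_n$. \emph{I expect this identification to be the main obstacle.} Care is needed with the diagonal blocks $\lambda=\mu$, where the two pairings coincide on one-dimensional pieces, and with the possible extra commutant when some $\Specht{\lambda}$ has small dimension. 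The hypothesis $n\geqslant5$ should rule out the degenerate cases, as it does in the cited theorem.

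\emph{Dimension.} The space $\mathcal D_n$ is $\mathrm{End}(\Symm^n$-type invariants$)$: the operators on $((\C^2)^{\otimes2})^{\otimes n}$ commuting with $\Sn_n$ acting diagonally, a space of dimension $\binom{n+3}{3}^2$. Here $\binom{n+3}{3}$ counts the orbits of $\Sn_n$ on words over a $4$-letter alphabet, and the count is justified by viewing the space as $\Sn_n$-invariants in $(\C^{16})^{\otimes n}$, i.e.\ multisets over $16$ letters indexed by ket/bra pairs. Indeed $\binom{n+15}{15}$ equals neither expression, so I will recount via the block decomposition: $\dim\mathcal D_n=\sum_{\lambda,\mu}(\dim$ of each block$)$, which gives $(\sum_\lambda m_\lambda^2)^2$ with $\sum_\lambda m_\lambda^2=\dim\mathrm{End}_{\Sn_n}((\C^2)^{\otimes n})=\binom{n+3}{3}$. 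The same count holds for $\mathcal X_n$. For the intersection, the two pairings span a single line on each block with $\lambda=\mu$ or on the corresponding isotypic pieces. Computing in the $\mathrm{SU}(2)$ picture, $\mathcal D_n\cap\mathcal X_n$ consists of the operators built from pairs of spin-sector labels, one label for each of the $(n+1)^2$ combinations of total-$J$ and $M$ data in each tensor factor. I will verify that its dimension is $(n+1)^4$ by checking that $n=5$ matches the finite-size data of \cite{LiEtAl}, and then obtain the stated formula from $\dim(\mathcal D_n+\mathcal X_n)=\dim\mathcal D_n+\dim\mathcal X_n-\dim(\mathcal D_n\cap\mathcal X_n)$.
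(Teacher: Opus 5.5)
The equality $\Ker H_G=\mathcal D_n+\mathcal X_n$ is handled essentially as in the paper. The inclusion $\supseteq$ comes from invariance of the pairing spaces, and $\subseteq$ from \cite[Theorem~8]{LiEtAl} plus a Schur--Weyl computation of the $\Univ_\times$-invariants. Your group-level reduction from $G$ to $K_n$ is a valid substitute for Proposition~\ref{prop:path-complete-ground-space-equality}: the group contains $-\mathrm i\tau_e$, hence all site permutations up to phase, hence all eSWAPs by conjugation. The step you flag as the main obstacle is routine. The central phases of $\prod_\lambda\mathrm U(\Specht{\lambda})$ force $\{\lambda_1,\lambda_2\}=\{\lambda_3,\lambda_4\}$. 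For $\mathrm U(d)$ with $d\geqslant2$, the invariants of $U^{\otimes2}\otimes\bar U^{\otimes2}$ are spanned by the two pairings, so no extra commutant arises.

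\emph{The gap is the count $\dim(\mathcal D_n\cap\mathcal X_n)=(n+1)^4$.} Your assertion that the two pairings span a single line on each block with $\lambda=\mu$ is false. On the block where all four replicas carry $\lambda$, the vectors $\Omega_\lambda^{13}\otimes\Omega_\lambda^{24}$ and $\Omega_\lambda^{14}\otimes\Omega_\lambda^{23}$ are unit vectors with inner product $1/f^\lambda$ (Lemma~\ref{lem:pairing-vector-overlap}). They are therefore linearly independent whenever $f^\lambda\geqslant2$. Taken literally, your claim would give $\dim(\mathcal D_n\cap\mathcal X_n)=\sum_\lambda(\dim\Schur{\lambda}\C^2)^4$, which at $n=5$ is $1568$ rather than $1296$. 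The spin-sector-label description is not a derivation. Matching $n=5$ against finite-size data cannot establish a formula for all $n\geqslant5$.

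The missing argument is short. Both pairing spaces are invariant under $\mathrm{GL}(W_1)\times\cdots\times\mathrm{GL}(W_4)$. Each is therefore the direct sum of its components in the isotypic blocks indexed by $(\lambda_1,\ldots,\lambda_4)$, and the intersection can be computed block by block. By \eqref{eq:direct-pairing-schur-weyl} and \eqref{eq:crossed-pairing-schur-weyl}, a block meets both spaces only if all four labels equal some $\lambda$. There the two components are the same Schur tensor product times the two lines above, and these lines coincide exactly when $f^\lambda=1$, i.e.\ $\lambda=(n)$ for $n\geqslant3$. Hence $\mathcal D_n\cap\mathcal X_n=\mathcal F_n=\bigotimes_a\Symm^n(W_a)$, of dimension $(n+1)^4$, which is what Proposition~\ref{prop:pairing-principal-angle} records.

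Alternatively, you could skip the intersection and count the $\Univ_\times$-invariants directly from your block analysis. With $m_\lambda=\dim\Schur{\lambda}\C^2$, this gives $4\sum_{\lambda<\mu}m_\lambda^2m_\mu^2+2\sum_{\lambda\neq(n)}m_\lambda^4+m_{(n)}^4=2(\sum_\lambda m_\lambda^2)^2-(n+1)^4$.

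Finally, your first description of $\mathcal D_n$, as the operators commuting with the diagonal $\Sn_n$, actually describes $\mathcal S_n$. Your recount $\dim\mathcal D_n=\binom{n+3}{3}^2$ is correct, and it is immediate from $\mathcal D_n\cong\Symm^n(\C^4)\otimes\Symm^n(\C^4)$.
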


Thus the ground space is the same for every connected interaction graph
on $n$ vertices.  Equations~\eqref{eq:direct-pairing-space} and
\eqref{eq:crossed-pairing-space} give the two fixed spaces explicitly; their
intersection is the separately symmetric space in
\eqref{eq:separately-symmetric-ground-space}.

Let $K_n$ denote the complete graph on $n$ vertices.  Its moment
Hamiltonian satisfies the following uniform bound.

\begin{theorem}
\label{thm:all-to-all}
For every $n\geqslant2$,
\begin{equation}\label{eq:complete-graph-moment-gap}
 H_{K_n}
 \geqslant c_0\left(I-\mathsf P_n\right).
\end{equation}
\end{theorem}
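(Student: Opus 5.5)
Since $H_{K_n}=\sum_{i<j}h_{\{i,j\}}$ is invariant under the diagonal action of $\Sn_n$ on sites, we will decompose the $4$-replica space $((\C^2)^{\otimes4})^{\otimes n}$ using Schur--Weyl duality for $\mathrm U(16)\times\Sn_n$. The first step is to write $h_e$ concretely. The eSWAP gate $e^{-\mathrm it\tau_e}$ has eigenvalue $1$ on the triplet and $e^{-2\mathrm it}$ on the singlet, up to phase. Averaging over a full period, the local twirl $\mathsf T_e^{(2)}$ is therefore the projection onto operators that commute with the two-qubit spectral projections in the second-moment sense. Written in replica language, $h_e$ is a polynomial in the four replica transpositions $\tau_e^{(1)},\dots,\tau_e^{(4)}$ acting on the pair $e$. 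Equivalently, it is a polynomial in the single-site-pair swap $\tau_e^{\otimes4}$ together with the partial products that distinguish ket and bra replicas.

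The core idea is to sum over all pairs. Sums over all edges of functions of transpositions become class functions of the replica-coloured symmetric group, and so act as scalars on isotypic blocks. Concretely, $\sum_{i<j}\tau^{(a)}_{\{i,j\}}\tau^{(b)}_{\{i,j\}}\cdots$ is a content-type sum. On each irreducible of the relevant product of symmetric groups $\Sn_n^{\times 4}$ restricted appropriately, it evaluates to $\binom n2\chi_\lambda(\tau)/\dim\lambda$-type quantities. Since each qubit replica only carries $\mathrm{SU}(2)$ data, only two-row Young diagrams appear. The key steps are as follows.

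\begin{enumerate}
\item Express $h_e=\sum_S \alpha_S\, \prod_{a\in S}\tau_e^{(a)}$ with explicit coefficients $\alpha_S$ coming from the angle average. These coefficients produce the $\sin(\pi\sqrt2)$-type constants once one allows for the normalisation and eigenphases of $\tau$.
\item Decompose the space by joint two-row shapes $(\lambda_1,\dots,\lambda_4)$, with each $\lambda_a=(n-k_a,k_a)$ labelling total spin per replica. Then compute the restriction of $H_{K_n}$ to each block. Where the sum over edges of products of several $\tau$'s is not a class function of a single group, use that $\sum_e\tau_e^{(1)}\tau_e^{(3)}$ is the Jucys--Murphy/Casimir of the diagonal subgroup. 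This reduces the block to a Casimir difference involving the $\mathrm{SU}(2)$ spins of pairs and of all replicas, which is an angular-momentum recoupling problem.
\item Identify the kernel blocks with $\mathcal D_n+\mathcal X_n$ (compatible with Theorem~\ref{thm:ground-space}; for $n<5$ use the kernel directly as $\mathsf P_n$). Then show that on every other block the eigenvalue is at least $c_0$.
\end{enumerate}

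The main obstacle is step~2. The block matrices are not scalars, because the products of transpositions across different replica pairs do not commute. What is needed is a uniform-in-$n$ lower bound on the minimal nonzero eigenvalue of a small matrix whose entries depend on $n$ and on the spins. I would first find the worst block numerically/symbolically. The appearance of $\sqrt2$ inside a sine suggests the extremal eigenvalue arises as the infimum over spins of an expression such as $1-\mathrm{sinc}$-like terms from the angle average, attained only as $n\to\infty$. I would therefore try to prove monotonicity in $n$ of each block's smallest nonzero eigenvalue, reducing to a limit computation. Small $n$ would be checked directly, giving the uniform constant $c_0$ for all $n\ge2$.
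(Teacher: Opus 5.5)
There is a genuine gap. Your steps~2--3 are the whole theorem, and the proposal gives no mechanism that produces a bound uniform in $n$. The structure you expect to find there is also not what is present.

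Off $\mathcal S_n$ the interchange part alone gives $E_{K_n}\geqslant\frac n2(I-\mathsf Q_n)$ (the content bound for $(n-1,1)$). So everything reduces to the reaction part on $\mathcal S_n$, a split you do not use. Expanding $q_e$ in the replica transpositions and using $S_e\xi=\xi$ for $\xi\in\mathcal S_n$, one finds on $\mathcal S_n$
\[
 R_n=\tfrac14\Bigl[\tbinom n2 I+C_{12}-C_{13}-C_{14}\Bigr],
 \qquad
 C_{ab}=\sum_{i<j}s_{\{i,j\}}^{(a)}s_{\{i,j\}}^{(b)}.
\]
Each $C_{ab}$ is an affine function of the quadratic Casimir of $\mathfrak{gl}(W_a\otimes W_b)\cong\mathfrak{gl}_4$. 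Each is diagonal in a \emph{different} Cauchy decomposition of $\Symm^n(\C^{16})$, with diagrams of up to four rows. These are not $\mathrm{SU}(2)$ spin Casimirs, the three do not commute, and there is no recoupling formula for their sum.

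Your blocks $(\lambda_1,\ldots,\lambda_4)$ have multiplicity spaces $(\Specht{\lambda_1}\otimes\cdots\otimes\Specht{\lambda_4})^{\Sn_n}$, whose dimensions are four-fold Kronecker coefficients. Comparing $\dim\Symm^n(\C^{16})=\binom{n+15}{15}$ with the $O(n^8)$ total size of the Schur factors shows that some blocks have dimension at least of order $n^7$. So you face matrices of unbounded size with no closed form, not a small matrix with $n$-dependent entries. ``Locate the worst block numerically, prove monotonicity in $n$, check small $n$'' is therefore a hope rather than an argument; the exact asymptotics of this complete-graph gap are left open in the paper.

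Your heuristic for the constant is also mistaken. Since $\Spec(Z_e)\subseteq\{0,\pm2,\pm4\}$, the full-period average is exactly $\chi_{\{0\}}(Z_e)$, and $h_e=\frac5{16}Z_e^2-\frac1{64}Z_e^4$ has rational coefficients. No $\sin(\pi\sqrt2)$ comes from the angle average, nor is $c_0$ presented as the limit of an exact eigenvalue.

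Instead, $c_0$ is the output of a lossy induction in $n$, which is the idea your proposal lacks. The deletion isometry $\iota_n$ gives $n\,\iota_n^*(I_V\otimes R_{n-1})\iota_n=(n-2)R_n$. On $\mathcal S_n\ominus\mathcal G_n$, the ground-space deletion operator $T_n$ is at most $\frac{2n(n-2)}{(n-1)(n-3)}$. This uses two facts: the single-pairing deletion spectrum ($M_n\leqslant\frac{n}{n-1}$ off $\mathcal D_n$), and the principal-angle bound $\mathsf P_{n-1}\leqslant\frac{n-2}{n-3}(\mathsf D_{n-1}+\mathsf X_{n-1})$ coming from the overlaps $1/f^\lambda$.

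Together these give $\delta_n\geqslant r_n\delta_{n-1}$ with $1-r_n=O(n^{-2})$, so the losses are summable. The base value $\delta_5\geqslant\frac14$ comes from the exact $\delta_3=\frac34$ and a Knabe-type finite-size inequality. Then $c_0=\frac14\prod_{j\geqslant6}r_j$, with $\sqrt2$ entering through $j^2-6j+7=(j-3-\sqrt2)(j-3+\sqrt2)$ and Euler's reflection formula. Combined with the interchange bound off $\mathcal S_n$, this proves the theorem without diagonalising anything beyond $n=3$.
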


For uniformly sampled path edges, Theorem~\ref{thm:main} gives a
second-moment spectral gap of order $n^{-3}$.  For brickwork circuits,
the corresponding scale is $n^{-2}$ per period.  We give the precise
operator-norm and CP-order mixing statements in
Section~\ref{sec:circuits}, together with their periodic and all-to-all
counterparts.

These results complement our work on iSWAP circuits.  For every connected
graph with at least three vertices, iSWAP maximises the gap among two-local
ensembles with Hermitian second-moment operators
\cite[Theorem~1.1]{LiangZhuISWAPMaximises}.  On $K_n$, for $n\geqslant5$,
a family of ironed two-qubit gadgets attains the same gap
\cite[Theorem~1.1]{LiangZhuISWAPMatching}.  Here the conserved
$\mathrm{SU}(2)$ action instead produces a non-trivial stationary space and
a diffusive gap.

\subsection{Proof outline}

Each local Hamiltonian term is the sum of two orthogonal projections,
corresponding to interchange and reaction.  The interchange Hamiltonian acts through site permutations on
$(\C^{16})^{\otimes n}$.  Aldous' spectral-gap theorem
\cite[Theorem~1.1]{CaputoLiggettRichthammer} gives its exact path gap
$1-\cos(\pi/n)$, which controls the complement of $\Symm^n(\C^{16})$.

It remains to bound the reaction Hamiltonian on the symmetric space.
We first work on the complete graph.  The qubit fixed-space
theorem of Li et al.\ identifies its ground space with the sum of the direct
and crossed pairing spaces.  The dimensions of the two-row Specht modules
determine their principal angles.  Deleting one tensor factor from a
symmetric tensor gives a recurrence for the gap.  The coefficients in this
recurrence have a positive infinite product, so the exact three-particle
spectrum yields the uniform bound $c_0$.  The compression of the path
reaction Hamiltonian to the symmetric space then has gap at least $2c_0/n$.  The recurrence is related to
Caputo's complete-graph method for binary-collision generators
\cite{CaputoKac} and Sasada's complete-graph-to-lattice comparisons
\cite{Sasada}.

To combine these bounds, we control the off-diagonal reaction block between
the symmetric space and its orthogonal complement.  We use the permutation
representation on unordered pairs for this purpose.  The centred indicator
of the path edges lies in the
two non-trivial irreducible summands, $\Specht{(n-1,1)}$ and
$\Specht{(n-2,2)}$.  We solve a Poisson equation for this vector under the
two-particle simple exclusion process.  The Rayleigh quotient of the
Moore--Penrose inverse, evaluated at either component, is $(n-1)/3$, whereas
the norm of that inverse has order $n^2$.  Combining these identities with the complete-graph
bound gives the required two-scale Poincar\'e inequality.

The paper is organised as follows.  In Section~\ref{sec:local-structure}, we
derive the local interchange--reaction decomposition.  In
Section~\ref{sec:transport}, we obtain eigenvalues from the graph Laplacian and prove
the interchange gap.  In Section~\ref{sec:ground-geometry}, we determine the
geometry of the two global pairing spaces.  In Section~\ref{sec:complete-gap},
we prove the uniform complete-graph reaction gap.  In
Section~\ref{sec:edge-defect}, we solve the Poisson equation on unordered pairs; in
Section~\ref{sec:slow-fast}, we use it to control the off-diagonal block.  In
Section~\ref{sec:two-scale}, we complete the lower bound.  In Section~\ref{sec:circuits}, we derive
the circuit convergence estimates.  In Section~\ref{sec:discussion}, we
discuss further spectral questions.
In Appendix~\ref{app:three-site}, we give the exact three-site calculations.

\section{Local structure of the eSWAP moment Hamiltonian}\label{sec:local-structure}

We compute the local second-moment twirl and decompose
its complementary projection into interchange and reaction projections.

All Hilbert spaces in this paper are finite-dimensional and complex, and
all operator norms are induced by the Hilbert-space norm.  We write
$\Sn_n$ for the symmetric group on $n$ letters.  If $X\geqslant0$, then $X^+$ denotes its Moore--Penrose inverse
and $X^{+1/2}=(X^+)^{1/2}$.

Let $\mathcal K_n=(\C^2)^{\otimes n}$ be the Hilbert space of
$n$ qubits.  Its second-moment space is
$\mathcal H_n=\mathcal K_n^{\otimes2}\otimes\overline{\mathcal K_n}^{\otimes2}$.
The first two factors are the ket replicas and the last two are the bra
replicas.  Every qubit transposition is real in the computational basis, so
we identify the conjugate factors with copies of $\mathcal K_n$.

For an unordered pair $e=\{i,j\}\subseteq\{1,\ldots,n\}$, let $\tau_e$ be the
operator that exchanges the $i$th and $j$th physical qubits.  Thus $\tau_e$
is self-adjoint and $\tau_e^2=I$.  Write $s_e^{(a)}$ for the operator that
acts as $\tau_e$ on replica $a$ and as the identity on the other replicas.

The local second-moment twirl associated with $e$ is the orthogonal projection
\begin{equation}\label{eq:local-second-moment-twirl}
 \mathsf T_e^{(2)}
 =\frac{1}{2\pi}\int_0^{2\pi}
 \bigl(e^{-\mathrm i t\tau_e}\bigr)^{\otimes2}
 \otimes
 \bigl(e^{\mathrm i t\tau_e}\bigr)^{\otimes2}\,dt.
\end{equation}
Let $h_e=I-\mathsf T_e^{(2)}$ be the local Hamiltonian term, and write
$Z_e
=s_e^{(1)}+s_e^{(2)}-s_e^{(3)}-s_e^{(4)}$.
The four transposition operators commute and have eigenvalues in $\{1,-1\}$,
so $\Spec(Z_e)\subseteq\{-4,-2,0,2,4\}$.

The integrand in \eqref{eq:local-second-moment-twirl} is
$e^{-\mathrm i tZ_e}$.  Integration annihilates every non-zero eigenspace
of $Z_e$ and fixes its kernel.  Thus
\begin{equation}\label{eq:twirl-as-zero-projection}
 \mathsf T_e^{(2)}=\chi_{\{0\}}(Z_e),
 \qquad
 h_e=\chi_{\R\setminus\{0\}}(Z_e).
\end{equation}
Equivalently, $h_e=\tfrac{5}{16}Z_e^2-\tfrac{1}{64}Z_e^4$, since the
polynomial $5x^2/16-x^4/64$ equals zero at $0$ and one at $\pm2,\pm4$.

To decompose $h_e$, we group the four replicas at each site.  The one-site space is
$V=(\C^2)^{\otimes4}\cong\C^{16}$, so
$\mathcal H_n\cong V^{\otimes n}$.  The operator
$S_e
=s_e^{(1)}s_e^{(2)}s_e^{(3)}s_e^{(4)}$
exchanges the two sites at the endpoints of $e$.  Define
$p_e
=\tfrac12(I-S_e)$.
This is the orthogonal projection onto the subspace that is antisymmetric
under exchange of the two sites.

For $a\in\{1,2,3,4\}$, set
$\mathsf P_{a,\pm}^{(e)}=\tfrac12(I\pm s_e^{(a)})$.
The local reaction projection is
\begin{equation}\label{eq:local-reaction-projection}
 q_e
 =\mathsf P_{1,+}^{(e)}\mathsf P_{2,+}^{(e)}
  \mathsf P_{3,-}^{(e)}\mathsf P_{4,-}^{(e)}
 +\mathsf P_{1,-}^{(e)}\mathsf P_{2,-}^{(e)}
  \mathsf P_{3,+}^{(e)}\mathsf P_{4,+}^{(e)}.
\end{equation}

\begin{proposition}\label{prop:local-interchange-reaction-decomposition}
For every edge $e$,
\begin{equation}\label{eq:local-interchange-reaction-decomposition}
 h_e
 =p_e
  +q_e,
 \qquad
 p_eq_e=0.
\end{equation}
\end{proposition}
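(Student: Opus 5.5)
The plan is to work in the joint eigenbasis of the four commuting involutions $s_e^{(1)},\dots,s_e^{(4)}$. Since they commute and square to $I$, the space $\mathcal H_n$ splits orthogonally into the ranges of the sixteen projections
\[
\mathsf P_\epsilon=\prod_{a=1}^4\mathsf P_{a,\epsilon_a}^{(e)},\qquad \epsilon\in\{+,-\}^4 .
\]
Each of $h_e$, $p_e$ and $q_e$ is a function of these involutions, so each is diagonal in this decomposition. It therefore suffices to compare the three operators sign pattern by sign pattern.

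On the range of $\mathsf P_\epsilon$ the relevant quantities are as follows.
\begin{itemize}
\item $Z_e$ acts as $\epsilon_1+\epsilon_2-\epsilon_3-\epsilon_4$.
\item $S_e$ acts as the product $\epsilon_1\epsilon_2\epsilon_3\epsilon_4$.
\item By \eqref{eq:twirl-as-zero-projection}, $h_e$ is $1$ exactly when $\epsilon_1+\epsilon_2\neq\epsilon_3+\epsilon_4$, and $0$ otherwise.
\item $p_e$ is $1$ exactly when an odd number of the $\epsilon_a$ equal $-$.
\item By definition \eqref{eq:local-reaction-projection}, $q_e$ is $1$ exactly for $\epsilon=(+,+,-,-)$ and $\epsilon=(-,-,+,+)$.
\end{itemize}

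The identity then reduces to a finite check over the sixteen patterns, sorted by the number $m$ of minus signs.

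If $m$ is odd, then $\epsilon_1+\epsilon_2-\epsilon_3-\epsilon_4$ is a sum of four $\pm1$ terms with an odd number of $-1$'s. It is therefore congruent to $2$ modulo $4$, hence nonzero, so $h_e=1=p_e$ and $q_e=0$.

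If $m$ is even, the sum is in $\{0,\pm4\}$ and $p_e=0$.
\begin{itemize}
\item For $m=0$ or $m=4$, the sum is $0$.
\item For $m=2$, the sum is $\pm4$ exactly when the minus signs sit in one replica pair, i.e.\ $\{1,2\}$ or $\{3,4\}$. These are precisely the two patterns picked out by $q_e$.
\item In the other four $m=2$ patterns (one minus sign among ket replicas and one among bra replicas), the sum is $0$.
\end{itemize}
Hence $h_e=p_e+q_e$ on every block.

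Orthogonality $p_eq_e=0$ follows at once. The support of $q_e$ consists of even-parity patterns, while $p_e$ is supported on odd-parity patterns. Equivalently, $S_e$ acts as $+1$ on the range of $q_e$, so $p_e=\tfrac12(I-S_e)$ annihilates it.

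There is no real obstacle here. The only point needing care is the justification that $h_e$, $p_e$ and $q_e$ are simultaneously diagonal in the sixteen-block decomposition, which holds because all three are polynomials in commuting involutions. After that, the parity bookkeeping is routine.
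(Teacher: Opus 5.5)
Your proposal is correct and is essentially the paper's proof: simultaneously diagonalise the four commuting involutions $s_e^{(a)}$ and compare $h_e$, $p_e$, $q_e$ on each joint eigenspace. Your sorting by the number of minus signs is the same case split as the paper's split by the sign of $x_1x_2x_3x_4$.
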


\begin{proof}
Simultaneously diagonalise $s_e^{(1)},\ldots,s_e^{(4)}$, and let
$x_1,\ldots,x_4\in\{1,-1\}$ be their joint eigenvalues.  If
$x_1x_2x_3x_4=-1$, then
$x_1+x_2-x_3-x_4\in\{2,-2\}$.
Hence $h_e=1$, $p_e=1$, and $q_e=0$ on that joint eigenspace.

Suppose that $x_1x_2x_3x_4=1$.  Then
$p_e=0$, and the
difference $x_1+x_2-x_3-x_4$ is non-zero precisely for
$(x_1,x_2,x_3,x_4)=(1,1,-1,-1)$ or $(-1,-1,1,1)$.
The range of the projection in \eqref{eq:local-reaction-projection} is the
sum of these two joint eigenspaces.  Hence
\eqref{eq:local-interchange-reaction-decomposition} holds on every joint
eigenspace.  The two summands have disjoint supports and are therefore
orthogonal.
\end{proof}

For a graph $G$ on $\{1,\ldots,n\}$, let
$E_G=\sum_{e\in E(G)}p_e$ and $R_G=\sum_{e\in E(G)}q_e$.
Proposition~\ref{prop:local-interchange-reaction-decomposition} gives
\begin{equation}\label{eq:graph-interchange-reaction-decomposition}
 H_G
 =E_G+R_G.
\end{equation}
Let $P_n$ denote the path with edges $\{i,i+1\}$,
$1\leqslant i\leqslant n-1$, and let $K_n$ denote the complete graph.  The
Hamiltonian studied in this paper is $H_n=H_{P_n}$.
Every local summand is an orthogonal projection, and $H_n$ is frustration-free.

The local kernel has an equivalent pairing description.  Define
\[
 \mathsf D_e
 =\tfrac14(I+s_e^{(1)}s_e^{(3)})(I+s_e^{(2)}s_e^{(4)}),
 \qquad
 \mathsf X_e
 =\tfrac14(I+s_e^{(1)}s_e^{(4)})(I+s_e^{(2)}s_e^{(3)}).
\]
These projections correspond to the direct pairing $(1,3)(2,4)$ and the crossed pairing
$(1,4)(2,3)$.  Simultaneous diagonalisation gives
$\mathsf T_e^{(2)}=\mathsf D_e+\mathsf X_e-\mathsf D_e\mathsf X_e$ and
$h_e=(I-\mathsf D_e)(I-\mathsf X_e)$.  In Section~\ref{sec:ground-geometry},
we use the global versions of these pairing spaces to determine the ground
space of $H_n$.

\section{Laplacian eigenvalues and the interchange gap}\label{sec:transport}

We obtain the upper bound by realising one half of the graph Laplacian on
a subspace of the moment space.  We then use Aldous' theorem to determine
the gap of the interchange Hamiltonian.

For a partition $\lambda$ of $n$, let $\Specht{\lambda}$ be the
corresponding Specht module and let $\Schur{\lambda}W$ be the Schur module
of a complex vector space $W$.  Write $\ell(\lambda)$ for the number of
non-zero parts of $\lambda$.  Schur--Weyl duality
\cite[Section~6.1]{FultonHarris} gives
\begin{equation}\label{eq:qubit-schur-weyl}
 (\C^2)^{\otimes n}
 \cong
 \bigoplus_{\substack{\lambda\vdash n\\\ell(\lambda)\leqslant2}}
 \Schur{\lambda}\C^2\otimes\Specht{\lambda}.
\end{equation}
Both $\Specht{(n)}$ and $\Specht{(n-1,1)}$ occur for every $n\geqslant2$.
We realise the standard representation on the zero-sum subspace
$\mathcal V_n=\mathbf 1^\perp\subseteq\C^n$ and write $\rho$ for its
$\Sn_n$-action.

Consider a tensor product with one standard Specht module and three
trivial Specht modules.  For definiteness, take
$\Specht{(n)}\otimes\Specht{(n)}\otimes\Specht{(n-1,1)}\otimes\Specht{(n)}$.
The transpositions in the first, second, and fourth replicas act trivially
on this tensor product, while those in the third act through the standard
representation.  Hence
$Z_e=I-\rho(\transp{i}{j})$ for every edge
$e=\{i,j\}$.
Since $\rho(\transp{i}{j})$ is an involution, \eqref{eq:twirl-as-zero-projection} gives
\begin{equation}\label{eq:local-moment-standard-sector}
 h_e
 =\frac12\bigl(I-\rho(\transp{i}{j})\bigr)
\end{equation}
on this tensor product.

For a graph $G$, let $L_G$ be its combinatorial Laplacian, with
$(L_G\mathbf x)_i=\sum_{j:\{i,j\}\in E(G)}(x_i-x_j)$.

\begin{proposition}\label{prop:diffusive-upper-bound}
Let $G$ be a connected graph on $n\geqslant2$ vertices.  Then
\begin{equation}\label{eq:general-laplacian-branch}
 \Spec\!\left(\left.\frac12L_G\right|_{\mathbf 1^\perp}\right)
 \subseteq\Spec\bigl(H_G\bigr).
\end{equation}
More precisely, on every copy of each of the four tensor products obtained by
placing $\Specht{(n-1,1)}$ in one replica and $\Specht{(n)}$ in the other
three, $H_G$ is unitarily equivalent to
$\tfrac12L_G|_{\mathbf 1^\perp}$.  Thus each eigenvalue of
$\tfrac12L_G|_{\mathbf 1^\perp}$ of multiplicity $m$ occurs in these four
types of summands with total multiplicity
$4(n-1)(n+1)^3m$.  Consequently,
$\Delta(H_G)\leqslant\lambda_2(L_G)/2$.
For the path, this gives \eqref{eq:path-laplacian-branch}, with eigenvectors
in the standard factor having coordinates
\begin{equation}\label{eq:path-laplacian-eigenvectors}
 x_j^{(k)}=\sqrt{\frac2n}
 \cos\!\left(\frac{k\pi}{n}\left(j-\frac12\right)\right),
 \qquad 1\leqslant j\leqslant n,\quad 1\leqslant k\leqslant n-1.
\end{equation}
In particular, $\Delta(H_n)\leqslant1-\cos(\pi/n)$.
\end{proposition}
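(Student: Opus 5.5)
The plan is to use Schur--Weyl duality \eqref{eq:qubit-schur-weyl} in each replica and so reduce $H_G$ to a direct sum of operators acting on tensor products of Specht modules. Since $\mathcal H_n=\mathcal K_n^{\otimes4}$ and $\Sn_n$ acts separately in each replica, we obtain
\[
 \mathcal H_n\cong\bigoplus_{\lambda^{(1)},\ldots,\lambda^{(4)}}
 \Bigl(\bigotimes_{a=1}^4\Schur{\lambda^{(a)}}\C^2\Bigr)\otimes
 \Bigl(\bigotimes_{a=1}^4\Specht{\lambda^{(a)}}\Bigr).
\]
Each $s_e^{(a)}$ acts only on the $a$th Specht factor, so each $h_e$, and hence $H_G$, preserves every summand. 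On a summand, $H_G$ acts as the identity on the Schur multiplicity space tensored with an operator on $\bigotimes_a\Specht{\lambda^{(a)}}$.

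Next we take a summand with $\lambda^{(a)}=(n-1,1)$ in one replica $a$ and $(n)$ in the other three. On such a summand \eqref{eq:local-moment-standard-sector} gives $h_e=\tfrac12(I-\rho(\transp ij))$. For replica $a\in\{1,2\}$ we have $Z_e=\rho(\transp ij)-I$ instead, but $Z_e^2$ has the same form and $h_e=\chi_{\R\setminus\{0\}}(Z_e)$, so the same conclusion holds. The spectrum of $Z_e$ lies in $\{0,\pm2\}$, and $h_e$ is the projection onto the eigenvalue $\rho(\transp ij)=-1$, which is $\tfrac12(I-\rho(\transp ij))$.

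Identify $\Specht{(n-1,1)}$ with $\mathcal V_n=\mathbf 1^\perp$, on which $\rho$ permutes coordinates. A direct check gives
\[
 \tfrac12\bigl(I-\rho(\transp ij)\bigr)\mathbf x=\tfrac12(x_i-x_j)(\mathbf e_i-\mathbf e_j).
\]
Summing over the edges of $G$ gives $\tfrac12L_G|_{\mathbf 1^\perp}$, since $L_G=\sum_{\{i,j\}\in E(G)}(\mathbf e_i-\mathbf e_j)(\mathbf e_i-\mathbf e_j)^{\mathsf T}$ preserves $\mathbf 1^\perp$. This gives the unitary equivalence on each copy of the summand.

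For the multiplicity count, the multiplicity of $\Specht{(n)}$ in $\mathcal K_n$ is $\dim\Schur{(n)}\C^2=n+1$, and that of $\Specht{(n-1,1)}$ is $\dim\Schur{(n-1,1)}\C^2=n-1$. Each placement therefore has $(n-1)(n+1)^3$ copies, and the four placements of the standard factor give total multiplicity $4(n-1)(n+1)^3m$. The inclusion \eqref{eq:general-laplacian-branch} follows.

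Because $G$ is connected, $\lambda_2(L_G)>0$ is the smallest eigenvalue of $L_G$ on $\mathbf 1^\perp$, so $\lambda_2(L_G)/2$ is a positive eigenvalue of $H_G$ and $\Delta(H_G)\leqslant\lambda_2(L_G)/2$. For the path, the Neumann cosine vectors \eqref{eq:path-laplacian-eigenvectors} are orthogonal to $\mathbf 1$ for $1\leqslant k\leqslant n-1$. They satisfy $L_{P_n}\mathbf x^{(k)}=2(1-\cos(k\pi/n))\mathbf x^{(k)}$; the interior rows follow from the cosine addition formula, and the endpoint rows hold because the half-shift makes $x_0=x_1$ and $x_{n+1}=x_n$. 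This gives \eqref{eq:path-laplacian-branch}, and taking $k=1$ gives the bound $1-\cos(\pi/n)$.

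The main point of care is the sign bookkeeping for $Z_e$ and checking that $h_e$ depends only on the involution $\rho(\transp ij)$ in every one of the four placements. The rest is routine.
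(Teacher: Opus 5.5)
Your proposal is correct and follows essentially the same route as the paper. You restrict to the summands with $\Specht{(n-1,1)}$ in one replica and $\Specht{(n)}$ in the other three, note that $Z_e=\pm(I-\rho(\transp ij))$ so that $h_e=\tfrac12(I-\rho(\transp ij))$, and sum over edges to obtain $\tfrac12L_G|_{\mathbf 1^\perp}$. You then count multiplicities from the Schur dimensions $n-1$ and $n+1$ and use the discrete-cosine eigenvectors for the path. Your operator identity $\tfrac12(I-\rho(\transp ij))=\tfrac12(\mathbf e_i-\mathbf e_j)(\mathbf e_i-\mathbf e_j)^{\mathsf T}$ and your explicit endpoint check for the cosine vectors only fill in details that the paper states more briefly.
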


\begin{proof}
For $\mathbf x\in\mathcal V_n$, equation
\eqref{eq:local-moment-standard-sector} gives
\[
 \ip{\mathbf x}{H_G\mathbf x}
 =\frac12\sum_{\{i,j\}\in E(G)}|x_i-x_j|^2
 =\frac12\ip{\mathbf x}{L_G\mathbf x}.
\]
Thus the restriction of $H_G$ to this tensor product is $\tfrac12L_G$ on $\mathbf 1^\perp$, which proves
\eqref{eq:general-laplacian-branch}.  Placing the standard factor in another
replica changes $Z_e$ at most by an overall sign, which leaves
its non-zero spectral projection unchanged; hence the same restriction occurs
in all four blocks.  Since
$\dim\Schur{(n)}\C^2=n+1$ and
$\dim\Schur{(n-1,1)}\C^2=n-1$, the multiplicity assertion follows.  For the
path, $L_{P_n}$ has eigenvalues
$2(1-\cos(k\pi/n))$, where $k=0,1,\ldots,n-1$.
The $k=0$ eigenvector is constant and is absent from
$\mathcal V_n$.  The discrete-cosine eigenvectors for the
remaining eigenvalues are given by \eqref{eq:path-laplacian-eigenvectors}.
This proves \eqref{eq:path-laplacian-branch}, whose smallest eigenvalue gives
the asserted gap bound.
\end{proof}

Since $1-\cos t\leqslant t^2/2$, this also gives
$\Delta(H_n)\leqslant\pi^2/(2n^2)$.

We next determine the gap of the interchange Hamiltonian.  Its kernel is the symmetric subspace
$\mathcal S_n=\Symm^n(V)\subseteq V^{\otimes n}$.
Its orthogonal projection is
$\mathsf Q_n=\frac{1}{n!}\sum_{\sigma\in\Sn_n}U_\sigma$,
where $U_\sigma$ permutes the $n$ sites.

\begin{proposition}\label{prop:interchange-gap}
For every $n\geqslant2$,
\begin{equation}\label{eq:interchange-exact-coercivity}
 E_{P_n}
 \geqslant
 \left(1-\cos\frac{\pi}{n}\right)
 \left(I-\mathsf Q_n\right).
\end{equation}
Moreover, $\Delta(E_{P_n})=1-\cos(\pi/n)$.  In particular,
$E_{P_n}\geqslant(2/n^2)(I-\mathsf Q_n)$.
\end{proposition}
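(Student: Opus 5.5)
The plan is to recognise $E_{P_n}$ as the generator of the interchange process on the path, acting on $V^{\otimes n}$ through the site-permutation representation $\sigma\mapsto U_\sigma$. Then Aldous' spectral-gap theorem, in the form proved by Caputo, Liggett and Richthammer \cite[Theorem~1.1]{CaputoLiggettRichthammer}, lets us reduce the spectral gap on every isotypic component to that of the random walk. Concretely, $p_e=\tfrac12(I-S_e)$ with $S_e=U_{\transp{i}{j}}$. Hence $E_{P_n}=\tfrac12\sum_{i=1}^{n-1}\bigl(I-U_{\transp{i}{i+1}}\bigr)$, which is the image under $U$ of the group-algebra element $\mathcal L=\tfrac12\sum_{e}(1-\transp{i}{j})$, the generator of the interchange process with rates $1/2$ on the edges of $P_n$.

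First, decompose $V^{\otimes n}$ under $\Sn_n$ as $\bigoplus_\lambda \Specht{\lambda}\otimes M_\lambda$, with multiplicity spaces $M_\lambda$. On each isotypic block, $E_{P_n}$ acts as $\rho_\lambda(\mathcal L)\otimes I$. The trivial block $\lambda=(n)$ is exactly $\mathcal S_n=\Ran\mathsf Q_n$, where $E_{P_n}=0$. On every non-trivial block that actually occurs, Aldous' theorem states that the smallest eigenvalue of $\rho_\lambda(\mathcal L)$, over all $\lambda\neq(n)$, is attained at $\lambda=(n-1,1)$. This is the gap of the one-particle random walk, namely $\tfrac12\lambda_2(L_{P_n})=1-\cos(\pi/n)$, using the path Laplacian spectrum $2(1-\cos(k\pi/n))$ quoted in Proposition~\ref{prop:diffusive-upper-bound}. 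This gives $\rho_\lambda(\mathcal L)\geqslant(1-\cos(\pi/n))I$ for all $\lambda\neq(n)$. Summing over blocks yields \eqref{eq:interchange-exact-coercivity}.

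For equality of the gap, I must check that $\Specht{(n-1,1)}$ occurs in $V^{\otimes n}$. Since $\dim V=16\geqslant2$, Schur--Weyl duality, as in \eqref{eq:qubit-schur-weyl} applied to $V$, shows that every $\lambda$ with at most $16$ rows occurs. In particular $(n-1,1)$ occurs for $n\geqslant2$. Alternatively, one can exhibit it directly: take $v_0\neq v_1$ orthonormal in $V$ and form $\sum_j x_j\, v_0^{\otimes(j-1)}\otimes v_1\otimes v_0^{\otimes(n-j)}$ with $\mathbf x\in\mathcal V_n$. On this space $E_{P_n}$ acts as $\tfrac12L_{P_n}$, by the same computation as in Proposition~\ref{prop:diffusive-upper-bound}. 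Therefore $1-\cos(\pi/n)\in\Spec(E_{P_n})$, and hence $\Delta(E_{P_n})=1-\cos(\pi/n)$.

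Finally, the bound $1-\cos(\pi/n)\geqslant 2/n^2$ follows from $1-\cos t\geqslant t^2/2-t^4/24$ with $t=\pi/n$. We have $t^2/2-t^4/24\geqslant 2/n^2$ iff $\pi^2/2-\pi^4/(24n^2)\geqslant2$, which holds for $n\geqslant2$: at $n=2$ it reads $4.93-1.01\geqslant2$. The only substantive input is Aldous' theorem. The main point requiring care is that it applies to arbitrary representations, not just the regular one. Every irreducible representation of $\Sn_n$ embeds in the regular representation, whose spectrum under $\mathcal L$ is that of the interchange process. So the block-by-block bound is legitimate, and the rest is bookkeeping.
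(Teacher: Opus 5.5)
Your proposal is correct and follows essentially the same route as the paper. Both apply Aldous' theorem, via Caputo--Liggett--Richthammer, block by block through the embedding of each irreducible representation in the regular representation. Both show the bound is attained on the span of the one-defect vectors $\mathbf a^{\otimes(i-1)}\otimes\mathbf b\otimes\mathbf a^{\otimes(n-i)}$. The only cosmetic difference is the last step: you derive $1-\cos(\pi/n)\geqslant 2/n^2$ from the quartic Taylor bound, whereas the paper writes $1-\cos(\pi/n)=2\sin^2(\pi/(2n))$ and uses $\sin t\geqslant 2t/\pi$.
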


\begin{proof}
The operator $2E_{P_n}=\sum_{i=1}^{n-1}(I-S_{\{i,i+1\}})$ is the positive generator of the interchange process on the path, acting in
the permutation representation on $V^{\otimes n}$.  Aldous' spectral-gap
theorem states that the gap of the interchange process on any weighted graph
equals that of the corresponding one-particle random walk
\cite[Theorem~1.1]{CaputoLiggettRichthammer}.  Every irreducible representation
of $\Sn_n$ occurs in the regular representation, so this theorem bounds the
positive spectrum in each non-trivial irreducible representation from below
by the one-particle gap.  It therefore holds on $V^{\otimes n}$.  Since the path Laplacian has gap
$2(1-\cos(\pi/n))$, we obtain \eqref{eq:interchange-exact-coercivity}.
The kernel is the common fixed space of the adjacent simultaneous exchanges in all four replicas,
namely $\mathcal S_n$.

To see that the bound is attained, choose orthonormal vectors
$\mathbf a,\mathbf b\in V$ and let
$\xi_i=\mathbf a^{\otimes(i-1)}\otimes\mathbf b\otimes\mathbf a^{\otimes(n-i)}$.
The span of $\xi_1,\ldots,\xi_n$ is the natural permutation module,
whose zero-sum subspace is $\Specht{(n-1,1)}$.  Hence the path gap is attained.

Finally, $1-\cos(\pi/n)=2\sin^2(\pi/(2n))\geqslant2/n^2$,
where we used $\sin t\geqslant2t/\pi$ for $0\leqslant t\leqslant\pi/2$.  This gives the final inequality.
\end{proof}

We now identify the common ground space of connected graphs.  The operator
$R_{K_n}$ commutes with all site permutations and hence preserves
$\mathcal S_n$.  Let $R_n$ denote its restriction to this subspace.
Write $\mathcal G_n=\Ker H_n$, so that $\mathsf P_n$ is the
orthogonal projection onto $\mathcal G_n$.

\begin{proposition}\label{prop:path-complete-ground-space-equality}
Let $G$ be a connected graph on $n\geqslant2$ vertices.  Then
\begin{equation}\label{eq:path-complete-ground-space-equality}
 \Ker H_G=\mathcal G_n=\Ker R_n
 \subseteq\mathcal S_n.
\end{equation}
\end{proposition}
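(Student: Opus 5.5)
We will use only three earlier facts: the frustration-free decomposition $H_G=E_G+R_G$ into positive semidefinite parts, the identification $\Ker E_G=\mathcal S_n$ for connected $G$, and the fact that $q_e$ commutes with the site permutations in the appropriate covariant sense.

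\emph{Step 1: reduce to the symmetric space.} Every local term $h_e=p_e+q_e$ is an orthogonal projection, and $H_G$ is a sum of positive semidefinite operators. Hence
\[
 \Ker H_G=\bigcap_{e\in E(G)}\Ker h_e=\Ker E_G\cap\Ker R_G .
\]
A vector lies in $\Ker E_G$ iff $S_e\psi=\psi$ for every edge $e$ of $G$. Transpositions along the edges of a connected graph generate $\Sn_n$, so $\Ker E_G=\mathcal S_n$; for the path this is the kernel statement of Proposition~\ref{prop:interchange-gap}. Therefore $\Ker H_G=\mathcal S_n\cap\Ker R_G$.

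\emph{Step 2: on $\mathcal S_n$, one edge suffices.} For any permutation $\sigma$ we have $U_\sigma q_e U_\sigma^{*}=q_{\sigma(e)}$, since $U_\sigma s_e^{(a)}U_\sigma^{*}=s_{\sigma(e)}^{(a)}$ replica by replica. Take $\psi\in\mathcal S_n$, so that $U_\sigma\psi=\psi$ for all $\sigma$. Then
\[
 \ip{\psi}{q_{\sigma(e)}\psi}=\ip{U_\sigma^*\psi}{q_eU_\sigma^*\psi}=\ip{\psi}{q_e\psi}.
\]
So on $\mathcal S_n$ the quantity $\ip{\psi}{q_e\psi}$ is the same for every pair $e$. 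It follows that for any graph $G$ with at least one edge,
\[
 \ip{\psi}{R_G\psi}=|E(G)|\,\ip{\psi}{q_{e_0}\psi},
\]
and $\ip{\psi}{R_G\psi}=0$ iff $q_{e_0}\psi=0$. This holds in particular for $G=K_n$, where it gives $\ip{\psi}{R_n\psi}=0$. Since the $q_e$ are projections, a vanishing quadratic form forces the vector to lie in the kernel. Hence
\[
 \mathcal S_n\cap\Ker R_G=\mathcal S_n\cap\Ker R_{K_n}=\Ker R_n,
\]
because $R_n$ is by definition the restriction of $R_{K_n}$ to $\mathcal S_n$.

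\emph{Step 3: conclude.} Combining the two steps gives $\Ker H_G=\Ker R_n\subseteq\mathcal S_n$ for every connected $G$, and the right-hand side does not depend on $G$. Taking $G=P_n$ identifies this common space with $\mathcal G_n=\Ker H_n$, which proves \eqref{eq:path-complete-ground-space-equality}.

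The argument is elementary. The step that needs the most care is the covariance identity $U_\sigma q_eU_\sigma^*=q_{\sigma(e)}$, which holds because $U_\sigma$ permutes sites simultaneously in all replicas. A second point to check is that the edge transpositions of a connected graph generate $\Sn_n$: any transposition is obtained by conjugating edge transpositions along a path joining its two vertices.
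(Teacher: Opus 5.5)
Your proof is correct and follows essentially the same route as the paper. Positivity gives $\Ker H_G=\Ker E_G\cap\Ker R_G$, the edge transpositions generating $\Sn_n$ give $\Ker E_G=\mathcal S_n$, and the covariance $U_\sigma q_eU_\sigma^*=q_{\sigma(e)}$ transports the vanishing of one reaction projection to every pair; the only cosmetic difference is that you phrase this transport through the quadratic form $\ip{\psi}{q_e\psi}$, whereas the paper applies $U_\sigma q_{e_0}U_\sigma^{-1}$ directly to $\psi$.
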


\begin{proof}
Let $\psi\in\Ker H_G$.  Positivity of the local summands
implies that every interchange and reaction projection on an edge of $G$
annihilates $\psi$.  The edge transpositions of a connected graph generate
$\Sn_n$, so $\psi\in\mathcal S_n$.
Choose an edge $e_0$ of $G$.  For any unordered pair $e$, take a site
permutation $\sigma$ sending $e_0$ to $e$.  Then
$q_e\psi=U_\sigma q_{e_0}U_\sigma^{-1}\psi=0$, so $\psi\in\Ker R_n$.  Conversely, a vector in $\Ker R_n$ is
completely symmetric and, by positivity, is annihilated by every reaction
projection.  It therefore belongs to $\Ker H_G$.
Taking $G=P_n$ identifies this common kernel with $\mathcal G_n$.
\end{proof}

\section{The ground space and the two pairings}\label{sec:ground-geometry}

We now describe the ground space as the sum of two global pairing spaces
and determine their intersection and principal angles.

Let $W_1,W_2,W_3,W_4\cong\C^2$ be the four one-qubit replica spaces, so that
$V=W_1\otimes W_2\otimes W_3\otimes W_4$.  Write $\rho_a(\sigma)$ for the
action of $\sigma\in\Sn_n$ on the $n$ tensor factors of replica $a$.

Define the direct and crossed pairing spaces by
\begin{align}
 \mathcal D_n
 &=\left\{
  \xi:
  \rho_1(\sigma)\rho_3(\sigma)\xi=\xi,
  \quad
  \rho_2(\tau)\rho_4(\tau)\xi=\xi
  \text{ for all }\sigma,\tau\in\Sn_n
 \right\},\label{eq:direct-pairing-space}\\
 \mathcal X_n
 &=\left\{
  \xi:
  \rho_1(\sigma)\rho_4(\sigma)\xi=\xi,
  \quad
  \rho_2(\tau)\rho_3(\tau)\xi=\xi
  \text{ for all }\sigma,\tau\in\Sn_n
 \right\}.\label{eq:crossed-pairing-space}
\end{align}
Both spaces lie in $\mathcal S_n$: invariance under the two
independent paired actions implies invariance under simultaneous permutation
of all four replicas.  Regrouping the paired replica factors gives the
canonical identifications
\begin{align}
 \mathcal D_n
 &\cong
 \Symm^n(W_1\otimes W_3)
 \otimes
 \Symm^n(W_2\otimes W_4),
 \label{eq:direct-pairing-identification}\\
 \mathcal X_n
 &\cong
 \Symm^n(W_1\otimes W_4)
 \otimes
 \Symm^n(W_2\otimes W_3).
 \label{eq:crossed-pairing-identification}
\end{align}
Let $\mathsf D_n$ and $\mathsf X_n$ denote the corresponding orthogonal projections.

\begin{proposition}\label{prop:complete-graph-ground-space}
For $n\geqslant5$,
\begin{equation}\label{eq:complete-graph-ground-space}
 \mathcal G_n
 =\mathcal D_n+\mathcal X_n.
\end{equation}
\end{proposition}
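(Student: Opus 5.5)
The inclusion $\mathcal D_n+\mathcal X_n\subseteq\mathcal G_n$ is the easy half, and I will prove it first. By Proposition~\ref{prop:path-complete-ground-space-equality}, $\mathcal G_n=\Ker H_G$ for any connected $G$, so it suffices to show that every $\xi\in\mathcal D_n$ is annihilated by each $h_e$. A vector in $\mathcal D_n$ is fixed by $\rho_1(\tau_e)\rho_3(\tau_e)=s_e^{(1)}s_e^{(3)}$ and by $s_e^{(2)}s_e^{(4)}$. Hence $\mathsf D_e\xi=\xi$, and the local identity $h_e=(I-\mathsf D_e)(I-\mathsf X_e)$ from Section~\ref{sec:local-structure} gives $h_e\xi=0$, because the two factors commute. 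The same argument with $\mathsf X_e$ handles $\mathcal X_n$. This inclusion holds for every $n$.

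For the reverse inclusion I will use the qubit fixed-space theorem of Li et al.\ \cite[Theorem~8 and Eqs.~(C38)--(C40)]{LiEtAl}, which is where $n\geqslant5$ enters. Their result says that for $n\geqslant5$ the second-moment commutant of the group generated by eSWAP gates equals the Haar second-moment fixed space of $\Univ_\times$. First I identify $\mathcal G_n=\Ker H_n$ with the second-moment fixed space of the eSWAP group. Since each $\mathsf T_e^{(2)}$ is an orthogonal projection, $\Ker h_e$ is exactly the set of vectors fixed by $(e^{-\mathrm it\tau_e})^{\otimes2}\otimes(e^{\mathrm it\tau_e})^{\otimes2}$ for all $t$. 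A vector fixed by all generators is fixed by the generated (closed) group, and conversely. Next I describe the Haar fixed space. By Schur--Weyl duality \eqref{eq:qubit-schur-weyl}, $\Univ_\times$ contains $\prod_\lambda \mathrm U(\Specht{\lambda})$, acting as $I\otimes u_\lambda$ on each isotypic block. Its second-moment commutant is spanned by operators of the form $\sum$ over pairings of $(\lambda,\mu)$ blocks, with identity or swap on $\Specht{\lambda}\otimes\Specht{\mu}$ tensored with arbitrary operators on the $\Schur{\lambda}\C^2$ factors. I then need to recognise that the direct-pairing terms (identity on each $\Specht{\lambda}\otimes\Specht{\mu}$) span $\mathcal D_n$ and the crossed-pairing terms (swap) span $\mathcal X_n$. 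Vectorising $A\otimes B$, with $A,B$ in the commutant of $\Sn_n$, gives a vector fixed by $\rho_1(\sigma)\rho_3(\sigma)$ and $\rho_2(\tau)\rho_4(\tau)$; the replica labels match the paper's convention that $1,2$ are kets and $3,4$ are bras. Conversely, $\mathcal D_n$ is the vectorisation of $\mathrm{Comm}(\Sn_n)^{\otimes 2}$, and $\mathrm{Comm}(\Sn_n)=\bigoplus_\lambda \mathrm{End}(\Schur{\lambda}\C^2)\otimes I$. The swapped terms correspond in the same way to $\mathcal X_n$, using the $(1,4)(2,3)$ pairing.

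The main obstacle is matching conventions. The commutant statement in \cite{LiEtAl} is phrased for operators on $\mathcal K_n^{\otimes2}$, while $\mathcal G_n$ lives in $\mathcal K_n^{\otimes2}\otimes\overline{\mathcal K_n}^{\otimes2}$, so I have to check that the vectorisation map sends their Eqs.~(C38)--(C40) exactly onto $\mathcal D_n+\mathcal X_n$. In particular, I will check that the symmetrised swap combinations are covered by $\mathcal X_n$, and that the restriction to $\mathrm{SU}(2)$-symmetric rather than $\mathrm U(2)$-symmetric unitaries introduces no extra relative phases between blocks. Relative phases are harmless at second moment only after one checks that $\Univ_\times$ already contains all block phases; it does, since $\Univ_\times$ is the full commutant group. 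If the citation were unavailable in this form, I would instead argue directly: decompose $\mathcal S_n$ into $\mathrm{SU}(2)$-isotypic components and show that the reaction kernel forces $\lambda$-block structure. I expect this to be considerably harder, which is why I rely on \cite{LiEtAl}.
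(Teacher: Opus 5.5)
Your proposal is correct and follows the paper's own route. The easy inclusion is local: you use $h_e=(I-\mathsf D_e)(I-\mathsf X_e)$, where the paper checks the sign patterns supporting $q_e$ on the symmetric space, and the two arguments are equivalent. The reverse inclusion matches the paper exactly: identify $\mathcal G_n$ with the eSWAP second-moment fixed space, invoke \cite[Theorem~8]{LiEtAl} for $n\geqslant5$, and read off the $\Univ_\times$-invariants as the direct and crossed contractions. Your vectorisation check is more explicit than the paper's, and for the direction you need it is enough that $\Univ_\times$ contains $\prod_\lambda\mathrm U(\Specht{\lambda})$, so no phase bookkeeping is required.
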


\begin{proof}
We first show that both pairing spaces lie in $\mathcal G_n$.  If
$\xi\in\mathcal D_n$, then $s_e^{(1)}s_e^{(3)}\xi=\xi$ and
$s_e^{(2)}s_e^{(4)}\xi=\xi$ for every pair $e$.  Neither of the joint sign patterns supporting $q_e$
in \eqref{eq:local-reaction-projection} satisfies these identities.  Hence
$q_e\xi=0$, and
Proposition~\ref{prop:path-complete-ground-space-equality} gives
$\xi\in\mathcal G_n$.  The same argument applies to the crossed pairing
space, with the third and fourth replicas interchanged.

For the reverse inclusion, $S_e=I$ on $\mathcal S_n$, so
Proposition~\ref{prop:local-interchange-reaction-decomposition} gives
$h_e\xi=q_e\xi$ for $\xi\in\mathcal S_n$.  Consequently, $\mathcal G_n$ is the common fixed space of the second-moment twirls associated with all eSWAP one-parameter subgroups.

Let $\mathcal P_2(n)$ be the set of partitions of $n$ with at most two rows.
In the Schur--Weyl decomposition \eqref{eq:qubit-schur-weyl}, the group
$\Univ_\times$ consists of the operators
\[
 \bigoplus_{\lambda\in\mathcal P_2(n)}
 I_{\Schur{\lambda}\C^2}\otimes U_\lambda,
 \qquad U_\lambda\in\mathrm U(\Specht{\lambda}).
\]
By the results of Li et al., for $n\geqslant5$ the compact group generated by
qubit eSWAP gates has the same second-moment fixed space as $\Univ_\times$
\cite[Theorem~8 and Eqs.~(C38)--(C40)]{LiEtAl}.  It remains to compute the
$\Univ_\times$ invariants.

The group $\Univ_\times$ acts as the product of the full unitary groups on
the Specht factors and trivially on the Schur factors.  Thus a four-replica
invariant can occur only when the partitions indexing the two ket factors
coincide, up to order, with those indexing the two bra factors.  Schur--Weyl duality on the two ket and two bra
factors shows that the invariant tensors are spanned by the direct
contraction $(1,3)(2,4)$ and the crossed contraction $(1,4)(2,3)$.  When a
Specht factor is one-dimensional, the two contractions on that factor coincide; this is
accounted for by the intersection of their spans.  These spans are precisely
\eqref{eq:direct-pairing-space} and \eqref{eq:crossed-pairing-space}, proving
\eqref{eq:complete-graph-ground-space}.
\end{proof}

To compare the two pairing spaces, we compute their principal angles.
The principal-angle cosines of two subspaces are the singular values of
the orthogonal projection from one subspace to the other; see
\cite[Theorem~2]{Halmos}.  For $\lambda\vdash n$, write
$f^\lambda=\dim\Specht{\lambda}$.
Realise $\Specht{\lambda}$ by real orthogonal matrices and choose an
orthonormal basis $\mathbf e_1,\ldots,\mathbf e_{f^\lambda}$ in its real form.
Define the normalised invariant vector
\[
 \Omega_\lambda^{ab}
 =\frac{1}{\sqrt{f^\lambda}}
   \sum_{j=1}^{f^\lambda}
   \mathbf e_j^{(a)}\otimes\mathbf e_j^{(b)}.
\]

The Schur--Weyl decompositions of the two pairing spaces are
\begin{align}
 \mathcal D_n
 &=\bigoplus_{\lambda,\mu\in\mathcal P_2(n)}
 \left(
  \Schur{\lambda}W_1\otimes\Schur{\mu}W_2
  \otimes\Schur{\lambda}W_3\otimes\Schur{\mu}W_4
 \right)\otimes
 \C\bigl(\Omega_\lambda^{13}\otimes\Omega_\mu^{24}\bigr),
 \label{eq:direct-pairing-schur-weyl}\\
 \mathcal X_n
 &=\bigoplus_{\lambda,\mu\in\mathcal P_2(n)}
 \left(
  \Schur{\lambda}W_1\otimes\Schur{\mu}W_2
  \otimes\Schur{\mu}W_3\otimes\Schur{\lambda}W_4
 \right)\otimes
 \C\bigl(\Omega_\lambda^{14}\otimes\Omega_\mu^{23}\bigr).
 \label{eq:crossed-pairing-schur-weyl}
\end{align}
A tensor product of four Schur modules occurs in both decompositions only
when all four modules are indexed by the same partition.

\begin{lemma}\label{lem:pairing-vector-overlap}
For every $\lambda\in\mathcal P_2(n)$,
\begin{equation}\label{eq:pairing-vector-overlap}
 \ip{\Omega_\lambda^{13}\otimes\Omega_\lambda^{24}}
     {\Omega_\lambda^{14}\otimes\Omega_\lambda^{23}}
 =\frac{1}{f^\lambda}.
\end{equation}
\end{lemma}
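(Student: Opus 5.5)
Expand both vectors in the chosen real orthonormal basis of $\Specht{\lambda}$ and compute the inner product coordinatewise. Write $f=f^\lambda$. By definition,
\[
 \Omega_\lambda^{13}\otimes\Omega_\lambda^{24}
 =\frac1f\sum_{j,k=1}^{f}\mathbf e_j^{(1)}\otimes\mathbf e_k^{(2)}\otimes\mathbf e_j^{(3)}\otimes\mathbf e_k^{(4)},
 \qquad
 \Omega_\lambda^{14}\otimes\Omega_\lambda^{23}
 =\frac1f\sum_{l,m=1}^{f}\mathbf e_l^{(1)}\otimes\mathbf e_m^{(2)}\otimes\mathbf e_m^{(3)}\otimes\mathbf e_l^{(4)}.
\]
Here the replica labels record which tensor slot each factor occupies. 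In the second vector the first index sits in slots $1,4$ and the second in slots $2,3$.

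The product vectors $\mathbf e_{a}^{(1)}\otimes\mathbf e_{b}^{(2)}\otimes\mathbf e_{c}^{(3)}\otimes\mathbf e_{d}^{(4)}$ form an orthonormal basis of the fourfold tensor product. The inner product of a term $(j,k,j,k)$ with a term $(l,m,m,l)$ is therefore $\delta_{jl}\delta_{km}\delta_{jm}\delta_{kl}$. This is nonzero exactly when $j=k=l=m$. Hence the inner product equals
\[
 \frac{1}{f^2}\sum_{j,k,l,m}\delta_{jl}\delta_{km}\delta_{jm}\delta_{kl}
 =\frac{1}{f^2}\cdot f=\frac1f,
\]
which is \eqref{eq:pairing-vector-overlap}. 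Because the basis is real, no complex conjugates appear in the coefficients, so the sesquilinearity of $\ip{\cdot}{\cdot}$ causes no difficulty.

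There is no substantive obstacle. The one point to handle carefully is the bookkeeping of tensor slots. In each $\Omega$ the superscripts say which two replica slots carry the paired indices, and all four factors must be placed in the fixed order $1,2,3,4$ before pairing them with the product basis. Once that order is fixed, the computation is an index contraction. It yields $\Tr(I_f)/f^2$, the standard overlap between the two pairings in a single Specht factor.
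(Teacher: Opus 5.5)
Your proposal is correct and matches the paper's proof: both expand the two vectors in the real orthonormal product basis. Both then note that a term $(j,k,j,k)$ pairs nontrivially with $(l,m,m,l)$ only when $j=k=l=m$, and count the $f^\lambda$ surviving terms, each of weight $1/(f^\lambda)^2$.
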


\begin{proof}
Expanding the two normalised vectors gives
\begin{align*}
 \Omega_\lambda^{13}\otimes\Omega_\lambda^{24}
 =\frac{1}{f^\lambda}\sum_{a,b}|a,b,a,b\rangle,\quad\text{and}\quad
 \Omega_\lambda^{14}\otimes\Omega_\lambda^{23}
 =\frac{1}{f^\lambda}\sum_{c,d}|c,d,d,c\rangle.
\end{align*}
A summand contributes to the inner product only when $a=b=c=d$.  There are $f^\lambda$ such summands, proving \eqref{eq:pairing-vector-overlap}.
\end{proof}

Define the separately symmetric space by
\begin{equation}\label{eq:separately-symmetric-ground-space}
 \mathcal F_n
 =\Symm^n(W_1)\otimes\Symm^n(W_2)
  \otimes\Symm^n(W_3)\otimes\Symm^n(W_4).
\end{equation}
This space lies in both pairing spaces.  Let $\mathsf F_n$ denote its
orthogonal projection.

The overlap formula reduces the angle estimate to the following dimension
bound.

\begin{lemma}\label{lem:minimum-two-row-specht-dimension}
Let $n\geqslant5$ and let $\lambda\neq(n)$ be a partition of $n$ with at most
two rows.  Then $f^\lambda\geqslant n-1$.
\end{lemma}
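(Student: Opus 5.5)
Write $\lambda=(n-k,k)$ with $1\leqslant k\leqslant\lfloor n/2\rfloor$.  The hook length formula (equivalently, the ballot-number formula for two-row shapes) gives
\[
 f^{(n-k,k)}=\binom{n}{k}-\binom{n}{k-1}=\binom{n}{k}\frac{n-2k+1}{n-k+1}.
\]
For $k=1$ this equals $n-1$, so equality holds.  The plan is therefore to show that $f^{(n-k,k)}\geqslant n-1$ for every $2\leqslant k\leqslant\lfloor n/2\rfloor$ once $n\geqslant5$.

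For $2\leqslant k<n/2$, compare consecutive values.  A direct computation from the closed form gives
\[
 \frac{f^{(n-k,k)}}{f^{(n-k+1,k-1)}}
 =\frac{(n-k+2)(n-2k+1)}{k\,(n-2k+3)}.
\]
This ratio is at least one whenever $n-2k$ is moderately large, so $f$ is unimodal in $k$.  Rather than track unimodality carefully, I would use a direct bound instead.  Since $n-2k+1\geqslant1$ and $n-k+1\leqslant n-1$ for $k\geqslant2$,
\[
 f^{(n-k,k)}\geqslant\binom{n}{k}\frac{1}{n-1}.
\]
For $2\leqslant k\leqslant n-2$ we have $\binom{n}{k}\geqslant\binom n2=n(n-1)/2$, so $f\geqslant n/2$.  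This crude bound is too weak, so sharpen it by using $n-2k+1\geqslant1$ only when $k=n/2$.

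Split into two cases.

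\emph{Case $n$ even, $k=n/2$.}  Here $f=C_{n/2}$ is the Catalan number, and one checks $C_3=5\geqslant5$ for $n=6$.  For larger even $n$, the recursion $C_{m+1}=C_m\cdot 2(2m+1)/(m+2)$ has ratio at least $2$ once $m\geqslant 2$, while $n-1=2m-1$ grows by only $2$.  So $C_m\geqslant 2m-1$ for all $m\geqslant3$ by induction.

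\emph{Case $2\leqslant k<n/2$.}  Here $n-2k+1\geqslant2$, so $f\geqslant 2\binom nk/(n-1)\geqslant n$, using $\binom nk\geqslant\binom n2$ for $2\leqslant k\leqslant n-2$ (valid since $n\geqslant5$).

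The main obstacle is only this bookkeeping of the edge cases $k=n/2$ and small $n$.  The $n=5$ values $f^{(3,2)}=5$ and $f^{(4,1)}=4$ serve as a sanity check of the claimed bound $n-1=4$.  The hypothesis $n\geqslant5$ is needed because $f^{(2,2)}=2<3$.
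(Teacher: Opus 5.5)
Your proof is correct and follows essentially the same route as the paper. Both use the two-row hook-length formula, get equality at $k=1$, bound $f\geqslant 2\binom nk/(n-1)\geqslant n$ for $2\leqslant k<n/2$, and handle $k=n/2$ by the Catalan ratio $C_{m+1}/C_m=2(2m+1)/(m+2)\geqslant 2$ with induction from $C_3=5$. The detour through the consecutive-ratio formula and the crude bound $f\geqslant n/2$ is never used and can simply be deleted.
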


\begin{proof}
Write $\lambda=(n-r,r)$ with $1\leqslant r\leqslant\lfloor n/2\rfloor$.  The hook-length formula \cite[Theorem~3.10.2]{Sagan} gives
\begin{equation}\label{eq:two-row-specht-dimension}
 f^{(n-r,r)}
 =\binom nr-\binom n{r-1}
 =\frac{n-2r+1}{n-r+1}\binom nr.
\end{equation}
For $r=1$, this equals $n-1$.  Suppose $r\geqslant2$ and $r\leqslant(n-1)/2$.  Then $n-2r+1\geqslant2$, $n-r+1\leqslant n-1$, and $\binom nr\geqslant\binom n2$, so \eqref{eq:two-row-specht-dimension} is at least $n$.

When $n=2m$ and $r=m$, the dimension $f^{(m,m)}$ is the Catalan number
$C_m=\frac{1}{m+1}\binom{2m}{m}$.
We have $C_3=5$, and $C_{m+1}/C_m=2(2m+1)/(m+2)>2$ for $m\geqslant3$.
It follows by induction that $C_m\geqslant2m-1$.
\end{proof}

\begin{proposition}\label{prop:pairing-principal-angle}
For $n\geqslant5$,
\begin{equation}\label{eq:pairing-principal-angle}
 \left\|
  \mathsf D_n\mathsf X_n-\mathsf F_n
 \right\|
 \leqslant\frac{1}{n-1}.
\end{equation}
In particular, $\mathcal D_n\cap\mathcal X_n=\mathcal F_n$ and
\begin{equation}\label{eq:ground-projection-pairing-bound}
 \mathsf P_n
 \leqslant\frac{n-1}{n-2}
 \left(\mathsf D_n+\mathsf X_n\right).
\end{equation}
\end{proposition}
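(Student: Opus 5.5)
The plan is to diagonalise everything simultaneously along the Schur--Weyl decompositions \eqref{eq:direct-pairing-schur-weyl} and \eqref{eq:crossed-pairing-schur-weyl}. First I decompose $\mathcal H_n=\bigotimes_{a=1}^4 W_a^{\otimes n}$ into isotypic blocks indexed by quadruples $(\lambda_1,\lambda_2,\lambda_3,\lambda_4)\in\mathcal P_2(n)^4$. Each block has the form $\bigl(\bigotimes_a\Schur{\lambda_a}W_a\bigr)\otimes\bigl(\bigotimes_a\Specht{\lambda_a}\bigr)$. Both $\mathsf D_n$ and $\mathsf X_n$ preserve every block, and on each block they act as $I$ on the Schur factor tensored with a projection on the Specht factor.
\begin{itemize}
\item $\mathsf D_n$ is non-zero only when $\lambda_1=\lambda_3$ and $\lambda_2=\lambda_4$. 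There it acts as $I\otimes|\omega_D\rangle\langle\omega_D|$, with $\omega_D=\Omega^{13}_{\lambda_1}\otimes\Omega^{24}_{\lambda_2}$.
\item $\mathsf X_n$ is non-zero only when $\lambda_1=\lambda_4$ and $\lambda_2=\lambda_3$. There it acts as $I\otimes|\omega_X\rangle\langle\omega_X|$, with $\omega_X=\Omega^{14}_{\lambda_1}\otimes\Omega^{23}_{\lambda_2}$.
\end{itemize}
Consequently $\mathsf D_n\mathsf X_n$ vanishes except on the diagonal blocks $\lambda_1=\cdots=\lambda_4=\lambda$. On such a block it equals $I\otimes \ip{\omega_D}{\omega_X}|\omega_D\rangle\langle\omega_X|$, whose norm is $1/f^\lambda$ by Lemma~\ref{lem:pairing-vector-overlap}.

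Next I identify $\mathsf F_n$ in this picture. The block $\lambda=(n)$ has one-dimensional Specht factors, so $\omega_D=\omega_X$ there. This block is exactly $\Symm^n(W_1)\otimes\cdots\otimes\Symm^n(W_4)=\mathcal F_n$, and $\mathsf D_n\mathsf X_n=\mathsf F_n$ on it. On every other block $\mathsf F_n=0$. Hence $\mathsf D_n\mathsf X_n-\mathsf F_n$ is block diagonal with norm $\max_{\lambda\neq(n)}1/f^\lambda$. By Lemma~\ref{lem:minimum-two-row-specht-dimension} this is at most $1/(n-1)$, which proves \eqref{eq:pairing-principal-angle}.

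For the intersection, note that $\mathcal F_n\subseteq\mathcal D_n\cap\mathcal X_n$ was already observed. Conversely, let $\xi\in\mathcal D_n\cap\mathcal X_n$. Then $\mathsf D_n\mathsf X_n\xi=\xi$, so $(\mathsf D_n\mathsf X_n-\mathsf F_n)(I-\mathsf F_n)\xi=(I-\mathsf F_n)\xi$, using that $\mathsf F_n$ commutes with both projections blockwise. Since the norm of $\mathsf D_n\mathsf X_n-\mathsf F_n$ is at most $1/(n-1)<1$, this forces $(I-\mathsf F_n)\xi=0$.

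For \eqref{eq:ground-projection-pairing-bound}, I compute the spectrum of $\mathsf D_n+\mathsf X_n$ block by block. By Proposition~\ref{prop:complete-graph-ground-space}, $\mathsf P_n$ is the projection onto $\Ran(\mathsf D_n+\mathsf X_n)$, so the bound reduces to showing that every positive eigenvalue of $\mathsf D_n+\mathsf X_n$ is at least $(n-2)/(n-1)$. The block types are as follows.
\begin{itemize}
\item On a block where only one of $\mathsf D_n$, $\mathsf X_n$ is non-zero, the sum is a projection, so its positive eigenvalue is $1$.
\item On the block $\lambda=(n)$ the sum is $2\mathsf F_n$, with eigenvalue $2$.
\item On a diagonal block with $\lambda\neq(n)$, the sum is $I\otimes(|\omega_D\rangle\langle\omega_D|+|\omega_X\rangle\langle\omega_X|)$. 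The two unit vectors are real, by the real-form choice of basis, and have overlap $t=1/f^\lambda$. The rank-two operator they span has eigenvalues $1\pm t$, computed from the $2\times2$ Gram matrix.
\end{itemize}
Hence the smallest positive eigenvalue is at least $1-1/(n-1)=(n-2)/(n-1)$. This gives $\mathsf D_n+\mathsf X_n\geqslant\frac{n-2}{n-1}\mathsf P_n$, which is the claim.

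The main point requiring care is the first step: justifying that $\mathsf D_n$ and $\mathsf X_n$ really act as $I$ on the Schur factors tensored with the stated rank-one Specht projections. This follows from \eqref{eq:direct-pairing-schur-weyl}--\eqref{eq:crossed-pairing-schur-weyl}. In particular, $\Omega^{ab}_\lambda$ is, up to scalars, the unique $\Sn_n$-invariant in $\Specht{\lambda}\otimes\Specht{\lambda}$, because Specht modules are real and self-dual. Once that is in place, the rest is finite-dimensional linear algebra on each block.
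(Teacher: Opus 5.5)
Your proof is correct and follows essentially the paper's route: the Schur--Weyl block decomposition, the overlap $1/f^\lambda$ on the blocks where all four partitions coincide, and Lemma~\ref{lem:minimum-two-row-specht-dimension} for the norm bound. The one difference is that you compute the eigenvalues $1\pm 1/f^\lambda$ of $\mathsf D_n+\mathsf X_n$ on each block directly, where the paper cites Halmos' two-projection theorem for $\mathcal D_n\ominus\mathcal F_n$ and $\mathcal X_n\ominus\mathcal F_n$; this is the same calculation made explicit.
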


\begin{proof}
By \eqref{eq:direct-pairing-schur-weyl} and
\eqref{eq:crossed-pairing-schur-weyl}, only the summands whose four Schur modules
are indexed by the same partition $\lambda$ contribute to the product of
the pairing projections.  On each such summand,
Lemma~\ref{lem:pairing-vector-overlap} gives overlap $1/f^\lambda$.
Removing $\mathcal F_n$ excludes $\lambda=(n)$, so
Lemma~\ref{lem:minimum-two-row-specht-dimension} gives
\eqref{eq:pairing-principal-angle}.  The bound is strictly smaller than $1$;
hence $\mathcal D_n\cap\mathcal X_n=\mathcal F_n$.

The two-projection decomposition \cite[Theorem~2]{Halmos} implies that if subspaces
$\mathcal M$ and $\mathcal N$ have zero intersection and
$\|\mathsf P_{\mathcal M}\mathsf P_{\mathcal N}\|\leqslant s<1$, then
\[
 \mathsf P_{\mathcal M+\mathcal N}
 \leqslant\frac{1}{1-s}
 \left(\mathsf P_{\mathcal M}+\mathsf P_{\mathcal N}\right).
\]
Apply this to $\mathcal D_n\ominus\mathcal F_n$ and
$\mathcal X_n\ominus\mathcal F_n$ with $s=1/(n-1)$.  Since these spaces
are orthogonal to $\mathcal F_n$,
\begin{align*}
 \mathsf P_n
 &\leqslant\mathsf F_n
 +\frac{n-1}{n-2}
 \left[
  \left(\mathsf D_n-\mathsf F_n\right)
  +\left(\mathsf X_n-\mathsf F_n\right)
 \right]\leqslant\frac{n-1}{n-2}
 \left(\mathsf D_n+\mathsf X_n\right),
\end{align*}
which is \eqref{eq:ground-projection-pairing-bound}.
\end{proof}

The identifications \eqref{eq:direct-pairing-identification} and
\eqref{eq:crossed-pairing-identification} give
$\dim\mathcal D_n=\dim\mathcal X_n
=\binom{n+3}{3}^2$, while
$\dim\mathcal F_n=(n+1)^4$.
Combining Propositions~\ref{prop:complete-graph-ground-space} and
\ref{prop:pairing-principal-angle} with the dimension formula for a sum of
subspaces, we obtain the following result.

\begin{corollary}\label{cor:ground-space-dimension}
Let $n\geqslant5$.  Then
$\dim\mathcal G_n=2\binom{n+3}{3}^2-(n+1)^4$.
\end{corollary}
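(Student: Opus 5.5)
The plan is to apply the dimension formula for a sum of two subspaces, $\dim(\mathcal D_n+\mathcal X_n)=\dim\mathcal D_n+\dim\mathcal X_n-\dim(\mathcal D_n\cap\mathcal X_n)$. Proposition~\ref{prop:complete-graph-ground-space} gives $\mathcal G_n=\mathcal D_n+\mathcal X_n$ for $n\geqslant5$, and Proposition~\ref{prop:pairing-principal-angle} gives $\mathcal D_n\cap\mathcal X_n=\mathcal F_n$. So only three dimensions remain to be computed.

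For $\mathcal D_n$ I will use the identification \eqref{eq:direct-pairing-identification}. Since $W_1\otimes W_3\cong\C^4$, we have $\dim\Symm^n(\C^4)=\binom{n+3}{3}$, and the same holds for the factor $W_2\otimes W_4$. This gives $\dim\mathcal D_n=\binom{n+3}{3}^2$, and \eqref{eq:crossed-pairing-identification} gives the same value for $\mathcal X_n$.

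For $\mathcal F_n$ I will use \eqref{eq:separately-symmetric-ground-space}: each factor $\Symm^n(\C^2)$ has dimension $n+1$, so $\dim\mathcal F_n=(n+1)^4$. I would also check that $\mathcal F_n$ really is contained in both pairing spaces. A vector in $\mathcal F_n$ is invariant under each $\rho_a(\sigma)$ separately, hence under the paired actions defining \eqref{eq:direct-pairing-space} and \eqref{eq:crossed-pairing-space}.

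Substituting these values gives $2\binom{n+3}{3}^2-(n+1)^4$. There is no real obstacle: the substantive input, namely the intersection identity coming from the principal-angle bound $1/(n-1)<1$, is already established. The only point needing care is that both Proposition~\ref{prop:complete-graph-ground-space} and Proposition~\ref{prop:pairing-principal-angle} require $n\geqslant5$, which matches the hypothesis of the corollary.
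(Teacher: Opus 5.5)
Your proposal is correct and follows essentially the same route as the paper. The paper likewise reads off $\dim\mathcal D_n=\dim\mathcal X_n=\binom{n+3}{3}^2$ and $\dim\mathcal F_n=(n+1)^4$ from the identifications, then combines $\mathcal G_n=\mathcal D_n+\mathcal X_n$ and $\mathcal D_n\cap\mathcal X_n=\mathcal F_n$ with the dimension formula for a sum of subspaces.
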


Combining this corollary with
Propositions~\ref{prop:path-complete-ground-space-equality} and
\ref{prop:complete-graph-ground-space} proves
Theorem~\ref{thm:ground-space}.

\section{The reaction gap on the complete graph}\label{sec:complete-gap}

Write $\delta_n=\Delta(R_n)$, with spectra and eigenvalue multiplicities
taken on $\mathcal S_n$.  For operator inequalities on $\mathcal H_n$, we
extend $R_n$ by zero on $\mathcal S_n^\perp$.  We prove the following uniform bound.

\begin{theorem}\label{thm:uniform-complete-graph-gap}
For every $n\geqslant2$,
\begin{equation}\label{eq:uniform-complete-graph-gap}
 R_n
 \geqslant c_0
 \left(\mathsf Q_n-\mathsf P_n\right).
\end{equation}
\end{theorem}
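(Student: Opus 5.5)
We would prove the bound by a recurrence in $n$ for $\delta_n=\Delta(R_n)$, obtained by deleting one tensor factor from a symmetric tensor, in the spirit of Caputo's complete-graph method. The goal is an inequality of the form
\[
 \delta_{n}\geqslant \Bigl(1-\frac{a}{n^{2}}\Bigr)\delta_{n-1},
\]
or a similar form with coefficients whose infinite product converges to a positive number. We would seed this recurrence with the exact small-$n$ spectra from Appendix~\ref{app:three-site}. Then $\inf_n\delta_n$ is bounded below by $\delta_3$ (or $\delta_5$) times the product, and the constant $c_0=-3\sin(\pi\sqrt2)/(10\pi\sqrt2)$ is exactly such a product: one has $\prod_{k}(1-2/k^2)=\sin(\pi\sqrt2)/(\pi\sqrt2)$ up to finitely many factors. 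This suggests the coefficients are $1-2/k^2$ for $k\geqslant 2$, with the $k=2$ factor $-1$ absorbed into the initial value.

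\textbf{Step 1: express $R_n$ through $R_{n-1}$.} For $\psi\in\mathcal S_n$, regard $\psi$ as a symmetric tensor in the last $n-1$ factors for each fixed first factor. Averaging over which site is removed gives
\[
 \sum_{e\subseteq[n]}q_e=\frac{1}{n-2}\sum_{i=1}^n\sum_{e\not\ni i}q_e,
\]
so
\[
 \ip{\psi}{R_n\psi}=\frac{n}{n-2}\,\ip{\psi}{(I_V\otimes R_{n-1})\psi}.
\]
Applying the gap $\delta_{n-1}$ on the last $n-1$ sites yields
\[
 R_n\geqslant \frac{n}{n-2}\,\delta_{n-1}\bigl(I-\mathsf Q_n'\bigr)
\]
on $\mathcal S_n$. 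Here $\mathsf Q_n'$ projects onto $V\otimes\mathcal G_{n-1}$, intersected with $\mathcal S_n$.

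\textbf{Step 2: control the overlap.} It remains to show that, on $\mathcal S_n\ominus\mathcal G_n$, the compression of $\mathsf Q_n'$ has norm at most $\theta_n$. This is the standard Caputo-type estimate: one needs $\|\mathsf Q_n\mathsf Q_n'\mathsf Q_n-\mathsf P_n\|\leqslant\theta_n$, which gives
\[
 \delta_n\geqslant\frac{n}{n-2}(1-\theta_n)\delta_{n-1}.
\]
To compute $\theta_n$, use Proposition~\ref{prop:complete-graph-ground-space}: for $n-1\geqslant5$, $\mathcal G_{n-1}=\mathcal D_{n-1}+\mathcal X_{n-1}$. The operator $\mathsf Q_n(I\otimes\mathsf D_{n-1})\mathsf Q_n$ restricted to $\mathcal S_n$ is then explicit. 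Via the identification \eqref{eq:direct-pairing-identification}, symmetric tensors in $W_1\otimes W_3$ and $W_2\otimes W_4$ reduce to an $\mathrm{SU}(2)$-type computation with two-row Specht modules, and the eigenvalues of such "one-site-removed" averages are of the form $1-O(\text{const}/n)$, given explicitly by content formulas. The crossed pairing is handled identically. The cross terms are controlled by Proposition~\ref{prop:pairing-principal-angle}, since the principal angles are governed by $1/f^\lambda$.

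\textbf{Step 3: conclude.} Combining the two steps, we expect the factor $\tfrac{n}{n-2}(1-\theta_n)$ to equal $1-2/n^2$ (or to be bounded below by it). For small $n$ (namely $n\leqslant5$, where the Li et al.\ identification fails or the recurrence is not yet valid), we would use the exact spectra from the appendix. The infinite product then gives $\delta_n\geqslant c_0$ for all $n$. Since $R_n$ vanishes on $\mathcal G_n$ and is extended by zero off $\mathcal S_n$, this is exactly \eqref{eq:uniform-complete-graph-gap}.

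\textbf{Main obstacle.} The hard part is Step 2: obtaining a sharp enough $\theta_n$, of order $2/n+O(1/n^2)$, so that the product converges rather than decays polynomially. The crude principal-angle bound $1/(n-1)$ is too weak if applied naively. We would try first to diagonalise $\mathsf Q_n(I\otimes\mathsf P_{n-1})\mathsf Q_n$ exactly, isotypic component by isotypic component, in the Schur--Weyl decomposition of $\mathcal S_n$ under $\mathrm U(2)^{\times4}$.
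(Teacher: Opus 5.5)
Your architecture matches the paper's: the deletion identity $n\,\iota_n^*(I_V\otimes R_{n-1})\iota_n=(n-2)R_n$, a bound $\theta_n$ on the compression of $I_V\otimes\mathsf P_{n-1}$ to $\mathcal S_n\ominus\mathcal G_n$, the recurrence $\delta_n\geqslant\frac{n}{n-2}(1-\theta_n)\delta_{n-1}$, and a convergent product. In Step~1 this must be the compression, not the projection onto $(V\otimes\mathcal G_{n-1})\cap\mathcal S_n$, which is just $\mathcal G_n$.

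\textbf{The missing lemma.} Step~2 carries all the content, and you do not supply it. The mechanism you sketch would also fail. If the one-site-removed averages had eigenvalues $1-O(1/n)$ off the ground space, $\theta_n$ would be close to $1$ and the recurrence would be worthless. The useful symmetry is also not $\mathrm U(2)^{\times4}$, under which $\mathcal S_n$ is not multiplicity-free. What you need is the spectrum of the single-pairing deletion operator $M_n=n\,\iota_n^*(I_V\otimes\mathsf D_{n-1})\iota_n$. Use $\mathrm U(V_{13})\times\mathrm U(V_{24})$ with $V_{13}=W_1\otimes W_3\cong\C^4$. The Cauchy decomposition $\mathcal S_n\cong\bigoplus_\lambda\Schur{\lambda}V_{13}\otimes\Schur{\lambda}V_{24}$ is multiplicity-free, so $M_n$ is scalar on each summand. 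The Pieri rule $V_{13}\otimes\Symm^{n-1}V_{13}\cong\Schur{(n)}V_{13}\oplus\Schur{(n-1,1)}V_{13}$ leaves only $\lambda=(n)$, with value $n$, and $\lambda=(n-1,1)$. The trace identity $\Tr M_n=(n+15)\dim\mathcal D_{n-1}$ then forces the value $n/(n-1)$ on the hook. Hence $\|(I_V\otimes\mathsf D_{n-1})\iota_n\xi\|^2\leqslant\|\xi\|^2/(n-1)$ for $\xi\perp\mathcal D_n$, and likewise for $\mathsf X$.

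With this, your ``main obstacle'' disappears. The crude angle bound $\mathsf P_{n-1}\leqslant\frac{n-2}{n-3}(\mathsf D_{n-1}+\mathsf X_{n-1})$ is exactly enough, giving $\theta_n\leqslant\frac{2(n-2)}{(n-1)(n-3)}$. The recurrence coefficient is then $r_n=\frac{n(n^2-6n+7)}{(n-1)(n-2)(n-3)}$, with $1-r_n=\frac{2(2n-3)}{(n-1)(n-2)(n-3)}=O(n^{-2})$, so no exact diagonalisation is needed. The factors are $r_j=(1-2/m^2)\frac{m(m+3)}{(m+1)(m+2)}$ with $m=j-3$, not $1-2/k^2$; the telescoping part contributes $3/5$.

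\textbf{The missing seed.} The recurrence only becomes informative at $n=6$. It needs $\mathcal G_{n-1}=\mathcal D_{n-1}+\mathcal X_{n-1}$ and an angle bound of order $1/n$, and both require $n-1\geqslant5$. Even granting the decomposition at $n-1=4$, $f^{(2,2)}=2$ only yields $\theta_5\leqslant1$. So you need a lower bound on $\delta_5$, but the appendix gives only $\delta_3=3/4$.

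The paper bridges $n=4,5$ with a Knabe-type operator inequality, $R_{K_n}^2-\frac{1}{n-3}\sum_rR_{K_n\setminus r}^2+\frac{1}{n-3}R_{K_n}\geqslant0$. In it, only products of reaction projections on disjoint, hence commuting, edges survive, with positive coefficient. This gives $\delta_n\geqslant\frac{(n-2)\delta_{n-1}-1}{n-3}$, hence $\delta_4\geqslant1/2$ and $\delta_5\geqslant1/4$, and so $c_0=\frac14\prod_{j\geqslant6}r_j$. Without this, or a direct computation on $\Symm^5(\C^{16})$, your recurrence has nothing to start from.
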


We first use the angle between the two pairing spaces to obtain a recurrence
for the gap.  The three-particle spectrum and a finite-size inequality give
the initial bound at $n=5$.  We then show that the product of the recurrence
coefficients is bounded away from zero.

\subsection{Deletion estimates}

Fix an orthonormal basis $\mathbf v_1,\ldots,\mathbf v_{16}$ of $V$, and let
$a_1,\ldots,a_{16}$ be the corresponding annihilation operators on bosonic
Fock space, normalised by
$[a_\alpha,a_\beta^*]=\delta_{\alpha\beta}I$ and $a_\alpha|0\rangle=0$.
On the $n$-particle sector $\mathcal S_n=\Symm^n(V)$, we have
$\sum_{\alpha=1}^{16}a_\alpha^*a_\alpha=nI$.  Define the canonical one-particle deletion map by
\begin{equation}\label{eq:canonical-deletion-isometry}
 \iota_n:\mathcal S_n
 \longrightarrow V\otimes\mathcal S_{n-1},
 \qquad
 \iota_n\xi
 =\frac{1}{\sqrt n}
  \sum_{\alpha=1}^{16}\mathbf v_\alpha\otimes a_\alpha\xi.
\end{equation}
The number-operator identity shows that $\iota_n$ is an isometry.

Set $V_{13}=W_1\otimes W_3$ and $V_{24}=W_2\otimes W_4$, so that
$V=V_{13}\otimes V_{24}$.  Define the direct and crossed deletion operators by
\[
 M_n=n\,\iota_n^*(I_V\otimes\mathsf D_{n-1})\iota_n,
 \qquad
 N_n=n\,\iota_n^*(I_V\otimes\mathsf X_{n-1})\iota_n.
\]
Thus $M_n=\sum_{\alpha=1}^{16}
a_\alpha^*\mathsf D_{n-1}a_\alpha$, and similarly for
$N_n$.

\begin{proposition}\label{prop:single-pairing-deletion-spectrum}
Let $n\geqslant3$.  Then the non-zero eigenvalues of $M_n$ are
$n$ on $\mathcal D_n$ and $n/(n-1)$ on the subspace
\begin{equation}\label{eq:hook-cauchy-component}
 \Schur{(n-1,1)}V_{13}
 \otimes
 \Schur{(n-1,1)}V_{24}.
\end{equation}
In particular, $M_n\leqslant\frac{n}{n-1}I$ on
$\mathcal S_n\ominus\mathcal D_n$.
The analogous statement holds for $N_n$.
\end{proposition}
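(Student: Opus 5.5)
The plan is to reduce everything to a computation with $\mathrm{GL}(V_{13})\times\mathrm{GL}(V_{24})$-representations.
\begin{itemize}
\item By the paired identifications, $\mathcal D_m\cong\Symm^m(V_{13})\otimes\Symm^m(V_{24})$.
\item Cauchy's formula gives $\mathcal S_m=\Symm^m(V_{13}\otimes V_{24})\cong\bigoplus_{\lambda\vdash m}\Schur{\lambda}V_{13}\otimes\Schur{\lambda}V_{24}$.
\item Inside this decomposition, $\mathcal D_m$ is exactly the $\lambda=(m)$ summand.
\end{itemize}
The operator $M_n=\sum_\alpha a_\alpha^*\mathsf D_{n-1}a_\alpha$ does not depend on the choice of orthonormal basis of $V$. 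It commutes with $\mathrm U(V_{13})\times\mathrm U(V_{24})$, since $a_\alpha$ transforms contragrediently and $a_\alpha^*$ covariantly, and $\mathsf D_{n-1}$ is invariant. Each Cauchy summand is an irreducible $\mathrm U(V_{13})\times\mathrm U(V_{24})$-module, and distinct $\lambda$ give non-isomorphic modules. So by Schur's lemma $M_n$ acts as a scalar $m_\lambda$ on each summand $\Schur{\lambda}V_{13}\otimes\Schur{\lambda}V_{24}$ with $\lambda\vdash n$. The task is to compute these scalars.

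First, $M_n$ is supported only on $\lambda\in\{(n),(n-1,1)\}$. Write $M_n=n\,\iota_n^*(I_V\otimes\mathsf D_{n-1})\iota_n$, so its range lies in the image of $\iota_n^*$ applied to $V\otimes\mathcal D_{n-1}$. As a $\mathrm{GL}\times\mathrm{GL}$ module,
\[
V\otimes\mathcal D_{n-1}\cong\bigl(V_{13}\otimes\Symm^{n-1}V_{13}\bigr)\otimes\bigl(V_{24}\otimes\Symm^{n-1}V_{24}\bigr),
\]
and Pieri's rule decomposes each factor as $\Schur{(n)}\oplus\Schur{(n-1,1)}$. The only diagonal pairs $\Schur{\lambda}V_{13}\otimes\Schur{\lambda}V_{24}$ that can occur are therefore $\lambda=(n)$ and $\lambda=(n-1,1)$, so $m_\lambda=0$ otherwise. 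For $n\geqslant3$ the hook summand is non-zero and distinct from the others, which uses $\dim V_{13}=4\geqslant2$.

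Next comes the trace. Using $\sum_\alpha a_\alpha^*a_\alpha=n$ and the fact that $a_\alpha^*$ maps $\mathcal S_{n-1}$ to $\mathcal S_n$, I will compute
\[
\Tr_{\mathcal S_n}M_n=\sum_\alpha\Tr_{\mathcal S_{n-1}}\bigl(\mathsf D_{n-1}a_\alpha a_\alpha^*\bigr)=\Tr\bigl(\mathsf D_{n-1}(N+16)\bigr)=(n+15)\dim\mathcal D_{n-1},
\]
where $N$ is the number operator. This is one linear relation between $m_{(n)}$ and $m_{(n-1,1)}$.

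To get a second relation I will evaluate $m_{(n)}$ directly. For $\xi\in\mathcal D_n$, each $a_\alpha\xi$ should lie in $\mathcal D_{n-1}$. To see this, expand $\sum_\alpha\mathbf v_\alpha\otimes a_\alpha$ in a product basis $\mathbf v_\alpha=\mathbf u_i\otimes\mathbf w_j$ with $\mathbf u_i\in V_{13}$ and $\mathbf w_j\in V_{24}$. Removing one factor of a separately symmetric tensor leaves a separately symmetric tensor. Hence $M_n\xi=\sum_\alpha a_\alpha^*a_\alpha\xi=n\xi$, so $m_{(n)}=n$.

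With $m_{(n)}=n$ in hand, the trace identity gives
\[
m_{(n-1,1)}=\frac{(n+15)\dim\mathcal D_{n-1}-n\dim\mathcal D_n}{(\dim\Schur{(n-1,1)}\C^4)^2}.
\]
The dimensions are $\dim\mathcal D_m=\binom{m+3}{3}^2$ and $\dim\Schur{(n-1,1)}\C^4=3\binom{n+3}{4}\cdot\frac{4}{n+3}$, which by the hook-content formula equals $\frac{(n-1)(n+1)(n+2)}{2}$. The target is to show this quotient equals $n/(n-1)$. This algebra is the step I expect to be the main obstacle. Because the numerator is a difference of squares of binomials, it may not simplify cleanly, and I may have mis-tracked an ordering of $a$ versus $a^*$ in the trace.

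As a fallback I would compute $m_{(n-1,1)}$ directly. Take a highest-weight vector in the hook summand, for instance $\xi=\mathrm{Sym}\bigl[(u_1\wedge u_2)\otimes(w_1\wedge w_2)\otimes(u_1\otimes w_1)^{\otimes(n-2)}\bigr]$ suitably normalised. Apply $\iota_n$, project with $I\otimes\mathsf D_{n-1}$ (which on $\mathcal S_{n-1}$ is the symmetrisation over $V_{13}$ and $V_{24}$ separately), and take the norm. The expected factor $1/(n-1)$ arises because the antisymmetric pair has weight $1/(n-1)$ in the symmetric average, and symmetrising the remaining $n-1$ factors in each of $V_{13}$ and $V_{24}$ keeps only that fraction.

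The inequality $M_n\leqslant\frac{n}{n-1}I$ on $\mathcal S_n\ominus\mathcal D_n$ then follows, since the only non-zero eigenvalue there is $n/(n-1)$. For $N_n$, the same argument applies with replicas $3$ and $4$ interchanged, which swaps the roles of $(1,3)(2,4)$ and $(1,4)(2,3)$.
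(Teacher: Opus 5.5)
Your proposal is the paper's proof step for step: the Cauchy decomposition with Schur's lemma, Pieri's rule to restrict to $(n)$ and $(n-1,1)$, the eigenvalue $n$ on $\mathcal D_n$, and the trace identity $\Tr M_n=(n+15)\dim\mathcal D_{n-1}$ (your ordering of $a_\alpha$ and $a_\alpha^*$ is right) to solve for the hook scalar. The algebra you flagged does close, since $(n+15)\binom{n+2}{3}^2-n\binom{n+3}{3}^2=\frac{(n+1)^2(n+2)^2}{36}\bigl[(n+15)n^2-n(n+3)^2\bigr]=\frac{n(n-1)(n+1)^2(n+2)^2}{4}$, and dividing by $\bigl(\frac{(n-1)(n+1)(n+2)}{2}\bigr)^2$ gives $n/(n-1)$; one slip to fix is that your intermediate expression $3\binom{n+3}{4}\cdot\frac{4}{n+3}$ equals $n(n+1)(n+2)/2$, not the correct value $(n-1)(n+1)(n+2)/2$ that you go on to use.
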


\begin{proof}
Since $\dim V_{13}=\dim V_{24}=4$, the Cauchy decomposition \cite[Exercise~6.11(b)]{FultonHarris} gives
\begin{equation}\label{eq:cauchy-decomposition}
 \Symm^n(V_{13}\otimes V_{24})
 \cong
 \bigoplus_{\substack{\lambda\vdash n\\\ell(\lambda)\leqslant4}}
 \Schur{\lambda}V_{13}\otimes\Schur{\lambda}V_{24}.
\end{equation}
Here $\mathcal D_n$ is the summand corresponding to $(n)$.
The operator $M_n$ commutes with
$\mathrm U(V_{13})\times\mathrm U(V_{24})$.  Since the decomposition is
multiplicity-free, Schur's lemma \cite[Lemma~1.7]{FultonHarris} implies that
$M_n$ is scalar on each summand.

If $M_n$ is non-zero on the $\lambda$-summand, then
$\Schur{\lambda}V_{13}$ must occur in
$V_{13}\otimes\Schur{(n-1)}V_{13}$, and likewise for $V_{24}$.
By the Pieri rule \cite[Eq.~(6.8)]{FultonHarris},
\[
 V_{13}\otimes\Schur{(n-1)}V_{13}
 \cong
 \Schur{(n)}V_{13}\oplus\Schur{(n-1,1)}V_{13}.
\]
Hence only $(n)$ and $(n-1,1)$ can contribute.

If $\xi\in\mathcal D_n$, then $a_\alpha\xi\in\mathcal D_{n-1}$ for
every $\alpha$, so $M_n\xi=\sum_{\alpha=1}^{16}a_\alpha^*a_\alpha\xi=n\xi$.
It remains to determine the scalar $\beta_n$ on
\eqref{eq:hook-cauchy-component}.  On
$\Symm^{n-1}(V)$, the canonical commutation relations give
$\sum_{\alpha=1}^{16}a_\alpha a_\alpha^*=(n+15)I$.
Taking traces, we obtain
\begin{equation}\label{eq:direct-deletion-trace}
 \Tr M_n
 =(n+15)\dim\mathcal D_{n-1}.
\end{equation}
Using $\dim\mathcal D_n=\binom{n+3}{3}^2$ and the dimension formula
\cite[Theorem~6.3(1)]{FultonHarris},
\begin{equation}\label{eq:four-dimensional-hook-dimension}
 \dim\Schur{(n-1,1)}\C^4
 =\frac{(n-1)(n+1)(n+2)}{2},
\end{equation}
we can rewrite \eqref{eq:direct-deletion-trace} as
\begin{equation*}
 (n+15)\binom{n+2}{3}^2
 =n\binom{n+3}{3}^2
 +\beta_n
 \left(\frac{(n-1)(n+1)(n+2)}{2}\right)^2.
\end{equation*}
Solving gives $\beta_n=n/(n-1)$.  Interchanging replicas $3$ and $4$ gives the
crossed statement.
\end{proof}

We now combine the two pairing estimates.  Define the ground-space deletion
operator by
\begin{equation}\label{eq:ground-space-deletion-operator}
 T_n
 =n\,\iota_n^*
  \left(I_V\otimes\mathsf P_{n-1}\right)
  \iota_n.
\end{equation}

\begin{proposition}\label{prop:ground-space-deletion-bound}
For $n\geqslant6$,
\[
 T_n
 \leqslant
 \frac{2n(n-2)}{(n-1)(n-3)}I
 \quad\text{on }
 \mathcal S_n\ominus\mathcal G_n.
\]
\end{proposition}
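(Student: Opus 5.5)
We combine the projection inequality \eqref{eq:ground-projection-pairing-bound} at level $n-1$ with the single-pairing deletion spectra of Proposition~\ref{prop:single-pairing-deletion-spectrum}.

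Since $n\geqslant6$, we have $n-1\geqslant5$, so Proposition~\ref{prop:pairing-principal-angle} applies on $\mathcal S_{n-1}$. It gives
\[
 \mathsf P_{n-1}\leqslant\frac{n-2}{n-3}\left(\mathsf D_{n-1}+\mathsf X_{n-1}\right).
\]
Tensoring with $I_V$ preserves operator order, and so does conjugation by $\sqrt n\,\iota_n$. Hence
\[
 T_n\leqslant\frac{n-2}{n-3}\,(M_n+N_n)
\]
as operators on $\mathcal S_n$.

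Next, $T_n$ commutes with every unitary in the commutant structure preserving $\mathcal G_n$. More simply, $T_n\xi=n\xi$ for $\xi\in\mathcal G_n$: for $\xi\in\mathcal D_n$ or $\xi\in\mathcal X_n$ we have $a_\alpha\xi\in\mathcal D_{n-1}$, respectively $\mathcal X_{n-1}$, and both lie in $\mathcal G_{n-1}$. Since $T_n$ is self-adjoint, it therefore preserves $\mathcal S_n\ominus\mathcal G_n$. It thus suffices to bound $\ip{\xi}{T_n\xi}$ for unit vectors $\xi\perp\mathcal G_n$.

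For such $\xi$, we have $\xi\perp\mathcal D_n$ and $\xi\perp\mathcal X_n$. Proposition~\ref{prop:single-pairing-deletion-spectrum} therefore gives
\[
 \ip{\xi}{M_n\xi}\leqslant\frac{n}{n-1},\qquad \ip{\xi}{N_n\xi}\leqslant\frac{n}{n-1}.
\]
To apply this we need $M_n\leqslant\frac{n}{n-1}$ on $\mathcal S_n\ominus\mathcal D_n$ as a quadratic form on any vector orthogonal to $\mathcal D_n$. This holds because $M_n$ is diagonal in the Cauchy decomposition and $\mathcal D_n$ is one of its summands. Summing the two bounds gives
\[
 \ip{\xi}{T_n\xi}\leqslant\frac{n-2}{n-3}\cdot\frac{2n}{n-1},
\]
which is the claimed bound.

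The main point to check is the reduction to vectors orthogonal to $\mathcal G_n$. In particular, we must confirm that $\xi\perp\mathcal D_n+\mathcal X_n$ legitimately invokes both single-pairing bounds, even though $\xi$ may have components in the hook summands of both decompositions. Those components cause no difficulty: the bound $n/(n-1)$ is the largest eigenvalue of $M_n$ off $\mathcal D_n$, so it holds uniformly on all of $\mathcal S_n\ominus\mathcal D_n$.
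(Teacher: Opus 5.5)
Your proposal is correct and follows essentially the same route as the paper: orthogonality of $\xi$ to both $\mathcal D_n$ and $\mathcal X_n$, the single-pairing deletion bounds $M_n,N_n\leqslant\frac{n}{n-1}$ off the respective pairing spaces, and the level-$(n-1)$ inequality $\mathsf P_{n-1}\leqslant\frac{n-2}{n-3}(\mathsf D_{n-1}+\mathsf X_{n-1})$ pulled back through $\iota_n$. Your extra check that $T_n$ acts as $n$ on $\mathcal G_n$, so that $\mathcal S_n\ominus\mathcal G_n$ is invariant, is valid but not needed, since the recurrence only uses the quadratic-form bound; your remark about ``the commutant structure'' is vague and can simply be dropped.
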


\begin{proof}
Let $\xi\in\mathcal S_n\ominus\mathcal G_n$.  By
Proposition~\ref{prop:complete-graph-ground-space}, $\xi$ is orthogonal to
both pairing spaces.  Proposition~\ref{prop:single-pairing-deletion-spectrum}
gives
\begin{align*}
 \left\|
  (I_V\otimes\mathsf D_{n-1})
  \iota_n\xi
 \right\|^2
 &=\frac1n\ip{\xi}{M_n\xi}
 \leqslant\frac{1}{n-1}\|\xi\|^2,\\
 \left\|
  (I_V\otimes\mathsf X_{n-1})
  \iota_n\xi
 \right\|^2
 &\leqslant\frac{1}{n-1}\|\xi\|^2.
\end{align*}
Since $n-1\geqslant5$, Proposition~\ref{prop:pairing-principal-angle} gives
\[
 \mathsf P_{n-1}
 \leqslant\frac{n-2}{n-3}
 \left(\mathsf D_{n-1}
      +\mathsf X_{n-1}\right).
\]
Consequently,
\[
 \left\|
  (I_V\otimes\mathsf P_{n-1})
  \iota_n\xi
 \right\|^2
 \leqslant
 \frac{2(n-2)}{(n-1)(n-3)}\|\xi\|^2.
\]
The result follows from \eqref{eq:ground-space-deletion-operator}.
\end{proof}

\subsection{The gap recurrence}

The isometry $\iota_n$ is the natural inclusion obtained by separating
one tensor factor from a symmetric tensor.  Since every edge survives
exactly $n-2$ choices of a deleted vertex, it gives the following identity.

\begin{lemma}\label{lem:two-body-deletion-identity}
Let $n\geqslant3$.  Then
\begin{equation}\label{eq:two-body-deletion-identity}
 n\,\iota_n^*(I_V\otimes R_{n-1})\iota_n=(n-2)R_n.
\end{equation}
\end{lemma}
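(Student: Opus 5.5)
The plan is to prove the identity on $V^{\otimes n}$ first, where $\iota_n$ has a concrete tensor description, and then restrict to $\mathcal S_n$.

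\emph{Concrete form of $\iota_n$.} For $\xi\in\mathcal S_n$ we have $a_\alpha\xi=\sqrt n\,(\mathbf v_\alpha^*\otimes I)\xi$, contracting the first tensor factor. Hence $\iota_n\xi=\sum_\alpha \mathbf v_\alpha\otimes(\mathbf v_\alpha^*\otimes I)\xi=\xi$, now regarded as an element of $V\otimes V^{\otimes(n-1)}$. By symmetry of $\xi$ this element lies in $V\otimes\mathcal S_{n-1}$. So $\iota_n$ is the restriction to $\mathcal S_n$ of the identity map $V^{\otimes n}\to V\otimes V^{\otimes(n-1)}$, and $\iota_n^*=\mathsf Q_n$ restricted to $V\otimes\mathcal S_{n-1}$. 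The left-hand side of \eqref{eq:two-body-deletion-identity} is therefore
\[
n\,\mathsf Q_n\bigl(I_V\otimes R_{K_{n-1}}\bigr)\mathsf Q_n ,
\]
where $R_{K_{n-1}}$ acts on sites $2,\ldots,n$. Here I use that $R_{n-1}$ is the restriction of $R_{K_{n-1}}$ to $\mathcal S_{n-1}$, and that $I_V\otimes R_{K_{n-1}}$ preserves $V\otimes\mathcal S_{n-1}$.

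\emph{Symmetrisation.} We have $I_V\otimes R_{K_{n-1}}=\sum_{e\subseteq\{2,\ldots,n\}}q_e$. For $\xi,\eta\in\mathcal S_n$, the inner product $\ip{\eta}{q_e\xi}$ is independent of the pair $e$: it equals $\ip{\eta}{U_\sigma q_{e_0}U_\sigma^{-1}\xi}$, and $U_\sigma$ fixes both $\xi$ and $\eta$. Let $r$ denote this common value. Then $\ip{\eta}{R_n\xi}=\binom n2 r$, while the compressed operator gives $n\binom{n-1}2 r$. Since
\[
n\binom{n-1}{2}=(n-2)\binom n2 ,
\]
the identity follows. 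This is exactly the statement that every edge survives $n-2$ of the $n$ vertex deletions.

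\emph{Main obstacle.} The only delicate point is the normalisation of $a_\alpha$ on symmetric tensors, namely the factor $\sqrt n$. I would check it against the stated isometry property: $\|\iota_n\xi\|=\|\xi\|$ holds only with $a_\alpha\xi=\sqrt n(\mathbf v_\alpha^*\otimes I)\xi$, which fixes the constant.
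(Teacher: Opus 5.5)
Your proposal is correct and follows essentially the paper's argument. You identify $\iota_n$ with the inclusion $\mathcal S_n\subseteq V\otimes\mathcal S_{n-1}$, which the paper asserts just before the lemma. You then use that matrix elements of $q_e$ between symmetric vectors do not depend on $e$, and the count $n\binom{n-1}{2}=(n-2)\binom n2$ is the same edge count as the paper's sum over the deleted vertex $r$.
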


\begin{proof}
For $\xi,\zeta\in\mathcal S_n$, the matrix element of the
reaction sum after deleting a vertex is independent of the deleted
vertex.  Therefore
\[
 n\langle\iota_n\xi,(I_V\otimes R_{n-1})\iota_n\zeta\rangle
 =\sum_{r=1}^n\sum_{\substack{i<j\\i,j\neq r}}
   \langle\xi,q_{\{i,j\}}\zeta\rangle
 =(n-2)\langle\xi,R_n\zeta\rangle.
\]
This proves the operator identity.
\end{proof}

For $n\geqslant6$, put
$r_n=n(n^2-6n+7)/((n-1)(n-2)(n-3))$.

\begin{proposition}\label{prop:complete-graph-gap-recurrence}
For $n\geqslant6$,
\begin{equation}\label{eq:complete-graph-gap-recurrence}
 \delta_n\geqslant r_n\delta_{n-1}.
\end{equation}
Moreover, $0<r_n<1$.
\end{proposition}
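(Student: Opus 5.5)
The plan is a Rayleigh-quotient argument on $\mathcal S_n\ominus\mathcal G_n$. It combines the deletion identity of Lemma~\ref{lem:two-body-deletion-identity} with the ground-space deletion bound of Proposition~\ref{prop:ground-space-deletion-bound}. Since $R_n$ is self-adjoint on $\mathcal S_n$ with kernel $\mathcal G_n$ (Proposition~\ref{prop:path-complete-ground-space-equality}), it preserves $\mathcal S_n\ominus\mathcal G_n$. Hence $\delta_n$ is the infimum of $\ip{\xi}{R_n\xi}/\|\xi\|^2$ over non-zero $\xi$ in that subspace, and it suffices to bound this quotient from below.

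Fix $n\geqslant6$ and a non-zero $\xi\in\mathcal S_n\ominus\mathcal G_n$. The vector $\iota_n\xi$ lies in $V\otimes\mathcal S_{n-1}$, where $\mathsf Q_{n-1}$ acts as the identity. By the definition of $\delta_{n-1}$, we have $R_{n-1}\geqslant\delta_{n-1}(I-\mathsf P_{n-1})$ on $\mathcal S_{n-1}$, and tensoring with $I_V$ preserves this inequality. Lemma~\ref{lem:two-body-deletion-identity} and the isometry property $\|\iota_n\xi\|=\|\xi\|$ then give
\[
 (n-2)\ip{\xi}{R_n\xi}
 =n\ip{\iota_n\xi}{(I_V\otimes R_{n-1})\iota_n\xi}
 \geqslant\delta_{n-1}\bigl(n\|\xi\|^2-\ip{\xi}{T_n\xi}\bigr),
\]
using the definition \eqref{eq:ground-space-deletion-operator} of $T_n$. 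Proposition~\ref{prop:ground-space-deletion-bound} bounds $\ip{\xi}{T_n\xi}$ by $\tfrac{2n(n-2)}{(n-1)(n-3)}\|\xi\|^2$. Substituting this bound, we obtain
\[
 \ip{\xi}{R_n\xi}\geqslant\delta_{n-1}\,
 \frac{n\bigl[(n-1)(n-3)-2(n-2)\bigr]}{(n-1)(n-2)(n-3)}\|\xi\|^2
 =r_n\delta_{n-1}\|\xi\|^2,
\]
since $(n-1)(n-3)-2(n-2)=n^2-6n+7$. This proves \eqref{eq:complete-graph-gap-recurrence}.

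For the claim $0<r_n<1$, note first that the roots of $n^2-6n+7$ are $3\pm\sqrt2<6$, so the numerator is positive for $n\geqslant6$. The denominator is also positive, hence $r_n>0$. Next, $(n-1)(n-2)(n-3)=n^3-6n^2+11n-6$, whereas the numerator is $n^3-6n^2+7n$. Their difference is $4n-6>0$, which gives $r_n<1$.

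No step is a genuine obstacle, since the substantive work is already contained in Proposition~\ref{prop:ground-space-deletion-bound}. The only point requiring care is to use the lower bound on $R_{n-1}$ in operator form on all of $\mathcal S_{n-1}$, including its ground space. This is where the projection $\mathsf P_{n-1}$ enters and is then controlled by $T_n$.
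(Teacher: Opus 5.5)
Your proof is correct and follows the paper's argument essentially line for line. You combine Lemma~\ref{lem:two-body-deletion-identity} with the isometry $\iota_n$ to get $(n-2)R_n\geqslant\delta_{n-1}(nI-T_n)$ in quadratic form, bound $T_n$ on $\mathcal S_n\ominus\mathcal G_n$ via Proposition~\ref{prop:ground-space-deletion-bound} (valid since $n\geqslant6$ gives $n-1\geqslant5$), and use $1-r_n=2(2n-3)/((n-1)(n-2)(n-3))>0$, exactly as the paper does.
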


\begin{proof}
On the $(n-1)$-particle sector,
$R_{n-1}\geqslant\delta_{n-1}(\mathsf Q_{n-1}-\mathsf P_{n-1})$.
Applying Lemma~\ref{lem:two-body-deletion-identity} and using the isometry
$\iota_n$, we obtain
\[
 (n-2)R_n
 \geqslant\delta_{n-1}\left(nI-T_n\right).
\]
On $\mathcal S_n\ominus\mathcal G_n$, Proposition~\ref{prop:ground-space-deletion-bound} gives
\[
 \delta_n
 \geqslant\frac{\delta_{n-1}}{n-2}
 \left(
  n-\frac{2n(n-2)}{(n-1)(n-3)}
 \right).
\]
This gives \eqref{eq:complete-graph-gap-recurrence}.

For $n\geqslant6$, the factorisation
$n^2-6n+7=(n-3-\sqrt2)(n-3+\sqrt2)$ shows that
$r_n>0$.  Also, $1-r_n=2(2n-3)/((n-1)(n-2)(n-3))>0$,
so $r_n<1$.
\end{proof}

To obtain the initial bound at $n=5$, we use the following spectrum,
computed in Appendix~\ref{app:complete-three-particle} using the trivial and
standard representations of $\Sn_3$.

\begin{proposition}\label{prop:complete-graph-three-particle-spectrum}
The complete-graph reaction Hamiltonian on three particles has spectrum
\begin{equation}\label{eq:complete-graph-three-particle-spectrum}
 \Spec\bigl(R_3\bigr)
 =\left\{
  0^{[544]},
  \left(\frac34\right)^{[128]},
  \left(\frac32\right)^{[128]},
  \left(\frac94\right)^{[16]}
 \right\}.
\end{equation}
In particular, $\delta_3=3/4$.
\end{proposition}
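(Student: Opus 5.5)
The plan is to decompose $\mathcal S_3=\Symm^3(V)$ block by block using the Schur--Weyl decomposition \eqref{eq:qubit-schur-weyl} in each of the four replicas separately, and then diagonalise $R_3$ on each block. For $n=3$,
\[
(\C^2)^{\otimes3}\cong\Schur{(3)}\C^2\otimes\Specht{(3)}\oplus\Schur{(2,1)}\C^2\otimes\Specht{(2,1)},
\]
with Schur dimensions $4$ and $2$. Hence
\[
V^{\otimes3}=\bigoplus_{\lambda_1,\dots,\lambda_4}\Bigl(\bigotimes_a\Schur{\lambda_a}W_a\Bigr)\otimes\Bigl(\Specht{\lambda_1}\otimes\cdots\otimes\Specht{\lambda_4}\Bigr),
\]
and $\mathcal S_3$ is obtained by taking $\Sn_3$-invariants of the diagonal action on the Specht factor. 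Every $q_e$ acts only on the Specht factor, through the joint signs of $\rho_a(\transp ij)$. So on each block, $R_3$ is the multiplicity-space identity tensored with a small matrix on $\bigl(\bigotimes_a\Specht{\lambda_a}\bigr)^{\Sn_3}$. As a sanity check, the total dimension is $\binom{18}{3}=816=544+128+128+16$.

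I will treat the blocks according to the number $k$ of replicas carrying the standard module $\Specht{(2,1)}$. The invariant dimension is $\frac16\sum_\sigma\chi(\sigma)^k$, which equals $1,0,1,1,3$ for $k=0,\dots,4$.
\begin{itemize}
\item For $k=0$, all transpositions act as $+1$. Since both sign patterns in \eqref{eq:local-reaction-projection} need two $-1$'s, $q_e=0$, giving $4^4=256$ kernel vectors.
\item For $k=1$ there are no invariants.
\item For $k=2$, the invariant is $\Omega^{ab}$. If the standard replicas form a direct or crossed pair $\{1,3\},\{2,4\},\{1,4\},\{2,3\}$, the vector lies in $\mathcal D_3$ or $\mathcal X_3$ and is killed, giving $4\cdot 64=256$ more kernel vectors. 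If they are $\{1,2\}$ or $\{3,4\}$, then $q_e$ restricts to $\mathsf P_{a,-}\otimes\mathsf P_{b,-}$. Its expectation in $\Omega^{ab}$ is $\frac12$, since each transposition has one-dimensional $\pm1$ eigenspaces in $\Specht{(2,1)}$. Summing over three edges gives the eigenvalue $3/2$, with multiplicity $2\cdot(4\cdot4\cdot2\cdot2)=128$.
\item For $k=3$, say the trivial factor sits in replica $4$. Only the pattern $(-,-,+,+)$ survives. I compute $\ip{\Omega}{q_e\Omega}$ for the unique invariant of $\Specht{(2,1)}^{\otimes3}$, for which the structure constants of the real orthogonal model are convenient. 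I expect the value $\frac14$ per edge, hence eigenvalue $3/4$, with multiplicity $4\cdot(2^3\cdot4)=128$.
\end{itemize}

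The main work, and the step I expect to be the real obstacle, is the block $k=4$. There the invariant space $(\Specht{(2,1)}^{\otimes4})^{\Sn_3}$ is $3$-dimensional with multiplicity $2^4=16$. I will write $R_3$ on it as a $3\times3$ matrix in the basis given by the three pairing invariants $\Omega^{12}\otimes\Omega^{34}$, $\Omega^{13}\otimes\Omega^{24}$, $\Omega^{14}\otimes\Omega^{23}$. These are linearly independent, and their Gram matrix has off-diagonal entries $1/f^{(2,1)}=1/2$ by Lemma~\ref{lem:pairing-vector-overlap}. The direct and crossed pairings lie in $\Ker R_3$, so the matrix has rank at most one. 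Its unique non-zero eigenvalue is then the trace of $R_3$ on this block, which I compute as $3\Tr(q_e)$ restricted to invariants: average the character of $q_e$ over $\Sn_3$, using the explicit $2\times2$ matrices of the transpositions and $3$-cycles. This should give $9/4$ with multiplicity $16$, and $32$ further kernel vectors. Adding the kernel counts gives $256+256+32=544$, and the spectrum \eqref{eq:complete-graph-three-particle-spectrum} follows, with $\delta_3=3/4$.
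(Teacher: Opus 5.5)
Your proposal is correct and follows the paper's own argument: the same block decomposition by the set of replicas carrying $\Specht{(2,1)}$, the same invariant dimensions $1,0,1,1,3$ obtained by character averaging, the same pairing vector when two replicas are standard, and the same Schur multiplicities. The one variation is in the block where all four replicas are standard. There you get rank at most one from $\mathcal D_3,\mathcal X_3\subseteq\Ker R_3$ and read off the non-zero eigenvalue as the $\Sn_3$-averaged trace $3\cdot\tfrac34=\tfrac94$, whereas the paper solves the generalised eigenproblem for the Gram and compressed matrices in the basis $\mathbf z_1,\mathbf z_2,\mathbf z_3$, which spans the same three pairing invariants; the two values you left as expectations, $\tfrac14$ per edge in the three-standard block and the trace $\tfrac94$, are both correct.
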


The following finite-size inequality gives the bounds at $n=4,5$.
For $n\geqslant4$ and $r\in\{1,\ldots,n\}$, define the reaction Hamiltonian
with vertex $r$ deleted by
\[
 R_{K_n\setminus r}
 =\sum_{\substack{1\leqslant i<j\leqslant n\\i,j\neq r}}
  q_{\{i,j\}}
 \quad\text{on }\mathcal H_n\cong V^{\otimes n}.
\]

\begin{lemma}\label{lem:complete-graph-finite-size-inequality}
For $n\geqslant4$, the following inequality holds on $\mathcal H_n$:
\begin{equation}\label{eq:complete-graph-finite-size-operator-inequality}
 \bigl(R_{K_n}\bigr)^2
 -\frac{1}{n-3}\sum_{r=1}^n
  \bigl(R_{K_n\setminus r}\bigr)^2
 +\frac{1}{n-3}R_{K_n}
 \geqslant0.
\end{equation}
Consequently,
\begin{equation}\label{eq:complete-graph-finite-size-gap}
 \delta_n
 \geqslant\frac{(n-2)\delta_{n-1}-1}{n-3}
\end{equation}
whenever the right-hand side is positive.
\end{lemma}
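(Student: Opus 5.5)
This is a Knabe-type finite-size argument: expand both squares into products of the local projections $q_e$ and compare coefficients pair by pair. Write $R=R_{K_n}$ and $R_r=R_{K_n\setminus r}$. Since every $q_e$ is an orthogonal projection, $q_e^2=q_e$, and therefore
\[
 R^2=R+\sum_{e\neq f}q_eq_f .
\]

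Split the ordered pairs $(e,f)$ with $e\neq f$ into two classes: the overlapping pairs, which share exactly one vertex, and the disjoint pairs. A pair with $|e\cup f|=3$ survives in $R_r$ for exactly $n-3$ choices of $r$. A disjoint pair with $|e\cup f|=4$ survives for $n-4$ choices. A single edge survives for $n-2$ choices. Summing over $r$ gives
\[
 \sum_r R_r^2=(n-2)R+(n-3)\sum_{\text{overlap}}q_eq_f+(n-4)\sum_{\text{disjoint}}q_eq_f .
\]
Substituting this into the left-hand side of \eqref{eq:complete-graph-finite-size-operator-inequality}, the $R$-coefficients combine as $1-\frac{n-2}{n-3}+\frac1{n-3}=0$. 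The overlap terms cancel exactly, and the left-hand side reduces to
\[
 \frac{1}{n-3}\sum_{\text{disjoint}}q_eq_f .
\]
For disjoint edges $e,f$ the projections $q_e$ and $q_f$ act on different tensor factors, so they commute, and their product is again an orthogonal projection, hence positive. This proves the operator inequality.

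For \eqref{eq:complete-graph-finite-size-gap}, restrict to $\mathcal S_n$, which is invariant under $R$ and under every $R_r$. Take $\xi\in\mathcal S_n\ominus\mathcal G_n$ to be a normalised eigenvector of $R_n$ with eigenvalue $\delta_n$. By symmetry, $\langle\xi,R_r^2\xi\rangle=\|R_r\xi\|^2$ does not depend on $r$. Using the deletion isometry $\iota_n$ and Lemma~\ref{lem:two-body-deletion-identity},
\[
 \sum_r\|R_r\xi\|^2=n\|(I_V\otimes R_{n-1})\iota_n\xi\|^2 .
\]
The spectrum of $R_{n-1}$ lies in $\{0\}\cup[\delta_{n-1},\infty)$, so $R_{n-1}^2\geqslant\delta_{n-1}R_{n-1}$. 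This gives
\[
 \sum_r\|R_r\xi\|^2\geqslant\delta_{n-1}(n-2)\langle\xi,R_n\xi\rangle=\delta_{n-1}(n-2)\delta_n .
\]
Plugging this into the inequality yields
\[
 \delta_n^2-\frac{(n-2)\delta_{n-1}}{n-3}\delta_n+\frac{\delta_n}{n-3}\geqslant0 .
\]
Dividing by $\delta_n>0$ gives \eqref{eq:complete-graph-finite-size-gap}.

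The main point needing care is the identity $\sum_r\|R_r\xi\|^2=n\|(I_V\otimes R_{n-1})\iota_n\xi\|^2$. It should follow by the same argument as Lemma~\ref{lem:two-body-deletion-identity}, applied to matrix elements of the squared deleted operator: each term $\langle\xi,R_r^2\xi\rangle$ equals the corresponding term for the canonical deleted vertex, by permutation symmetry of $\xi$.
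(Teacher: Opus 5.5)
Your proof is correct and is essentially the paper's. The same edge-survival count reduces the left-hand side to $\frac{1}{n-3}\sum_{\text{disjoint}}q_eq_f$, and the gap bound again comes from $R_{K_n\setminus r}^2\geqslant\delta_{n-1}R_{K_n\setminus r}$ as quadratic forms on $\mathcal S_n$, combined with the fact that each edge survives $n-2$ deletions. You evaluate at an eigenvector via $\iota_n$ and Lemma~\ref{lem:two-body-deletion-identity}, whereas the paper sums over $r$ to obtain $R_n^2\geqslant\frac{(n-2)\delta_{n-1}-1}{n-3}R_n$ directly.

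One small correction: $\mathcal S_n$ is invariant under $R_{K_n}$ and under $\sum_r R_{K_n\setminus r}$, but not under an individual $R_{K_n\setminus r}$. You never use that claim, since you only evaluate quadratic forms at $\xi\in\mathcal S_n$.
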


\begin{proof}
Expand the squares in
\eqref{eq:complete-graph-finite-size-operator-inequality}.  Each diagonal
term $(q_e)^2=q_e$ occurs in
$n-2$ deleted Hamiltonians, with total coefficient
$1-(n-2)/(n-3)+1/(n-3)=0$.  Each ordered cross term for intersecting edges
occurs $n-3$ times and also cancels.  A cross term for disjoint edges occurs
$n-4$ times, leaving coefficient $1/(n-3)$.  The corresponding projections
commute, so the remaining terms are positive.  This proves the operator
inequality.

For the gap bound, single out the $r$th particle.  On
$V\otimes\mathcal S_{n-1}$, the deleted Hamiltonian is
$I_V\otimes R_{n-1}$.  Since
$\mathcal S_n\subseteq
V\otimes\mathcal S_{n-1}$, we have
\[
 \bigl(R_{K_n\setminus r}\bigr)^2
 \geqslant\delta_{n-1}R_{K_n\setminus r}
\]
in the sense of quadratic forms on $\mathcal S_n$.
Every edge belongs to $n-2$ deleted complete graphs, so on $\mathcal S_n$,
\[
 \sum_{r=1}^nR_{K_n\setminus r}
 =(n-2)R_n.
\]
Substituting into
\eqref{eq:complete-graph-finite-size-operator-inequality} gives
\[
 \bigl(R_n\bigr)^2
 \geqslant
 \frac{(n-2)\delta_{n-1}-1}{n-3}
 R_n,
\]
on $\mathcal S_n$.  The gap bound follows.
\end{proof}

Starting from $\delta_3=3/4$, we apply
Lemma~\ref{lem:complete-graph-finite-size-inequality} with $n=4$ and $n=5$
to obtain the following bounds.

\begin{corollary}\label{cor:complete-graph-four-five-gaps}
The finite-size gaps satisfy
\[
 \delta_4\geqslant2\delta_3-1=\frac12,
 \qquad
 \delta_5\geqslant\frac{3\delta_4-1}{2}\geqslant\frac14.
\]
\end{corollary}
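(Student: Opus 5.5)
The corollary follows by applying Lemma~\ref{lem:complete-graph-finite-size-inequality} twice, starting from the exact three-particle value $\delta_3=3/4$ supplied by Proposition~\ref{prop:complete-graph-three-particle-spectrum}. At each step I will first check that the right-hand side of \eqref{eq:complete-graph-finite-size-gap} is positive, since the lemma only applies in that case.

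For $n=4$, the lemma gives
\[
 \delta_4\geqslant\frac{(4-2)\delta_3-1}{4-3}=2\delta_3-1.
\]
With $\delta_3=3/4$ this equals $1/2>0$, so the positivity hypothesis holds and $\delta_4\geqslant1/2$.

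For $n=5$, the lemma gives $\delta_5\geqslant(3\delta_4-1)/2$ whenever $3\delta_4-1>0$. The map $x\mapsto(3x-1)/2$ is increasing, and $\delta_4\geqslant1/2$ gives $3\delta_4-1\geqslant1/2>0$. So the lemma applies, and monotonicity yields $\delta_5\geqslant(3\cdot\tfrac12-1)/2=1/4$.

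The only point that needs care is the positivity hypothesis at each step, and it holds here with room to spare. I also note that $\delta_n$ is well defined because $R_n$ has a positive eigenvalue on $\mathcal S_n$. The lemma's derivation uses $R_{K_n\setminus r}^2\geqslant\delta_{n-1}R_{K_n\setminus r}$, which holds whether or not $R_{n-1}$ vanishes on some sectors, so nothing beyond the stated lemma is needed.
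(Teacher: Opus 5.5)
Your proposal is correct and is the paper's argument: apply Lemma~\ref{lem:complete-graph-finite-size-inequality} at $n=4$ starting from $\delta_3=3/4$, then again at $n=5$, using that the right-hand side of \eqref{eq:complete-graph-finite-size-gap} is increasing in $\delta_{n-1}$. Your explicit check that the right-hand side is positive at each step is a small addition that the paper leaves implicit.
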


\begin{proof}[Proof of Theorem~\ref{thm:uniform-complete-graph-gap}]
Since $\delta_2=1$, Proposition~\ref{prop:complete-graph-three-particle-spectrum}
and Corollary~\ref{cor:complete-graph-four-five-gaps} give
$\delta_n\geqslant1/4>c_0$ for $2\leqslant n\leqslant5$.
For $n\geqslant6$, iterating
Proposition~\ref{prop:complete-graph-gap-recurrence} gives
\begin{equation}\label{eq:finite-deletion-product}
 \delta_n
 \geqslant\frac14\prod_{j=6}^nr_j.
\end{equation}
The factors lie in $(0,1)$, so it remains to evaluate their infinite
product.  Using $j^2-6j+7=(j-3-\sqrt2)(j-3+\sqrt2)$, we obtain
\begin{equation}\label{eq:infinite-deletion-product-gamma}
 \prod_{j=6}^{\infty}r_j
 =\frac{12}{5\Gamma(3-\sqrt2)\Gamma(3+\sqrt2)}.
\end{equation}
Indeed, the product up to $N$ is
\[
 \frac{
  N\Gamma(N-2-\sqrt2)\Gamma(3)
  \Gamma(N-2+\sqrt2)\Gamma(4)
 }{
  5\Gamma(3-\sqrt2)\Gamma(N-2)
  \Gamma(3+\sqrt2)\Gamma(N-1)
 },
\]
and taking $N\to\infty$ gives
\eqref{eq:infinite-deletion-product-gamma}.

The Gamma recurrence and Euler's reflection formula give
\[
 \Gamma(3-\sqrt2)\Gamma(3+\sqrt2)
 =-2\frac{\pi\sqrt2}{\sin(\pi\sqrt2)}.
\]
Consequently,
\begin{equation}\label{eq:infinite-deletion-product-value}
 \prod_{j=6}^{\infty}r_j
 =-\frac{6\sin(\pi\sqrt2)}{5\pi\sqrt2}.
\end{equation}
Since $1<\sqrt2<3/2$, this product is positive.  Combining
\eqref{eq:finite-deletion-product} and
\eqref{eq:infinite-deletion-product-value} gives $\delta_n\geqslant c_0$
for $n\geqslant6$, as required.
\end{proof}

\begin{proof}[Proof of Theorem~\ref{thm:all-to-all}]
We have $H_{K_n}=E_{K_n}+R_{K_n}$.
Both sums commute with permutations of the sites, so they
preserve $\mathcal S_n$ and its orthogonal complement.
On $\mathcal S_n$, the interchange term vanishes and
Theorem~\ref{thm:uniform-complete-graph-gap} gives
\[
 H_{K_n}
 \geqslant c_0
 \left(\mathsf Q_n-\mathsf P_n\right).
\]
On the orthogonal complement, Aldous' theorem on $K_n$ gives
\[
 E_{K_n}
 \geqslant\frac n2\left(I-\mathsf Q_n\right).
\]
Since the reaction term is positive and $n/2\geqslant c_0$, this proves
\eqref{eq:complete-graph-moment-gap}.

\end{proof}

\section{A Poisson equation on unordered pairs}\label{sec:edge-defect}

We solve a Poisson equation for the centred indicator of the path edges on
the permutation module of unordered pairs.  The resulting identities for
the Moore--Penrose inverse will be used to bound the off-diagonal reaction
block.  Throughout this section, let $n\geqslant4$ and write
$[n]=\{1,\ldots,n\}$.

Let $\mathcal W_n=\ell^2(\binom{[n]}{2})$, with orthonormal basis
$\mathbf e_{\{i,j\}}$, $1\leqslant i<j\leqslant n$.  The symmetric group
acts by $\pi_n(\sigma)\mathbf e_{\{i,j\}}
=\mathbf e_{\{\sigma(i),\sigma(j)\}}$.  Young's rule \cite[Theorem~2.11.2]{Sagan} gives the multiplicity-free
decomposition
\begin{equation}\label{eq:edge-module-decomposition}
 \mathcal W_n
 \cong
 \Specht{(n)}\oplus\Specht{(n-1,1)}\oplus\Specht{(n-2,2)}.
\end{equation}
We denote these summands by $\mathcal W_0$, $\mathcal W_1$ and
$\mathcal W_2$, respectively, and write $\Pi_j$ for the orthogonal
projection onto $\mathcal W_j$.

Let $\mathbf d$ be the centred indicator of the path edges, given by
\begin{equation}\label{eq:path-edge-defect}
 \mathbf d
 =\sum_{i=1}^{n-1}\mathbf e_{\{i,i+1\}}
  -\frac{2}{n}\sum_{1\leqslant i<j\leqslant n}\mathbf e_{\{i,j\}}.
\end{equation}
Since $2/n=(n-1)/\binom n2$, this vector is orthogonal to
$\mathcal W_0$.  For $j=1,2$, let $\mathbf d_j=\Pi_j\mathbf d$.

\begin{proposition}\label{prop:path-edge-component-norms}
We have
\begin{equation}\label{eq:path-edge-component-norms}
 \left\|\mathbf d_1\right\|^2=\frac{2}{n},
 \qquad
 \left\|\mathbf d_2\right\|^2=n-3.
\end{equation}
In particular,
\begin{equation}\label{eq:path-edge-defect-norm}
 \left\|\mathbf d\right\|^2
 =\frac{(n-1)(n-2)}{n}.
\end{equation}
\end{proposition}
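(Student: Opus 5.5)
First compute $\|\mathbf d\|^2$ directly; then compute $\|\mathbf d_1\|^2$ by identifying $\mathcal W_0\oplus\mathcal W_1$ with the range of a degree map; finally take $\|\mathbf d_2\|^2=\|\mathbf d\|^2-\|\mathbf d_1\|^2$, since $\mathbf d\perp\mathcal W_0$.

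\emph{Step 1: the total norm.} Write $\mathbf d=\mathbf 1_P-\frac2n\mathbf 1$, where $\mathbf 1_P$ is the indicator of the $n-1$ path edges and $\mathbf 1$ is the all-ones vector on $\binom{[n]}2$. Then
\[
 \|\mathbf d\|^2=(n-1)-\frac{4}{n}(n-1)+\frac{4}{n^2}\binom n2=(n-1)-\frac{2(n-1)}{n}=\frac{(n-1)(n-2)}{n},
\]
which is \eqref{eq:path-edge-defect-norm}.

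\emph{Step 2: the standard component.} Let $B:\mathcal W_n\to\C^n$ be the incidence map $B\mathbf e_{\{i,j\}}=\mathbf e_i+\mathbf e_j$. It is $\Sn_n$-equivariant, and $\Ran B^*$ is the copy of $\Specht{(n)}\oplus\Specht{(n-1,1)}$ inside $\mathcal W_n$, namely $\mathcal W_0\oplus\mathcal W_1$. By multiplicity-freeness of \eqref{eq:edge-module-decomposition} this copy is unique, and $B$ is nonzero on both summands for $n\geqslant3$. Now $BB^*=(n-2)I+J$, which acts as $n-2$ on $\mathbf 1^\perp$. So for $\mathbf y\perp\mathcal W_0$ we get $\|\Pi_1\mathbf y\|^2=\|B\mathbf y\|^2/(n-2)$. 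For $\mathbf y=\mathbf d$, the vector $B\mathbf d$ is the degree vector of the path minus its mean $2(n-1)/n$, that is, $(1,2,\ldots,2,1)-\frac{2(n-1)}{n}\mathbf 1$. Its squared norm is the variance sum
\[
 2\cdot1+(n-2)\cdot4-\frac{(2(n-1))^2}{n}=\frac{2(n-2)}{n}.
\]
Dividing by $n-2$ gives $\|\mathbf d_1\|^2=2/n$.

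\emph{Step 3: subtraction.} Finally,
\[
 \|\mathbf d_2\|^2=\frac{(n-1)(n-2)-2}{n}=\frac{n^2-3n}{n}=n-3.
\]

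The only delicate point is justifying that $\Pi_1$ is given by $B^*(BB^*)^{+}B$ restricted to $\mathbf 1^\perp$. I would handle this by noting that $\Ker B$ is an invariant complement that contains no trivial or standard constituent, so by the multiplicity-free decomposition $\Ker B=\mathcal W_2$. The hypothesis $n\geqslant4$ ensures that $\mathcal W_2$ is nonzero and that the decomposition \eqref{eq:edge-module-decomposition} holds as stated.
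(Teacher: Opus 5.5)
Your proof is correct and follows essentially the paper's route. Your $B^*$ restricted to zero-sum vectors is the paper's vertex-to-edge map $J_n$, and your identity $\|\Pi_1\mathbf y\|^2=\|B\mathbf y\|^2/(n-2)$ is the paper's computation $\mathbf d_1=J_n\mathbf u$, with $\mathbf u$ the row-sum vector of $\mathbf d$ divided by $n-2$; the direct computation of $\|\mathbf d\|^2$ and the subtraction are also the same, and the paper writes $\mathbf u$ out explicitly only because it reuses it in Proposition~\ref{prop:componentwise-edge-resolvent}.
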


\begin{proof}
Let $\mathcal V_n$ be the standard module from
Section~\ref{sec:transport}.  The equivariant map
\begin{equation}\label{eq:vertex-to-edge-map}
 J_n:\mathcal V_n\longrightarrow\mathcal W_n,
 \qquad
 (J_n\mathbf u)_{\{i,j\}}=u_i+u_j
\end{equation}
has range $\mathcal W_1$ and satisfies
\begin{equation}\label{eq:vertex-to-edge-norm}
 \|J_n\mathbf u\|^2
 =\sum_{i<j}|u_i+u_j|^2
 =(n-2)\sum_{i=1}^n|u_i|^2.
\end{equation}
The row sums satisfy $\sum_{j\neq i}(J_n\mathbf u)_{\{i,j\}}=(n-2)u_i$,
whereas
\[
 \sum_{j\neq i}d_{\{i,j\}}
 =\begin{cases}
  -(n-2)/n,&i=1,n,\\[1mm]
  2/n,&2\leqslant i\leqslant n-1.
 \end{cases}
\]
The adjoint $J_n^*$ sends an edge function to the zero-sum part of its
row-sum vector.  Thus $\mathbf d_1=J_n\mathbf u$, where
\begin{equation}\label{eq:standard-vertex-defect}
 (\mathbf u)_i
 =\begin{cases}
  -1/n,&i=1,n,\\[1mm]
  2/[n(n-2)],&2\leqslant i\leqslant n-1.
 \end{cases}
\end{equation}
Since $\|\mathbf u\|^2=2/[n(n-2)]$,
\eqref{eq:vertex-to-edge-norm} gives
$\|\mathbf d_1\|^2=2/n$.
Also,
\[
 \|\mathbf d\|^2
 =(n-1)\left(\frac{n-2}{n}\right)^2
 +\left(\binom n2-(n-1)\right)\frac{4}{n^2}
 =\frac{(n-1)(n-2)}{n}.
\]
Subtracting $\|\mathbf d_1\|^2$ gives
$\|\mathbf d_2\|^2=n-3$.
\end{proof}

To solve the Poisson equation, define
\begin{equation}\label{eq:edge-exclusion-laplacian}
 L_n^{(2)}
 =\frac12\sum_{k=1}^{n-1}
 \left(I-\pi_n(\transp{k}{k+1})\right).
\end{equation}
This is the positive Laplacian of the two-particle simple exclusion process
on the path, with jump rate $1/2$ across each available edge.  Its kernel is
$\mathcal W_0$.  For $1\leqslant i<j\leqslant n$, set
\begin{equation}\label{eq:edge-poisson-coefficients}
 g_{ij}
 =\frac{
   \left(i-\frac{n+1}{2}\right)^2
   +\left(j-\frac{n+1}{2}\right)^2
  }{n}
  -(j-i)
  +\frac{(n+1)^2}{6n},
\end{equation}
and let
$\mathbf g
=\sum_{i<j}g_{ij}\mathbf e_{\{i,j\}}$.

\begin{proposition}\label{prop:edge-poisson-equation}
The vector $\mathbf g$ has zero mean and satisfies
\begin{equation}\label{eq:edge-poisson-equation}
 L_n^{(2)}\mathbf g
 =\mathbf d.
\end{equation}
Moreover,
\begin{equation}\label{eq:total-edge-resolvent}
 \frac{
  \ip{\mathbf d}
      {(L_n^{(2)})^+\mathbf d}
 }{
  \|\mathbf d\|^2
 }
 =\frac{n-1}{3}.
\end{equation}
\end{proposition}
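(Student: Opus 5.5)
The plan is to verify the three claims in order: zero mean, the Poisson equation, and then the resolvent identity, which follows from the first two.

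\emph{Zero mean.} Put $c=(n+1)/2$ and sum $g_{ij}$ over all $\binom n2$ pairs. Each vertex $i$ lies in $n-1$ pairs, so
\[
\sum_{i<j}\bigl[(i-c)^2+(j-c)^2\bigr]=(n-1)\sum_i(i-c)^2=\frac{(n-1)^2n(n+1)}{12}.
\]
We also have $\sum_{i<j}(j-i)=\binom{n+1}{3}$. The constant $(n+1)^2/(6n)$ is then chosen so that the total vanishes, and this is a direct check.

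\emph{Poisson equation.} I will compute $(L_n^{(2)}\mathbf g)_{\{i,j\}}$ case by case. The coordinate equals $\tfrac12\sum_k(g_{\{i,j\}}-g_{\sigma_k\{i,j\}})$, where $\sigma_k=\transp{k}{k+1}$. A transposition acts nontrivially only when it moves exactly one of $i,j$. (It fixes the pair when it swaps $i$ and $j$ themselves, which happens when $j=i+1$.)

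For $j-i\geqslant2$ and $1<i$, $j<n$, each of $i,j$ has two neighbours. The quadratic part $f(x)=(x-c)^2/n$ contributes $\tfrac12\bigl(-f''\bigr)\cdot2=-2/n$ in total, from $\tfrac12\cdot(-2\cdot\tfrac1n\cdot2)$ per coordinate, summed over the two coordinates. The linear part $-(j-i)$ is harmonic there. This gives $-2/n$, which matches $d_{\{i,j\}}$ for non-edges.

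At a boundary, for example $i=1$, only one move of $i$ exists. Then the quadratic term gives $\tfrac12[f(1)-f(2)]$, and the linear term $-(j-i)$ contributes $\tfrac12\cdot(-1)$. I expect these boundary corrections to cancel: $f(1)-f(2)=\bigl((1-c)^2-(2-c)^2\bigr)/n=(2c-3)/n=(n-2)/n$. Combined with the interior-type half contribution, this should again produce $-2/n$. This uses the identity $\tfrac12[(f(1)-f(2))-1]=-\tfrac1{n}$, and it explains why the quadratic coefficient is $1/n$.

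For adjacent pairs $j=i+1$, the inner moves are blocked. The linear part then produces $+1$ from the two outward moves, each increasing $j-i$ by one with weight $\tfrac12$, while the quadratic part contributes as before. The total should be $1-2/n=d_{\{i,i+1\}}$. The corner cases $\{1,2\}$, $\{n-1,n\}$ and $\{1,n\}$ need separate checks.

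This bookkeeping of boundary and adjacency cases is the main obstacle, although it is only routine algebra. To organise it, I will write $L_n^{(2)}\mathbf g=\tfrac12\sum(\text{discrete Laplacian in }(i,j))$ with Neumann reflection at the boundary and at the diagonal. The quadratic part handles the boundary, and the $-(j-i)$ part handles the diagonal.

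\emph{Resolvent identity.} Since $\mathbf g\perp\mathcal W_0=\Ker L_n^{(2)}$, we have $(L_n^{(2)})^+\mathbf d=\mathbf g$. Hence
\[
\ip{\mathbf d}{(L_n^{(2)})^+\mathbf d}=\ip{\mathbf d}{\mathbf g}=\sum_{i=1}^{n-1}g_{i,i+1},
\]
where the last step uses that $\mathbf g$ has zero mean, so the centring term drops out. At $j=i+1$,
\[
g_{i,i+1}=\frac{(i-c)^2+(i+1-c)^2}{n}-1+\frac{(n+1)^2}{6n}.
\]
Summing over $i$ gives $\bigl(2\sum_k(k-c)^2-(1-c)^2-(n-c)^2\bigr)/n-(n-1)+(n-1)(n+1)^2/(6n)$, which simplifies using $\sum_k(k-c)^2=n(n^2-1)/12$. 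I expect the result $(n-1)^2(n-2)/(3n)$. Dividing by $\|\mathbf d\|^2=(n-1)(n-2)/n$ from Proposition~\ref{prop:path-edge-component-norms} then yields $(n-1)/3$.
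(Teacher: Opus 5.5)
Your proposal is correct and follows essentially the same route as the paper: a case-by-case verification of $L_n^{(2)}\mathbf g=\mathbf d$, the zero-mean check, and then $\ip{\mathbf d}{(L_n^{(2)})^+\mathbf d}=\sum_i g_{i,i+1}=(n-1)^2(n-2)/(3n)$ divided by $\|\mathbf d\|^2$. The paper writes $g_{ij}=i(i-1)/n+(j-1)(j-2n)/n+\text{const}$, which is your quadratic-plus-linear split with the Neumann conditions at $1$ and $n$ built in. Two expository slips are worth fixing. In the interior case each coordinate contributes $-1/n$, not $-2/n$. In the adjacent case the quadratic part is unchanged because the two blocked moves would contribute $\tfrac12[f(i)-f(i+1)]+\tfrac12[f(i+1)-f(i)]=0$.
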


\begin{proof}
We may write
\[
 g_{ij}
 =\frac{i(i-1)}{n}
  +\frac{(j-1)(j-2n)}{n}
  +\frac{2(n^2-n+1)}{3n}.
\]
If $j>i+1$, each coordinate contributes $-1/n$ to
$(L_n^{(2)}\mathbf g)_{\{i,j\}}$;
the same value holds at the endpoints by the one-sided difference.  Hence
$(L_n^{(2)}\mathbf g)_{\{i,j\}}=-2/n$.
For adjacent pairs with $2\leqslant i\leqslant n-2$, the two moves towards the
occupied site are blocked, so
\[
 \bigl(L_n^{(2)}\mathbf g\bigr)_{\{i,i+1\}}
 =\frac12(g_{i,i+1}-g_{i-1,i+1})
  +\frac12(g_{i,i+1}-g_{i,i+2})
 =\frac{n-2}{n}.
\]
At $i=1$ and $i=n-1$, the single available move gives the same value.
This proves \eqref{eq:edge-poisson-equation}.

Using
\[
 \sum_{k=1}^n\left(k-\frac{n+1}{2}\right)^2
 =\frac{n(n^2-1)}{12},
 \qquad
 \sum_{i<j}(j-i)=\frac{n(n^2-1)}{6}
\]
in \eqref{eq:edge-poisson-coefficients}, we obtain $\sum_{i<j}g_{ij}=0$.
Thus $\mathbf g
=(L_n^{(2)})^+\mathbf d$ and
\[
 \ip{\mathbf d}{\mathbf g}
 =\sum_{i=1}^{n-1}g_{i,i+1}
 =\frac{(n-1)^2(n-2)}{3n}.
\]
Now \eqref{eq:path-edge-defect-norm} gives
\eqref{eq:total-edge-resolvent}.
\end{proof}

For the coupling estimate, we need the same identity on each non-trivial
irreducible component.

\begin{proposition}\label{prop:componentwise-edge-resolvent}
For $j=1,2$, we have
\begin{equation}\label{eq:componentwise-edge-resolvent}
 \ip{\mathbf d_j}
     {(L_n^{(2)})^+\mathbf d_j}
 =\frac{n-1}{3}\|\mathbf d_j\|^2.
\end{equation}
\end{proposition}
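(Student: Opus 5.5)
The key point is that $L_n^{(2)}$ is not $\Sn_n$-equivariant, so it need not preserve the decomposition \eqref{eq:edge-module-decomposition}. However, it does preserve $\mathcal W_1=\Ran J_n$, and therefore also $\mathcal W_1^\perp=\mathcal W_0\oplus\mathcal W_2$. The plan is to prove this invariance, compute the $j=1$ Rayleigh quotient explicitly on the vertex side, and then deduce the $j=2$ case from Proposition~\ref{prop:edge-poisson-equation} by subtraction.

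\textbf{Step 1: $L_n^{(2)}$ intertwines with the path Laplacian.} I will show
\[
 L_n^{(2)}J_n\mathbf w=J_n\bigl(\tfrac12L_{P_n}\mathbf w\bigr)
 \qquad\text{for all }\mathbf w\in\mathcal V_n .
\]
The operator $\pi_n(\transp{k}{k+1})$ acts on $J_n\mathbf w$ by swapping $w_k$ and $w_{k+1}$ in the occupied positions. On a pair containing exactly one of $k,k+1$, the difference $(I-\pi_n(\transp{k}{k+1}))J_n\mathbf w$ therefore equals the one-particle difference $w_k-w_{k+1}$ or its negative. On a pair containing both $k$ and $k+1$, the blocked exclusion move changes nothing, because $w_k+w_{k+1}$ is symmetric. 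This is precisely the contribution of the corresponding independent-walker move to an additive function. Summing over $k$ gives the identity.

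Consequently $\Ran J_n=\mathcal W_1$ is $L_n^{(2)}$-invariant. Since $L_n^{(2)}$ is self-adjoint, $\mathcal W_1^\perp$ is invariant as well, and so is $\mathcal W_2=\mathcal W_1^\perp\ominus\mathcal W_0$, because $\mathcal W_0=\Ker L_n^{(2)}$. Hence $(L_n^{(2)})^+$ commutes with $\Pi_1$, and the cross terms vanish:
\[
 \ip{\mathbf d_1}{(L_n^{(2)})^+\mathbf d_2}=0 .
\]

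\textbf{Step 2: the case $j=1$.} Write $\mathbf d_1=J_n\mathbf u$ with $\mathbf u$ as in \eqref{eq:standard-vertex-defect}. Solve $\tfrac12L_{P_n}\mathbf w=\mathbf u$ with $\mathbf w\in\mathcal V_n$. Since $\mathbf u$ is constant on the interior with boundary corrections, I expect $w_i=\alpha\,(i-\tfrac{n+1}{2})^2+\beta$. The interior equation fixes $\alpha=-2/[n(n-2)]$; the endpoint equations are then consistent, and zero sum fixes $\beta$.

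By Step 1, $L_n^{(2)}(J_n\mathbf w)=\mathbf d_1$. Moreover $J_n\mathbf w$ lies in $\mathcal W_1\perp\Ker L_n^{(2)}$, so $(L_n^{(2)})^+\mathbf d_1=J_n\mathbf w$. Using the norm identity \eqref{eq:vertex-to-edge-norm} polarised,
\[
 \ip{\mathbf d_1}{(L_n^{(2)})^+\mathbf d_1}
 =\ip{J_n\mathbf u}{J_n\mathbf w}
 =(n-2)\ip{\mathbf u}{\mathbf w}.
\]
A short sum of squares then gives $(n-2)\ip{\mathbf u}{\mathbf w}=\tfrac{n-1}{3}\cdot\tfrac2n$, which is $\tfrac{n-1}{3}\|\mathbf d_1\|^2$ by Proposition~\ref{prop:path-edge-component-norms}. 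This is the only real computation, and I expect it to be the main obstacle. It is routine but sign- and boundary-sensitive, so I will verify the endpoint equations at $i=1$ and $i=n$ explicitly.

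\textbf{Step 3: the case $j=2$.} By Step 1 the cross terms vanish, so
\[
 \ip{\mathbf d}{(L_n^{(2)})^+\mathbf d}
 =\sum_{j=1,2}\ip{\mathbf d_j}{(L_n^{(2)})^+\mathbf d_j}.
\]
The left-hand side equals $\tfrac{n-1}{3}\bigl(\|\mathbf d_1\|^2+\|\mathbf d_2\|^2\bigr)$ by \eqref{eq:total-edge-resolvent}. Subtracting the $j=1$ identity from Step 2 yields the $j=2$ identity, proving \eqref{eq:componentwise-edge-resolvent}.
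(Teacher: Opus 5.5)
Your proposal is correct and follows essentially the paper's route: you use the intertwining $L_n^{(2)}J_n=J_nL_n^{(1)}$ and solve the one-particle Poisson equation for $\mathbf u$ explicitly (your quadratic $\mathbf w$ has exactly the paper's increments $(\mathbf h)_{k+1}-(\mathbf h)_k$). You then transport the $j=1$ identity through $J_n$ and obtain $j=2$ by subtraction from Proposition~\ref{prop:edge-poisson-equation}. One correction to your framing: although $L_n^{(2)}$ does not commute with $\pi_n(\Sn_n)$, it is a linear combination of the operators $\pi_n(\sigma)$, so it automatically preserves every submodule, including each $\mathcal W_j$; this is why the paper can assert the invariance directly, and your self-adjointness argument reaches the same conclusion.
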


\begin{proof}
Each summand in \eqref{eq:edge-module-decomposition} is invariant under
$L_n^{(2)}$.  By Proposition~\ref{prop:edge-poisson-equation}, it
therefore suffices to prove the identity on the standard component.

On $\mathcal V_n$, let
\[
 L_n^{(1)}
 =\frac12\sum_{k=1}^{n-1}
 \left(I-\rho(\transp{k}{k+1})\right).
\]
The map $J_n$ satisfies
\begin{equation}\label{eq:vertex-edge-laplacian-intertwining}
 L_n^{(2)}J_n=J_nL_n^{(1)}.
\end{equation}
Let $\mathbf h=(L_n^{(1)})^+\mathbf u$, where
$\mathbf u$ is given by \eqref{eq:standard-vertex-defect}.
The endpoint and interior equations give
\[
 (\mathbf h)_{k+1}-(\mathbf h)_k
 =\frac{2}{n}-\frac{4(k-1)}{n(n-2)},
 \qquad 1\leqslant k\leqslant n-1.
\]
Consequently,
\[
 \ip{\mathbf u}{(L_n^{(1)})^+\mathbf u}
 =\frac12\sum_{k=1}^{n-1}|(\mathbf h)_{k+1}-(\mathbf h)_k|^2
 =\frac{2(n-1)}{3n(n-2)}
 =\frac{n-1}{3}\|\mathbf u\|^2.
\]
Equations \eqref{eq:vertex-to-edge-norm} and
\eqref{eq:vertex-edge-laplacian-intertwining} give the identity for
$\mathbf d_1$.  Orthogonality and
\eqref{eq:total-edge-resolvent} give the remaining component.
\end{proof}

Thus the Rayleigh quotient of $(L_n^{(2)})^+$ at each $\mathbf d_j$ is
$(n-1)/3$, whereas its operator norm has order $n^2$.

\section{The off-diagonal reaction block}\label{sec:slow-fast}

We use the identities from Section~\ref{sec:edge-defect} to estimate the
off-diagonal reaction block.  Throughout this section, let $n\geqslant4$.  Write
\[
 A_n=\mathsf Q_nR_{P_n}\mathsf Q_n,
 \qquad
 B_n=(I-\mathsf Q_n)R_{P_n}\mathsf Q_n.
\]
Thus $A_n$ is the compression of the path reaction Hamiltonian to
$\mathcal S_n$, and $B_n$ is its off-diagonal block from
$\mathcal S_n$ to $\mathcal S_n^\perp$.

\begin{proposition}\label{prop:symmetric-reaction-compression}
On $\mathcal S_n$, we have
\begin{equation}\label{eq:symmetric-reaction-compression}
 A_n=\frac{2}{n}R_n.
\end{equation}
In particular,
\begin{equation}\label{eq:symmetric-reaction-gap}
 A_n\geqslant\frac{2c_0}{n}
 \left(\mathsf Q_n-\mathsf P_n\right).
\end{equation}
\end{proposition}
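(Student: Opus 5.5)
The plan is to average the path reaction sum over site permutations, using that $\mathsf Q_n$ absorbs every permutation unitary. Since $U_\sigma\mathsf Q_n=\mathsf Q_nU_\sigma=\mathsf Q_n$ for every $\sigma\in\Sn_n$, and $U_\sigma q_eU_\sigma^{-1}=q_{\sigma(e)}$, each edge term satisfies
\[
 \mathsf Q_nq_e\mathsf Q_n
 =\mathsf Q_nU_\sigma q_{e_0}U_\sigma^{-1}\mathsf Q_n
 =\mathsf Q_nq_{e_0}\mathsf Q_n
\]
for any fixed pair $e_0$ and any $\sigma$ with $\sigma(e_0)=e$. So all $\binom n2$ compressions $\mathsf Q_nq_e\mathsf Q_n$ coincide.

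Summing over all unordered pairs gives $\mathsf Q_nR_{K_n}\mathsf Q_n=\binom n2\,\mathsf Q_nq_{e_0}\mathsf Q_n$. Summing over the $n-1$ path edges gives $A_n=(n-1)\,\mathsf Q_nq_{e_0}\mathsf Q_n$. As noted before Proposition~\ref{prop:path-complete-ground-space-equality}, $R_{K_n}$ commutes with site permutations, so it preserves $\mathcal S_n$. Hence $\mathsf Q_nR_{K_n}\mathsf Q_n$ is exactly $R_n$ on $\mathcal S_n$, extended by zero elsewhere. Therefore
\[
 A_n=\frac{n-1}{\binom n2}R_n=\frac{2}{n}R_n,
\]
which is \eqref{eq:symmetric-reaction-compression}.

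For \eqref{eq:symmetric-reaction-gap}, I multiply Theorem~\ref{thm:uniform-complete-graph-gap} by $2/n$. This gives $A_n=\frac2nR_n\geqslant\frac{2c_0}{n}(\mathsf Q_n-\mathsf P_n)$, where both sides are read with $R_n$ extended by zero on $\mathcal S_n^\perp$. By Proposition~\ref{prop:path-complete-ground-space-equality}, $\mathsf P_n\leqslant\mathsf Q_n$, so the right-hand side is a projection supported in $\mathcal S_n$, consistent with the left.

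There is no substantial obstacle here. The only points needing care are the bookkeeping of where each operator is defined (on $\mathcal S_n$ versus extended by zero on $\mathcal H_n$) and the edge count ratio $(n-1)/\binom n2=2/n$.
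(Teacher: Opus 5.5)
Your proposal is correct and follows the paper's proof. The paper uses the same fact, that the matrix element $\langle\xi,q_{\{i,j\}}\zeta\rangle$ does not depend on the pair $\{i,j\}$ for $\xi,\zeta\in\mathcal S_n$; you phrase this as $\mathsf Q_nq_e\mathsf Q_n$ being independent of $e$. It then takes the edge-count ratio $(n-1)/\binom n2=2/n$ and multiplies Theorem~\ref{thm:uniform-complete-graph-gap} by $2/n$, exactly as you do.
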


\begin{proof}
For $\xi,\zeta\in\mathcal S_n$, the inner product
$\ip{\xi}{q_{\{i,j\}}\zeta}$ is independent of
$\{i,j\}$.  Since the path and complete graph have $n-1$ and $\binom n2$
edges, respectively,
\[
 \ip{\xi}{R_{P_n}\zeta}
 =\frac{2}{n}\ip{\xi}{R_n\zeta}.
\]
This proves \eqref{eq:symmetric-reaction-compression}, and
Theorem~\ref{thm:uniform-complete-graph-gap} gives
\eqref{eq:symmetric-reaction-gap}.
\end{proof}

Fix $\xi\in\mathcal S_n$ and define
\[
 \Phi_{\xi}:\mathcal W_n\longrightarrow\mathcal H_n,
 \qquad
 \Phi_{\xi}\mathbf e_{\{i,j\}}
 =q_{\{i,j\}}\xi.
\]
Since $\xi$ is fixed by simultaneous permutations in all four replicas,
\begin{equation}\label{eq:reaction-edge-to-state-equivariance}
 U_\sigma\Phi_{\xi}
 =\Phi_{\xi}\pi_n(\sigma)
 \qquad(\sigma\in\Sn_n).
\end{equation}
Averaging the path-edge sum over $\Sn_n$ gives its complete-graph average.
Thus \eqref{eq:path-edge-defect} gives
\begin{equation}\label{eq:slow-fast-block-as-edge-defect}
 B_n\xi=\Phi_{\xi}\mathbf d.
\end{equation}

\begin{proposition}\label{prop:slow-fast-contraction}
We have
\begin{equation}\label{eq:slow-fast-contraction}
 B_n^*B_n\leqslant A_n.
\end{equation}
\end{proposition}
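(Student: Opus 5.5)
The plan is to reduce the inequality to a trace identity for an $\Sn_n$-equivariant Gram matrix on $\mathcal W_n$. Fix $\xi\in\mathcal S_n$; it suffices to show $\|B_n\xi\|^2\leqslant\ip{\xi}{A_n\xi}=\ip{\xi}{R_{P_n}\xi}$.

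\emph{Step 1: the Gram matrix.} Consider $G_\xi=\Phi_\xi^*\Phi_\xi$ on $\mathcal W_n$, with entries $(G_\xi)_{e,f}=\ip{q_e\xi}{q_f\xi}$. By \eqref{eq:reaction-edge-to-state-equivariance}, and since each $U_\sigma$ is unitary, $G_\xi$ commutes with $\pi_n(\sigma)$ for every $\sigma\in\Sn_n$. The decomposition \eqref{eq:edge-module-decomposition} is multiplicity-free, so Schur's lemma gives
\[
 G_\xi=g_0\Pi_0+g_1\Pi_1+g_2\Pi_2,
 \qquad g_0,g_1,g_2\geqslant0,
\]
where nonnegativity holds because $G_\xi\geqslant0$.

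\emph{Step 2: express both sides in the $g_j$.} By \eqref{eq:slow-fast-block-as-edge-defect} and Proposition~\ref{prop:path-edge-component-norms},
\[
 \|B_n\xi\|^2=\ip{\mathbf d}{G_\xi\mathbf d}
 =g_1\|\mathbf d_1\|^2+g_2\|\mathbf d_2\|^2
 =\frac2n g_1+(n-3)g_2 .
\]
On the other side, symmetry of $\xi$ makes $a=\ip{\xi}{q_e\xi}=\|q_e\xi\|^2$ independent of the pair $e$. Hence $\ip{\xi}{R_{P_n}\xi}=(n-1)a$. Moreover, the diagonal of $G_\xi$ is constantly $a$. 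Taking the trace and using $\dim\mathcal W_1=n-1$ and $\dim\mathcal W_2=n(n-3)/2$ gives
\[
 \binom n2 a=g_0+(n-1)g_1+\frac{n(n-3)}{2}g_2 .
\]

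\emph{Step 3: compare.} Writing $\tfrac2n g_1=\tfrac{2}{n(n-1)}(n-1)g_1\leqslant\tfrac2n(n-1)g_1$ and $(n-3)g_2=\tfrac2n\cdot\tfrac{n(n-3)}2g_2$, we obtain
\[
 \|B_n\xi\|^2\leqslant\frac2n\Bigl(\binom n2a-g_0\Bigr)
 =(n-1)a-\frac{2g_0}{n}\leqslant(n-1)a=\ip{\xi}{A_n\xi}.
\]
This is \eqref{eq:slow-fast-contraction}.

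The only delicate point is Step 1. I must check that $G_\xi$ is genuinely equivariant, which uses that $U_\sigma q_eU_\sigma^{-1}=q_{\sigma(e)}$ and that $U_\sigma\xi=\xi$. I must also check that the Specht multiplicities in $\mathcal W_n$ are one, so that Schur's lemma applies on each summand. For $n\geqslant4$ the three summands are distinct, so the multiplicity-one condition holds. The naive alternative, the bound $R_{P_n}^2\leqslant2R_{P_n}$ from splitting the path into odd and even edges, loses a factor of $2$; the representation-theoretic weighting by $\|\mathbf d_j\|^2$ against $\dim\mathcal W_j$ is what makes the constant exactly $1$.
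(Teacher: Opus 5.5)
Your proof is correct and is essentially the paper's argument. Both apply Schur's lemma to the equivariant Gram operator $\Phi_\xi^*\Phi_\xi$ on the multiplicity-free module $\mathcal W_n$, use the ratios $\|\mathbf d_j\|^2/\dim\mathcal W_j\leqslant 2/n$, and use the trace $\Tr[\Phi_\xi^*\Phi_\xi]=\binom n2\|q_{\{1,2\}}\xi\|^2=\frac n2\ip{\xi}{A_n\xi}$; your $g_0$ term is exactly the $\Pi_0$-part of the trace that the paper discards.

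Only your closing aside is wrong, and it plays no role in the proof. The inequality $R_{P_n}^2\leqslant2R_{P_n}$ fails for $n\geqslant6$: reaction projections on disjoint edges commute without being orthogonal, so $\|R_{P_n}\|\geqslant\lfloor n/2\rfloor$.
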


\begin{proof}
The operator $(\Phi_{\xi})^*
\Phi_{\xi}$ commutes with $\pi_n(\Sn_n)$.  By Schur's lemma
and \eqref{eq:edge-module-decomposition}, it is scalar on each irreducible
summand.  Hence, for $j=1,2$,
\begin{equation}\label{eq:edge-component-image-norm}
 \|\Phi_{\xi}\mathbf d_j\|^2
 =\frac{\|\mathbf d_j\|^2}{\dim\mathcal W_j}
 \Tr\!\left[
  \Pi_j
  (\Phi_{\xi})^*\Phi_{\xi}
 \right].
\end{equation}
By Proposition~\ref{prop:path-edge-component-norms},
\[
 \frac{\|\mathbf d_1\|^2}
      {\dim\mathcal W_1}
 =\frac{2}{n(n-1)}\leqslant\frac2n,
 \qquad
 \frac{\|\mathbf d_2\|^2}
      {\dim\mathcal W_2}=\frac2n.
\]
The two image components lie in inequivalent isotypic subspaces and are
therefore orthogonal.  Summing \eqref{eq:edge-component-image-norm}, we obtain
\begin{align*}
 \|\Phi_{\xi}\mathbf d\|^2
 &\leqslant\frac2n\Tr\!\left[
   (\Phi_{\xi})^*\Phi_{\xi}\right]
 =\frac2n\sum_{i<j}\|q_{\{i,j\}}\xi\|^2\\
 &=(n-1)\|q_{\{1,2\}}\xi\|^2
  =\ip{\xi}{A_n\xi}.
\end{align*}
Now \eqref{eq:slow-fast-block-as-edge-defect} proves the result.
\end{proof}

The equivariant map also intertwines the interchange Hamiltonian and the edge
Laplacian.

\begin{lemma}\label{lem:interchange-edge-intertwining}
For every $\xi\in\mathcal S_n$, we have
\[
 E_{P_n}\Phi_{\xi}
 =\Phi_{\xi}L_n^{(2)}.
\]
\end{lemma}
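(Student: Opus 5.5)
Both sides will be checked on the basis vectors $\mathbf e_{\{i,j\}}$ of $\mathcal W_n$. The argument uses three facts. First, $E_{P_n}=\frac12\sum_{k=1}^{n-1}(I-S_{\{k,k+1\}})$, where $S_e=U_{\transp{k}{k+1}}$ is the site transposition. Second, $L_n^{(2)}=\frac12\sum_{k}(I-\pi_n(\transp{k}{k+1}))$. Third, the equivariance relation \eqref{eq:reaction-edge-to-state-equivariance}, namely $U_\sigma\Phi_\xi=\Phi_\xi\pi_n(\sigma)$, holds because $\xi\in\mathcal S_n$.

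Take $\sigma=\transp{k}{k+1}$ in \eqref{eq:reaction-edge-to-state-equivariance}. This gives $S_{\{k,k+1\}}\Phi_\xi=\Phi_\xi\,\pi_n(\transp{k}{k+1})$ for every $k$. Multiplying by $-\tfrac12$, adding $\tfrac12\Phi_\xi$, and summing over $k=1,\dots,n-1$ yields
\[
 E_{P_n}\Phi_\xi
 =\frac12\sum_{k=1}^{n-1}\Phi_\xi\bigl(I-\pi_n(\transp{k}{k+1})\bigr)
 =\Phi_\xi L_n^{(2)}.
\]
Here $E_{P_n}$ is written as $\sum_k p_{\{k,k+1\}}$ with $p_e=\tfrac12(I-S_e)$, as in Section~\ref{sec:local-structure}.

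The only point needing care is the equivariance itself. We need $U_\sigma q_{\{i,j\}}U_\sigma^{-1}=q_{\{\sigma(i),\sigma(j)\}}$, which holds because $q_e$ is built from the replica transpositions $s_e^{(a)}$, and conjugating these by a simultaneous site permutation relabels the edge. We also need $U_\sigma\xi=\xi$. Together these give $U_\sigma q_{\{i,j\}}\xi=q_{\{\sigma(i),\sigma(j)\}}\xi$, which is exactly \eqref{eq:reaction-edge-to-state-equivariance} on basis vectors. Since that identity is already stated in the paper, the lemma follows by linearity, and I do not expect any real obstacle.
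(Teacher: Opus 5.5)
Your proposal is correct and follows essentially the same route as the paper. You apply the equivariance $U_{\transp{k}{k+1}}\Phi_\xi=\Phi_\xi\pi_n(\transp{k}{k+1})$ with $S_{\{k,k+1\}}=U_{\transp{k}{k+1}}$, then sum $\tfrac12\bigl(I-S_{\{k,k+1\}}\bigr)$ over the path bonds; the paper does the same computation on basis vectors $\mathbf e_e$ and concludes by linearity.
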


\begin{proof}
For $\sigma_k=\transp{k}{k+1}$,
\eqref{eq:reaction-edge-to-state-equivariance} gives
$U_{\sigma_k}q_e\xi
=q_{\sigma_k e}\xi$.  Therefore
\[
 E_{P_n}\Phi_{\xi}\mathbf e_e
 =\frac12\sum_{k=1}^{n-1}
  \bigl(q_e\xi
       -q_{\sigma_k e}\xi\bigr)
 =\Phi_{\xi}L_n^{(2)}\mathbf e_e.
\]
The result follows by linearity.
\end{proof}

\begin{proposition}\label{prop:exact-slow-fast-resolvent}
On $\mathcal S_n$, we have
\begin{equation}\label{eq:exact-slow-fast-resolvent}
 B_n^*(E_{P_n})^+B_n
 =\frac{n-1}{3}B_n^*B_n.
\end{equation}
Consequently,
\begin{equation}\label{eq:slow-fast-resolvent-bound}
 B_n^*(E_{P_n})^+B_n
 \leqslant\frac{n-1}{3}A_n.
\end{equation}
\end{proposition}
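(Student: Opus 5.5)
We prove the identity one vector at a time, transferring the Poisson solution of Section~\ref{sec:edge-defect} through the map $\Phi_\xi$. Fix $\xi\in\mathcal S_n$. By \eqref{eq:slow-fast-block-as-edge-defect}, $B_n\xi=\Phi_\xi\mathbf d$. Proposition~\ref{prop:edge-poisson-equation} gives $\mathbf d=L_n^{(2)}\mathbf g$ with $\mathbf g=(L_n^{(2)})^+\mathbf d$. Lemma~\ref{lem:interchange-edge-intertwining} then yields
\[
 B_n\xi=\Phi_\xi L_n^{(2)}\mathbf g=E_{P_n}\Phi_\xi\mathbf g .
\]
Hence $B_n\xi\in\Ran E_{P_n}$. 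Write $\Pi_{\Ran}$ for the orthogonal projection onto $\Ran E_{P_n}$, so that $E_{P_n}^+E_{P_n}=\Pi_{\Ran}$. We obtain $(E_{P_n})^+B_n\xi=\Pi_{\Ran}\Phi_\xi\mathbf g$, and therefore
\[
 \ip{\xi}{B_n^*(E_{P_n})^+B_n\xi}
 =\ip{\Phi_\xi\mathbf d}{\Pi_{\Ran}\Phi_\xi\mathbf g}
 =\ip{\Phi_\xi\mathbf d}{\Phi_\xi\mathbf g}.
\]
The last equality holds because $\Phi_\xi\mathbf d$ already lies in $\Ran E_{P_n}$.

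Next we evaluate $\ip{\Phi_\xi\mathbf d}{\Phi_\xi\mathbf g}=\ip{\mathbf d}{(\Phi_\xi)^*\Phi_\xi\mathbf g}$. As in the proof of Proposition~\ref{prop:slow-fast-contraction}, the operator $(\Phi_\xi)^*\Phi_\xi$ commutes with $\pi_n(\Sn_n)$. Since \eqref{eq:edge-module-decomposition} is multiplicity-free, Schur's lemma makes it a scalar $\kappa_j\geqslant0$ on each $\mathcal W_j$. The Laplacian $L_n^{(2)}$ preserves each $\mathcal W_j$, so $\Pi_j\mathbf g=(L_n^{(2)})^+\mathbf d_j$. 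Decomposing into the two nontrivial components and applying Proposition~\ref{prop:componentwise-edge-resolvent} gives
\[
 \ip{\Phi_\xi\mathbf d}{\Phi_\xi\mathbf g}
 =\sum_{j=1,2}\kappa_j\ip{\mathbf d_j}{(L_n^{(2)})^+\mathbf d_j}
 =\frac{n-1}{3}\sum_{j=1,2}\kappa_j\|\mathbf d_j\|^2
 =\frac{n-1}{3}\|\Phi_\xi\mathbf d\|^2 .
\]
The right-hand side is $\frac{n-1}{3}\ip{\xi}{B_n^*B_n\xi}$.

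Both sides of \eqref{eq:exact-slow-fast-resolvent} are self-adjoint operators on $\mathcal S_n$, so equality of their quadratic forms gives the operator identity by polarisation. Combining this with Proposition~\ref{prop:slow-fast-contraction} yields \eqref{eq:slow-fast-resolvent-bound}.

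The main point requiring care is that the componentwise identity is needed, not just the total one in \eqref{eq:total-edge-resolvent}. The weights $\kappa_j$ depend on $\xi$ and may differ between $\mathcal W_1$ and $\mathcal W_2$, so a Rayleigh quotient of $(L_n^{(2)})^+$ at $\mathbf d$ alone would not control a $\kappa$-weighted sum. Proposition~\ref{prop:componentwise-edge-resolvent} gives the same ratio $(n-1)/3$ on each component, which removes this dependence.
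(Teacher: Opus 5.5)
Your proof is correct and follows essentially the paper's route: transfer the Poisson solution through $\Phi_\xi$ via Lemma~\ref{lem:interchange-edge-intertwining}, use Schur's lemma to make $(\Phi_\xi)^*\Phi_\xi$ scalar on each $\mathcal W_j$, and apply the componentwise identity of Proposition~\ref{prop:componentwise-edge-resolvent}. The only difference is cosmetic. You remove the range projection using $\Phi_\xi\mathbf d=E_{P_n}\Phi_\xi\mathbf g\in\Ran E_{P_n}$ and self-adjointness, and writing the pairing as $\ip{\mathbf d}{(\Phi_\xi)^*\Phi_\xi\mathbf g}$ makes the vanishing of the $\mathcal W_1$--$\mathcal W_2$ cross terms explicit; the paper instead works componentwise and uses the orthogonality of the nontrivial isotypic images to $\Ker E_{P_n}$.
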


\begin{proof}
Fix $\xi\in\mathcal S_n$.  The image of each non-trivial
edge component is orthogonal to
$\mathcal S_n=\Ker E_{P_n}$.
For $j=1,2$, Lemma~\ref{lem:interchange-edge-intertwining} gives
\[
 (E_{P_n})^+
 \Phi_{\xi}\mathbf d_j
 =\Phi_{\xi}
  (L_n^{(2)})^+\mathbf d_j.
\]
Since $(\Phi_{\xi})^*\Phi_{\xi}$ is scalar
on $\mathcal W_j$, Proposition~\ref{prop:componentwise-edge-resolvent}
implies
\[
 \ip{\Phi_{\xi}\mathbf d_j}{(E_{P_n})^+\Phi_{\xi}\mathbf d_j}
 =\frac{n-1}{3}\|\Phi_{\xi}\mathbf d_j\|^2.
\]
Summing over the two orthogonal components and using
\eqref{eq:slow-fast-block-as-edge-defect} proves
\eqref{eq:exact-slow-fast-resolvent} as a quadratic-form identity, and hence
as an operator identity.  Proposition~\ref{prop:slow-fast-contraction} gives
\eqref{eq:slow-fast-resolvent-bound}.
\end{proof}

We use the following dual form in the two-scale estimate.

\begin{proposition}\label{prop:dual-slow-fast-schur}
We have
\begin{equation}\label{eq:dual-slow-fast-schur}
 B_nA_n^+B_n^*\leqslant\frac{n-1}{3}E_{P_n}.
\end{equation}
\end{proposition}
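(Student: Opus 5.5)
The plan is to deduce \eqref{eq:dual-slow-fast-schur} from \eqref{eq:slow-fast-resolvent-bound} by a standard duality between the two Schur-complement forms. Write $E=E_{P_n}$, $A=A_n$, $B=B_n$, and $\kappa=(n-1)/3$. Proposition~\ref{prop:exact-slow-fast-resolvent} says $B^*E^+B\leqslant\kappa A$ on $\mathcal S_n$. I would first record two range facts. First, $\Ran B\subseteq\mathcal S_n^\perp=\Ran E$, since $\Ker E_{P_n}=\mathcal S_n$ by Proposition~\ref{prop:interchange-gap}. Second, $\Ker A\subseteq\Ker B$, which is immediate from $B^*B\leqslant A$ (Proposition~\ref{prop:slow-fast-contraction}), so $\Ran B^*\subseteq\Ran A$.

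Next, I would set $C=E^{+1/2}BA^{+1/2}$, an operator from $\mathcal S_n$ to $\mathcal S_n^\perp$. Conjugating the hypothesis by $A^{+1/2}$ gives
\[
C^*C=A^{+1/2}B^*E^+BA^{+1/2}\leqslant\kappa A^{+1/2}AA^{+1/2}\leqslant\kappa I.
\]
Hence $\|C\|^2\leqslant\kappa$, and so $CC^*\leqslant\kappa I$ as well, since $CC^*$ and $C^*C$ have the same nonzero spectrum. This is the heart of the argument, and it amounts to nothing more than the equality $\|C\|=\|C^*\|$.

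It then remains to transport $CC^*\leqslant\kappa I$ back. I would conjugate by $E^{1/2}$:
\[
E^{1/2}CC^*E^{1/2}=E^{1/2}E^{+1/2}BA^{+}B^*E^{+1/2}E^{1/2}.
\]
The operator $E^{1/2}E^{+1/2}$ is the projection onto $\Ran E$, and it fixes $\Ran B$ by the first range fact. Here I also use $A^{+1/2}A^{+1/2}=A^+$. The right-hand side therefore equals $BA^+B^*$. On the other side, $E^{1/2}(\kappa I)E^{1/2}=\kappa E$, which gives $BA^+B^*\leqslant\kappa E_{P_n}$, as claimed.

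The only point needing care is the bookkeeping of pseudo-inverses and kernels. Specifically, one must check that the projection $E^{1/2}E^{+1/2}$ acts trivially on $\Ran B$. One should also note that $\mathcal S_n^\perp$ is the domain in which $E$ is invertible, while $A$ may have a kernel (namely $\mathcal G_n$) on which $B$ also vanishes. Once the range inclusions above are recorded, the rest is the two-line norm-duality argument.
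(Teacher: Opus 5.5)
Your proof is correct and follows the paper's argument. Both use the same operator $C=E_{P_n}^{+1/2}B_nA_n^{+1/2}$, the bound $\|C\|^2\leqslant\frac{n-1}{3}$ from \eqref{eq:slow-fast-resolvent-bound}, and the inclusion $\Ran B_n\subseteq\Ran E_{P_n}$ to transfer $CC^*\leqslant\frac{n-1}{3}I$ back to $B_nA_n^+B_n^*\leqslant\frac{n-1}{3}E_{P_n}$. The paper phrases the last step through the factorisation $B_n=E_{P_n}^{1/2}CA_n^{1/2}$, which also uses $\Ker A_n\subseteq\Ker B_n$; your direct conjugation of $CC^*$ by $E_{P_n}^{1/2}$ does not need that inclusion.
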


\begin{proof}
By Proposition~\ref{prop:slow-fast-contraction} and the definition of $B_n$,
\[
 \Ker A_n\subseteq\Ker B_n,
 \qquad
 \Ran B_n\subseteq\mathcal S_n^\perp
 =\Ran E_{P_n}.
\]
Equation \eqref{eq:slow-fast-resolvent-bound} gives
\[
 \|(E_{P_n})^{+1/2}B_nA_n^{+1/2}\|^2
 \leqslant\frac{n-1}{3}.
\]
These kernel and range inclusions imply
\[
 B_n=(E_{P_n})^{1/2}
 \bigl[(E_{P_n})^{+1/2}B_nA_n^{+1/2}\bigr]A_n^{1/2}.
\]
Substituting this expression into $B_nA_n^+B_n^*$ gives
\eqref{eq:dual-slow-fast-schur}.
\end{proof}

\section{A two-scale Poincar\'e inequality}\label{sec:two-scale}

We combine the interchange gap, the symmetric reaction gap and the coupling
estimate in a two-scale Poincar\'e inequality.  We then apply it to prove
Theorem~\ref{thm:main}.

Let $H=E+R$ be a sum of positive semidefinite
operators on a finite-dimensional Hilbert space.  Let
$\mathsf Q$ and $\mathsf P$ be the orthogonal
projections onto $\Ker E$ and $\Ker H$, respectively, so that
$\mathsf P\leqslant\mathsf Q$.  Write
\[
 A=\mathsf Q R\mathsf Q,
 \qquad B=(I-\mathsf Q)R\mathsf Q,
 \qquad C=(I-\mathsf Q)R(I-\mathsf Q).
\]

\begin{proposition}\label{prop:two-scale-poincare}
Let $\alpha,\beta,\eta>0$ satisfy
\begin{align}
 E
 &\geqslant\alpha(I-\mathsf Q),
 \label{eq:abstract-fast-coercivity}\\
 A&\geqslant\beta
   (\mathsf Q-\mathsf P),
 \label{eq:abstract-slow-coercivity}\\
 BA^+B^*&\leqslant\eta E.
 \label{eq:abstract-coupling-control}
\end{align}
Then
\begin{equation}\label{eq:two-scale-poincare}
 H\geqslant
 \left(\frac{1+\eta}{\beta}
       +\frac{1}{\alpha}\right)^{-1}
 (I-\mathsf P).
\end{equation}
\end{proposition}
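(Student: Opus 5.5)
The plan is to prove \eqref{eq:two-scale-poincare} as a quadratic-form inequality.  Write an arbitrary vector as $\psi=u+v$ with $u=\mathsf Q\psi$ and $v=(I-\mathsf Q)\psi$.  We may assume $\psi\perp\Ker H$, since $H\mathsf P=0$ and $H$ commutes with $\mathsf P$; it then suffices to show $\ip{\psi}{H\psi}\geqslant\gamma\|\psi\|^2$, where $\gamma=((1+\eta)/\beta+1/\alpha)^{-1}$.  Because $\mathsf P\leqslant\mathsf Q$ and $\psi\perp\Ran\mathsf P$, we have $u\perp\Ran\mathsf P$ as well.  Hypothesis \eqref{eq:abstract-slow-coercivity} then gives $\ip{u}{Au}\geqslant\beta\|u\|^2$.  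Also $Ev=E\psi$, so \eqref{eq:abstract-fast-coercivity} gives $\alpha\|v\|^2\leqslant\ip{\psi}{E\psi}$.

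The main step is to control $\|u\|^2$ by the energy.  Since $R\geqslant0$, the natural tool is Cauchy--Schwarz for the form of $R$: $\ip{u}{Ru}=\ip{u}{R\psi}-\ip{u}{Rv}$.  To avoid losing a factor, I would instead work with the dual characterisation $\ip{u}{Au}=\sup_w\{2\operatorname{Re}\ip{w}{Au}-\ip{w}{Aw}\}$ and, more concretely, with a test vector $w=A^+Au$.  By positivity, $\ip{u}{A u}=\ip{u}{\mathsf Q R \mathsf Q u}$.  Writing $\mathsf QR\psi=Au+B^*v$ gives
\[
 \ip{u}{Au}=\operatorname{Re}\ip{u}{R\psi}-\operatorname{Re}\ip{Bu}{v}.
\]
For the first term, $|\ip{u}{R\psi}|^2\leqslant\ip{u}{Ru}\ip{\psi}{R\psi}=\ip{u}{Au}\ip{\psi}{R\psi}$.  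For the second term, note that $\Ker A\subseteq\Ker B$, which follows from $R\geqslant0$ since the block matrix is positive.  Hence $B=BA^{+}A$, and so
\[
 |\ip{Bu}{v}|\leqslant\|A^{+1/2}B^*v\|\,\|A^{1/2}u\|,
 \qquad
 \|A^{+1/2}B^*v\|^2=\ip{v}{BA^+B^*v}\leqslant\eta\ip{v}{Ev}=\eta\ip{\psi}{E\psi}
\]
by \eqref{eq:abstract-coupling-control}.  Dividing by $\ip{u}{Au}^{1/2}$ yields
\[
 \ip{u}{Au}^{1/2}\leqslant\ip{\psi}{R\psi}^{1/2}+\eta^{1/2}\ip{\psi}{E\psi}^{1/2}.
\]
Squaring and applying Cauchy--Schwarz in the form $(a+\sqrt\eta\,b)^2\leqslant(1+\eta)(a^2+b^2)$ gives
\[
 \beta\|u\|^2\leqslant\ip{u}{Au}\leqslant(1+\eta)\ip{\psi}{H\psi}.
\]

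Finally, combine the two estimates: $\|\psi\|^2=\|u\|^2+\|v\|^2\leqslant\bigl((1+\eta)/\beta+1/\alpha\bigr)\ip{\psi}{H\psi}$, where the $v$-bound uses $\ip{\psi}{E\psi}\leqslant\ip{\psi}{H\psi}$.  This is exactly \eqref{eq:two-scale-poincare}.  The only delicate point is the kernel inclusion $\Ker A\subseteq\Ker B$, which justifies $B=BA^+A$; it follows from $\ip{x}{Rx}\geqslant0$ applied to $x=u_0+tv$ with $Au_0=0$ and $t$ real and small.
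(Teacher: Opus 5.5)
Your proof is correct and follows the paper's argument. You split $\psi$ by $\mathsf Q$, control the fast part by $E\geqslant\alpha(I-\mathsf Q)$, prove $\ip{u}{Au}^{1/2}\leqslant\ip{\psi}{R\psi}^{1/2}+(\eta\ip{\psi}{E\psi})^{1/2}$ for the slow part, and sum the two orthogonal pieces; the only difference is how you reach that key inequality. You apply Cauchy--Schwarz for the form of $R$ to $\ip{u}{R\psi}$ and handle the cross term via $B=BA^+A$, whereas the paper expands $\ip{\psi}{R\psi}$ in blocks and uses the Schur-complement bound $C\geqslant BA^+B^*$ to get $\ip{\psi}{R\psi}\geqslant(\sqrt a-\sqrt b)^2$.
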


\begin{proof}
For an arbitrary vector $\psi$, write
$x=\mathsf Q\psi$ and
$y=(I-\mathsf Q)\psi$, and set
\[
 e
 =\ip{\psi}{E\psi},
 \qquad
 r
 =\ip{\psi}{R\psi}.
\]
By \eqref{eq:abstract-fast-coercivity},
\begin{equation}\label{eq:abstract-fast-component-bound}
 \|y\|^2
 \leqslant\frac{e}{\alpha}.
\end{equation}

Positivity of the reaction block matrix gives
$\Ker A\subseteq\Ker B$ and $C\geqslant BA^+B^*$.  Set
\[
 a=\ip{x}{Ax},
 \qquad
 b=\ip{y}{BA^+B^*y}.
\]
Cauchy--Schwarz gives
$|\ip{y}{Bx}|
\leqslant\sqrt{ab}$, and hence
\[
 r
 \geqslant a-2\sqrt{ab}+b=(\sqrt a-\sqrt b)^2.
\]
Using \eqref{eq:abstract-coupling-control}, we obtain
\[
 \sqrt a\leqslant
 \sqrt{r}
 +\sqrt{\eta e},
 \qquad
 a\leqslant(1+\eta)\ip{\psi}{H\psi}.
\]
Now \eqref{eq:abstract-slow-coercivity} gives
\begin{equation}\label{eq:abstract-slow-component-bound}
 \|(\mathsf Q-\mathsf P)\psi\|^2
 \leqslant\frac{1+\eta}{\beta}\ip{\psi}{H\psi}.
\end{equation}
The vectors on the left-hand sides of
\eqref{eq:abstract-fast-component-bound} and
\eqref{eq:abstract-slow-component-bound} are orthogonal.  Adding these bounds,
with $e\leqslant\ip{\psi}{H\psi}$,
proves \eqref{eq:two-scale-poincare}.
\end{proof}

\begin{proof}[Proof of Theorem~\ref{thm:main}]
For $n=2$, the Hamiltonian is the non-zero projection
$h_{\{1,2\}}$, so $\Delta(H_2)=1$.
Appendix~\ref{app:path-three-particle} gives $\Delta(H_3)=3/8$.
Suppose now that $n\geqslant4$.  Apply
Proposition~\ref{prop:two-scale-poincare} with
\[
 E=E_{P_n},
 \quad R=R_{P_n},
 \quad\mathsf Q=\mathsf Q_n,
 \quad\mathsf P=\mathsf P_n.
\]
Propositions \ref{prop:interchange-gap},
\ref{prop:symmetric-reaction-compression} and
\ref{prop:dual-slow-fast-schur} give, respectively,
\[
 \alpha=1-\cos\frac{\pi}{n},
 \qquad\beta=\frac{2c_0}{n},
 \qquad\eta=\frac{n-1}{3}.
\]
Thus
\[
 \Delta(H_n)\geqslant
 \left(\frac{n(n+2)}{6c_0}
       +\frac{1}{1-\cos(\pi/n)}\right)^{-1}.
\]
Proposition~\ref{prop:diffusive-upper-bound} gives the upper bound in
\eqref{eq:main-refined-bounds} and the spectral inclusion
\eqref{eq:path-laplacian-branch}.

Since $n+2\leqslant3n/2$ and $1-\cos(\pi/n)\geqslant2/n^2$ for
$n\geqslant4$, we have
\[
 \Delta(H_n)\geqslant
 \left(\frac{n^2}{4c_0}+\frac{n^2}{2}\right)^{-1}
 =\frac{c_*}{n^2}.
\]
The values for $n=2,3$ satisfy the same lower bound.  Finally,
$1-\cos t\leqslant t^2/2$ gives the stated upper bound.
\end{proof}

\section{Circuit convergence}\label{sec:circuits}

The spectral bounds give convergence estimates for
sequential, brickwork, periodic and all-to-all circuits.  All logarithms
are natural.  The mixing time is the least non-negative number of steps
for which the stated error is at most $\varepsilon$.

\subsection{Sequential circuits and CP-order approximation}

For an ensemble $\mathcal E$, write $\mathsf T_2^{\mathcal E}$
for its second-moment channel and $\mathsf T_2^{\Univ_\times}$ for the Haar
second-moment channel on $\Univ_\times$.  We write $\mathsf A\leqslant_{\rm cp}
\mathsf B$ when $\mathsf B-\mathsf A$ is completely positive, and denote by
$c_{\mathrm{cp}}(\mathcal E,2)$ the least $\varepsilon\geqslant0$ such that
\begin{equation}\label{eq:cp-order-definition}
 (1-\varepsilon)\mathsf T_2^{\Univ_\times}
 \leqslant_{\rm cp}\mathsf T_2^{\mathcal E}
 \leqslant_{\rm cp}(1+\varepsilon)\mathsf T_2^{\Univ_\times}.
\end{equation}
This is the notion of approximate designs of
\cite[Definition~2 and Eq.~(4)]{LiEtAl}.  The convolution
$\mathcal E^{*\ell}$ is the distribution of a product of $\ell$ independent
unitaries drawn from $\mathcal E$.

\begin{corollary}\label{cor:moment-mixing}
For $n\geqslant2$ and every integer $\ell\geqslant1$,
\begin{equation}\label{eq:exact-moment-mixing}
 \left\|\mathsf M_n^\ell-\mathsf P_n\right\|
 =\left(1-\frac{\Delta(H_n)}{n-1}\right)^\ell.
\end{equation}
Thus $\mathsf M_n$ has spectral gap $\Theta(n^{-3})$.  For $n\geqslant3$, its
operator-norm mixing time to accuracy $0<\varepsilon<1$ is
$\Theta(1+n^3\log\varepsilon^{-1})$; for $n=2$, one step gives exact mixing.
For $n\geqslant5$ and $0<\varepsilon<1$, every integer $\ell$ satisfying
\begin{equation}\label{eq:cp-design-depth}
 \ell\geqslant
 \frac{n^2(n-1)}{c_*}
 \left(4n\log2+\log\frac1\varepsilon\right),
\end{equation}
produces an $\varepsilon$-approximate $\mathrm{SU}(2)$-symmetric $2$-design.
In particular, the required number of random gates is
$O(n^4+n^3\log\varepsilon^{-1})$.
\end{corollary}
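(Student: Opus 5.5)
The argument splits into three parts: an exact spectral identity for $\mathsf M_n$, the operator-norm mixing time, and a passage from operator norm to CP order.

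\emph{Exact norm identity.} Each $\mathsf T_e^{(2)}$ is an orthogonal projection, so $\mathsf M_n=I-H_n/(n-1)$ is self-adjoint with $0\leqslant\mathsf M_n\leqslant I$. Its eigenvalue-one eigenspace is $\Ker H_n=\mathcal G_n$, so $\mathsf M_n\mathsf P_n=\mathsf P_n\mathsf M_n=\mathsf P_n$ and $\mathsf M_n^\ell-\mathsf P_n=(\mathsf M_n(I-\mathsf P_n))^\ell$. On $\Ran(I-\mathsf P_n)$ the eigenvalues of $\mathsf M_n$ are $1-\lambda/(n-1)$, where $\lambda$ ranges over the positive eigenvalues of $H_n$. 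Since $\lambda\leqslant\|H_n\|\leqslant n-1$, all of these lie in $[0,1)$. The largest is $1-\Delta(H_n)/(n-1)$, and this gives \eqref{eq:exact-moment-mixing}.

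\emph{Operator-norm mixing time.} Theorem~\ref{thm:main} gives $\Delta(H_n)=\Theta(n^{-2})$, so the gap of $\mathsf M_n$ is $\Theta(n^{-3})$. The mixing time is the least $\ell$ with $(1-g)^\ell\leqslant\varepsilon$, where $g=\Delta(H_n)/(n-1)$. For $n\geqslant3$ we have $g\leqslant\pi^2/(2n^2(n-1))<1$ and $g$ is bounded away from $1$. Hence $\log(1-g)^{-1}\asymp g$, and $\ell=\lceil\log\varepsilon^{-1}/\log(1-g)^{-1}\rceil=\Theta(1+n^3\log\varepsilon^{-1})$. For $n=2$, $g=1$, so $\mathsf M_2=\mathsf P_2$.

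\emph{CP order.} First identify $\mathsf P_n$ with the Haar moment operator $\mathsf T_2^{\Univ_\times}$. By Proposition~\ref{prop:complete-graph-ground-space} and \cite[Theorem~8]{LiEtAl}, for $n\geqslant5$ the fixed space of the second-moment operator of $\Univ_\times$ is $\mathcal G_n$. The Haar moment operator is the orthogonal projection onto this fixed space, so it equals $\mathsf P_n$. Next use the standard comparison, as in \cite{LiEtAl} and Brand\~ao--Harrow--Horodecki: for channels whose moment operators differ by $\Delta$ in the vectorised picture, the difference of the Choi matrices has trace norm at most $D^2\|\Delta\|$, up to a normalisation, where $D=2^{2n}$ is the dimension of $\mathcal K_n^{\otimes2}$. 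Then use the fact that the Haar Choi matrix dominates a multiple of the identity on its support. With the normalisation convention of \cite[Eq.~(4)]{LiEtAl}, I expect the net factor to be about $2^{4n}=e^{4n\log2}$, so
\[
c_{\mathrm{cp}}(\mathcal E_n^{*\ell},2)\leqslant 2^{4n}\left(1-\frac{\Delta(H_n)}{n-1}\right)^\ell\leqslant 2^{4n}\exp\!\left(-\frac{c_*\ell}{n^2(n-1)}\right).
\]
Requiring the right-hand side to be at most $\varepsilon$ gives \eqref{eq:cp-design-depth}.

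The main obstacle is getting this prefactor exactly. One must check that the relevant supports match, since the Choi matrix of the difference is supported within the Haar support, and that the dimension factor is precisely $2^{4n}$ rather than a larger power. If the direct argument loses an extra factor, I would fall back on the lemma in \cite{LiEtAl}, or its Brand\~ao--Harrow--Horodecki analogue, which converts a bound $\|\mathsf T_2^{\mathcal E}-\mathsf T_2^{\Univ_\times}\|_\infty\leqslant\varepsilon/d^{4}$, with $d=2^n$, into CP-order $\varepsilon$. This is exactly the $4n\log2$ term, and the $O(n^4+n^3\log\varepsilon^{-1})$ count then follows immediately.
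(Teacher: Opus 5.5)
Your proposal is correct and follows the paper's route. Positivity of $\mathsf M_n$ (an average of projections) and the spectral theorem give the exact norm identity, and Theorem~\ref{thm:main} gives the $\Theta(n^{-3})$ gap and the mixing time. For $n\geqslant5$, the fixed-space identification of \cite[Theorem~8]{LiEtAl}, the CP-order comparison \cite[Eq.~(15)]{LiEtAl} (requiring $2^{4n}\|\mathsf M_n^\ell-\mathsf P_n\|\leqslant\varepsilon$) and $\Delta(H_n)\geqslant c_*/n^2$ together yield \eqref{eq:cp-design-depth}. The paper simply cites that comparison lemma for the $2^{4n}$ prefactor, so your tentative Choi-matrix reconstruction is not needed; your fallback is exactly the step the paper uses.
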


\begin{proof}
Each $\mathsf T_e^{(2)}$ is an orthogonal projection.  Hence $\mathsf M_n$ is
positive semidefinite, and its eigenspace for the eigenvalue $1$ is $\Ker H_n$.
The spectral theorem and \eqref{eq:intro-moment-operator} give
\eqref{eq:exact-moment-mixing}.  For $n\geqslant5$, the fixed space agrees
with that of all $\mathrm{SU}(2)$-symmetric unitaries
\cite[Theorem~8 and Eqs.~(C38)--(C40)]{LiEtAl}.
By the CP-order comparison \cite[Eq.~(15)]{LiEtAl}, it suffices that
$2^{4n}\|\mathsf M_n^\ell-\mathsf P_n\|\leqslant\varepsilon$.
Theorem~\ref{thm:main} gives this bound whenever
\eqref{eq:cp-design-depth} holds.
\end{proof}

Let $q_n=1-\Delta(H_n)/(n-1)$.

\begin{corollary}\label{cor:sharp-cp-mixing}
For $n\geqslant5$ and every integer
$\ell\geqslant0$, the $\ell$-fold convolution of the one-step ensemble
satisfies
\begin{equation}\label{eq:two-sided-cp-mixing}
 2^{-(2n+1)}q_n^\ell
 \leqslant c_{\mathrm{cp}}(\mathcal E_n^{*\ell},2)
 \leqslant 2^{4n}q_n^\ell.
\end{equation}
Consequently, for every fixed $\eta>0$, the CP-order mixing time is
\begin{equation}\label{eq:sharp-cp-regime}
 \Theta\!\left(n^3\log\varepsilon^{-1}\right)
 \qquad\text{whenever}\qquad
 \log\varepsilon^{-1}\geqslant(2\log2+\eta)n.
\end{equation}
In particular, it is $\Theta(n^4)$ for
$\varepsilon=\exp(-\alpha n)$ with any fixed $\alpha>2\log2$.
\end{corollary}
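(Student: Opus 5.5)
The upper bound in \eqref{eq:two-sided-cp-mixing} is essentially already contained in the proof of Corollary~\ref{cor:moment-mixing}. The second-moment operator of $\mathcal E_n^{*\ell}$ is $\mathsf M_n^\ell$. For $n\geqslant5$, Theorem~\ref{thm:ground-space} together with \cite[Theorem~8]{LiEtAl} identifies $\mathsf P_n$ with the Haar moment operator of $\Univ_\times$. The comparison \cite[Eq.~(15)]{LiEtAl} then bounds $c_{\mathrm{cp}}$ by $2^{4n}\|\mathsf M_n^\ell-\mathsf P_n\|$, and \eqref{eq:exact-moment-mixing} gives this norm exactly as $q_n^\ell$. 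So the upper bound needs no new work.

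For the lower bound I will convert CP-order closeness back into a norm estimate on a well-chosen eigenvector. If \eqref{eq:cp-order-definition} holds with parameter $\varepsilon$, then the map $\mathsf T_2^{\mathcal E}-\mathsf T_2^{\Univ_\times}$ is sandwiched between $\pm\varepsilon\,\mathsf T_2^{\Univ_\times}$ in CP order. Applying both sides to a positive test operator $X$ and pairing with a positive $Y$ gives
\[
 \bigl|\Tr\bigl(Y(\mathsf T_2^{\mathcal E}-\mathsf T_2^{\Univ_\times})(X)\bigr)\bigr|
 \leqslant\varepsilon\,\Tr\bigl(Y\,\mathsf T_2^{\Univ_\times}(X)\bigr).
\]
I will pick $X$ and $Y$ from the Laplacian branch of Proposition~\ref{prop:diffusive-upper-bound}. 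There the eigenvector for the eigenvalue $1-\cos(\pi/n)=\Delta(H_n)$ has the form $\Specht{(n)}\otimes\Specht{(n)}\otimes\Specht{(n-1,1)}\otimes\Specht{(n)}$, with the standard vector $\mathbf x^{(1)}$ in one replica. In channel language, this corresponds to $X=\rho\otimes\rho$ with $\rho$ a product state (all qubits $|0\rangle$) perturbed along the permutation module, and to $Y$ a two-copy observable such as $\Pi_{\mathrm{sym}}\otimes O$ with $O=\sum_j x^{(1)}_j Z_j$. The left-hand side then equals a coefficient times $q_n^\ell$, because $\mathsf M_n^\ell-\mathsf P_n$ acts on that eigenvector by $q_n^\ell$ and annihilates all other components of $X$ and $Y$ in that sector. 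To make this rigorous I will split $O$ into positive and negative parts, or add a multiple of the identity, so that positivity of $X$ and $Y$ holds.

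The key quantitative step is to compare the two sides. The numerator overlap is of order $\|\mathbf x^{(1)}\|^2$ times local factors. The Haar term $\Tr(Y\mathsf T_2^{\Univ_\times}(X))$ is at most $\Tr Y\cdot\|X\|$ on the symmetric sector, and normalising by the symmetric-subspace dimension $n+1$ per replica pair should give a ratio at least $2^{-(2n+1)}$. A cruder route is also available: take $Y$ to be a projector of rank up to $2^{2n}$ with unit-trace $X$, and bound the Haar term by $1$ while the numerator is at least $2^{-2n}/2$. This is where the constant $2^{-(2n+1)}$ should come from. I expect this to be the main obstacle: choosing positive $X$ and $Y$ that isolate the slowest mode with an overlap that is only exponentially, not worse, small relative to the Haar value. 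I would first try $X$ equal to the normalised identity on two copies plus $\tfrac12$ times a traceless Hermitian built from $O$ (positive since $\|O\|\leqslant\sqrt n$ after rescaling), with $Y$ chosen the same way.

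Finally, the regime statement \eqref{eq:sharp-cp-regime} follows by taking logarithms in \eqref{eq:two-sided-cp-mixing} and using $\log q_n^{-1}=\Theta(n^{-3})$ from Theorem~\ref{thm:main}. Mixing requires $\ell\log q_n^{-1}\in[\log\varepsilon^{-1}-(2n+1)\log2,\;\log\varepsilon^{-1}+4n\log2]$. Under $\log\varepsilon^{-1}\geqslant(2\log2+\eta)n$, both endpoints are $\Theta(\log\varepsilon^{-1})$, which gives $\Theta(n^3\log\varepsilon^{-1})$. Taking $\varepsilon=e^{-\alpha n}$ then yields $\Theta(n^4)$.
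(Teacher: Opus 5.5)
Your upper bound matches the paper's, but the lower bound has a genuine gap. You take the test mode from the Laplacian branch of Proposition~\ref{prop:diffusive-upper-bound} and assert that its eigenvalue $1-\cos(\pi/n)$ equals $\Delta(H_n)$. Theorem~\ref{thm:main} only gives $\Delta(H_n)\leqslant1-\cos(\pi/n)$. Equality is not known for $n\geqslant5$: Section~\ref{sec:discussion} leaves the location of the gap eigenvector open. It is also false at $n=3$, where $\Delta(H_3)=3/8<1/2$. On your mode, $\mathsf M_n^\ell-\mathsf P_n$ therefore acts by $\tilde q_n^\ell$, with $\tilde q_n=1-(1-\cos(\pi/n))/(n-1)\leqslant q_n$. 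Whatever you extract is a bound of the form $c\,\tilde q_n^\ell$, which does not imply $2^{-(2n+1)}q_n^\ell$ for large $\ell$ unless the two rates coincide. (It would still suffice for the mixing-time orders, since $\log\tilde q_n^{-1}=\Theta(n^{-3})$ as well, but not for \eqref{eq:two-sided-cp-mixing}.) The quantitative step is also missing. $\mathsf M_n^\ell-\mathsf P_n$ annihilates only the components in $\mathcal G_n$, so for a general positive pair such as $X=\rho\otimes\rho$, $Y=\Pi_{\mathrm{sym}}\otimes O$, the pairing is a sum of many terms $\mu^\ell$ with coefficients of indefinite sign, not a multiple of $q_n^\ell$. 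The claims that the ratio ``should'' be at least $2^{-(2n+1)}$, or that the numerator is ``at least $2^{-2n}/2$'', are unsupported.

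The paper needs no eigenvector at all. It applies the norm comparison \cite[Lemma~B1, Eq.~(B10)]{LiEtAl} with system dimension $2^n$ and moment order two. This turns CP-order error $\varepsilon$ into operator-norm error at most $2\cdot2^{2n}\varepsilon$, whichever vector attains $\|\mathsf M_n^\ell-\mathsf P_n\|=q_n^\ell$. Your test-operator idea can also be completed directly by using a gap eigenvector instead of the Laplacian one. The second-moment channel is Hermiticity-preserving and $\mathsf M_n$ is self-adjoint, so there is a nonzero Hermitian $v$ on $\mathcal K_n^{\otimes2}$ whose vectorisation satisfies $\mathsf M_nv=q_nv$. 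It is orthogonal to the fixed vector $I$, so $\Tr v=0$, and $\mathsf T_2^{\Univ_\times}(v)=\mathsf P_nv=0$ for $n\geqslant5$. Take $X=Y=I+v/\|v\|_\infty\geqslant0$. Your inequality then reads $q_n^\ell\,\Tr(v^2)/\|v\|_\infty^2\leqslant2^{2n}\varepsilon$. Since $\Tr(v^2)\geqslant\|v\|_\infty^2$, this gives $c_{\mathrm{cp}}(\mathcal E_n^{*\ell},2)\geqslant2^{-2n}q_n^\ell$, slightly stronger than required. Your logarithmic analysis of the regime \eqref{eq:sharp-cp-regime} is then correct as written.
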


\begin{proof}
By \eqref{eq:exact-moment-mixing}, the operator-norm error of
$\mathcal E_n^{*\ell}$ is $q_n^\ell$.  Applying the norm comparisons in
\cite[Lemma~B1, Eq.~(B10)]{LiEtAl} with system dimension $2^n$ and moment order two
gives \eqref{eq:two-sided-cp-mixing}.  Theorem~\ref{thm:main} gives
$-\log q_n=\Theta(n^{-3})$.  If
$\log\varepsilon^{-1}\geqslant(2\log2+\eta)n$, both logarithmic terms
obtained from \eqref{eq:two-sided-cp-mixing} have order
$\log\varepsilon^{-1}$, proving the assertion.
\end{proof}

\subsection{Brickwork circuits}

One brickwork period consists of an odd layer followed by an even layer.
Let
\[
 \mathsf\Pi_{\mathrm{odd}}
 =\prod_{\substack{1\leqslant i<n\\ i\ \mathrm{odd}}}
   \mathsf T_{\{i,i+1\}}^{(2)},
 \qquad
 \mathsf\Pi_{\mathrm{even}}
 =\prod_{\substack{1\leqslant i<n\\ i\ \mathrm{even}}}
   \mathsf T_{\{i,i+1\}}^{(2)},
\]
where the eSWAP angle on each bond is averaged independently.  The
ensemble of one odd--even brickwork period is denoted by $\mathcal B_n$, and
its second-moment operator is
$\mathsf B_n=\mathsf\Pi_{\mathrm{even}}\mathsf\Pi_{\mathrm{odd}}$.
Set
$b_n
=\|\mathsf B_n-\mathsf P_n\|$.

\begin{corollary}\label{cor:brickwork}
For the open eSWAP brickwork circuit and every $n\geqslant4$,
\begin{equation}\label{eq:brickwork-gap}
 \begin{aligned}
  \cos\frac{\pi}{n}
  &\leqslant b_n
  \leqslant\left(1+\frac{\Delta(H_n)}4\right)^{-1/2},\\
  \frac{c_*}{4n^2+c_*}
  &\leqslant1-b_n^2
  \leqslant\sin^2\frac{\pi}{n}
  \leqslant\frac{\pi^2}{n^2}.
 \end{aligned}
\end{equation}
For every integer $p\geqslant1$,
\begin{equation}\label{eq:brickwork-exact-power-law}
 \left\|\mathsf B_n^p
       -\mathsf P_n\right\|
 =b_n^{2p-1}.
\end{equation}
Consequently, for $0<\varepsilon<1$, the operator-norm mixing time is
$\Theta(1+n^2\log\varepsilon^{-1})$ periods.  If $n\geqslant5$, then for
every integer $p\geqslant1$,
\begin{equation}\label{eq:brickwork-cp-comparison}
 2^{-(2n+1)}b_n^{2p-1}
 \leqslant c_{\mathrm{cp}}(\mathcal B_n^{*p},2)
 \leqslant2^{4n}b_n^{2p-1}.
\end{equation}
Thus the parallel depth required for an $\varepsilon$-approximate
$\mathrm{SU}(2)$-symmetric $2$-design is
$O(n^3+n^2\log\varepsilon^{-1})$.  For every fixed $\eta>0$, it is
$\Theta(n^2\log\varepsilon^{-1})$ whenever
$\log\varepsilon^{-1}\geqslant(2\log2+\eta)n$, and hence
$\Theta(n^3)$ for $\varepsilon=e^{-\alpha n}$ with fixed
$\alpha>2\log2$.
\end{corollary}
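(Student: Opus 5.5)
The brickwork estimates follow from the two-projection structure of $\mathsf B_n=\mathsf\Pi_{\mathrm{even}}\mathsf\Pi_{\mathrm{odd}}$. Each layer product is itself an orthogonal projection, because the twirls on disjoint bonds commute. Write $\mathsf\Pi_o=\mathsf\Pi_{\mathrm{odd}}$ and $\mathsf\Pi_e=\mathsf\Pi_{\mathrm{even}}$.

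\emph{Identification of the fixed space.} The range of $\mathsf\Pi_o$ is $\bigcap_{i\ \mathrm{odd}}\Ker h_{\{i,i+1\}}$, and similarly for $\mathsf\Pi_e$. Hence $\Ran\mathsf\Pi_o\cap\Ran\mathsf\Pi_e=\Ker H_n$, and the projection onto this intersection is $\mathsf P_n$.

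\emph{Exact power law.} By the alternating-projection identity,
\[
 (\mathsf\Pi_e\mathsf\Pi_o)^p-\mathsf P_n=\mathsf\Pi_e(\mathsf\Pi_o\mathsf\Pi_e)^{p-1}\mathsf\Pi_o-\mathsf P_n .
\]
Set $c=\|\mathsf\Pi_e\mathsf\Pi_o-\mathsf P_n\|$; this is the cosine of the Friedrichs angle, so $c=b_n$. The operator $Y=\mathsf\Pi_o\mathsf\Pi_e\mathsf\Pi_o-\mathsf P_n$ is positive with norm $b_n^2$. Computing $X^*X$ for $X=\mathsf B_n^p-\mathsf P_n$ gives $Y^{2p-1}$, so $\|X\|=b_n^{2p-1}$. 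Alternatively, Halmos' two-subspace decomposition reduces everything to $2\times2$ blocks with angle $\theta$, on which the power is computed explicitly and the norm is $\cos^{2p-1}\theta$. This yields \eqref{eq:brickwork-exact-power-law}.

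\emph{Bounds on $b_n$.} Let $H_o=I-\mathsf\Pi_o$ and $H_e=I-\mathsf\Pi_e$. Every odd-bond projection satisfies $h_{\{i,i+1\}}\le H_o$, since $h\le I-\prod\mathsf T$ for commuting projections; the same holds for even bonds. Summing gives $H_n\le\lceil n/2\rceil (H_o+H_e)$. That bound is too weak, so I would instead use the two-projection relation: on the complement of $\Ker H_n$,
\[
 H_o+H_e\ge 1-b_n \quad\text{and}\quad (H_o+H_e)\le$ something in terms of $b_n$.
\]
For the upper estimate $b_n^2\le(1+\Delta/4)^{-1}$, the natural route is the spectral parametrisation of a pair of projections. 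In a $2\times2$ block with angle $\theta$, the operator $I-\mathsf\Pi_o\mathsf\Pi_e\mathsf\Pi_o$ controls $\sin^2\theta$, and one shows $H_n\le 4\,(\mathsf\Pi_o\,\cdot\,)$-type bounds.

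Concretely, I would aim to prove
\[
 \mathsf\Pi_o H_n\mathsf\Pi_o\le 4\,\mathsf\Pi_o H_e\mathsf\Pi_o\cdot\text{(block factor)} .
\]
On the range of $\mathsf\Pi_o$, only even bonds act, so $H_n=\sum_{i\ \mathrm{even}}h_{\{i,i+1\}}\le H_e$ on $\Ran\mathsf\Pi_o$ in quadratic form. Together with the off-range contributions, this should yield $\Delta(H_n)\,(1-\mathsf P)\le 4\tan^2\theta$ on each block. That inequality is equivalent to $\cos^2\theta\le(1+\Delta/4)^{-1}$. Pinning down the constant $4$ is the step I expect to be the main obstacle. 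I would first try the variational route: pick $\psi$ in a two-dimensional Halmos block, decompose $\psi=\mathsf\Pi_o\psi+(I-\mathsf\Pi_o)\psi$, and bound each bond projection by $2(\|h\mathsf\Pi_o\psi\|^2+\|h(I-\mathsf\Pi_o)\psi\|^2)$.

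\emph{Lower bound on $b_n$.} Use the standard sector of Proposition~\ref{prop:diffusive-upper-bound}. There $h_e=\tfrac12(I-\rho(\transp{i}{j}))$, so $\mathsf T_e^{(2)}$ projects onto vectors symmetric in $x_i,x_j$. The layers become the projections onto $\{x_1=x_2,x_3=x_4,\dots\}$ and $\{x_2=x_3,\dots\}$ inside $\mathbf 1^\perp$, and their product is the deterministic brickwork walk on the path. Its second singular value can be computed explicitly through the cosine basis \eqref{eq:path-laplacian-eigenvectors}, giving $\cos(\pi/n)$. Hence $b_n\ge\cos(\pi/n)$ and $1-b_n^2\le\sin^2(\pi/n)\le\pi^2/n^2$.

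\emph{Consequences.} The bound $\frac{c_*}{4n^2+c_*}\le 1-b_n^2$ follows by substituting $\Delta(H_n)\ge c_*/n^2$ from Theorem~\ref{thm:main}. The mixing time follows from $\|\mathsf B_n^p-\mathsf P_n\|=b_n^{2p-1}$ with $-\log b_n=\Theta(n^{-2})$. For the CP comparison \eqref{eq:brickwork-cp-comparison}, use \cite[Lemma~B1]{LiEtAl} exactly as in Corollary~\ref{cor:sharp-cp-mixing}, with the fixed-space identification valid for $n\ge5$. The depth statements follow as in that corollary.
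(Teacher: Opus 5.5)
Your proposal has a genuine gap at the upper bound $b_n\leqslant(1+\Delta(H_n)/4)^{-1/2}$. This is the only point where Theorem~\ref{thm:main} enters. The bound $c_*/(4n^2+c_*)\leqslant1-b_n^2$, the upper half of the mixing time, the $O(n^3+n^2\log\varepsilon^{-1})$ depth and the $\Theta$ statements all rest on it.

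Your sketch does not prove this bound, and its one concrete step is false. On $\Ran\mathsf\Pi_{\mathrm{odd}}$ one does not have $\sum_{i\ \mathrm{even}}h_{\{i,i+1\}}\leqslant I-\mathsf\Pi_{\mathrm{even}}$: a sum of about $n/2$ commuting projections can be of order $n$, while $I-\mathsf\Pi_{\mathrm{even}}\leqslant I$. For instance, take $\mathbf a=|0000\rangle$, $\mathbf b=|1000\rangle\in V$ and $\psi=\mathbf a\otimes\mathbf a\otimes\mathbf b\otimes\mathbf b\otimes\mathbf a\otimes\mathbf a\otimes\cdots$. Every odd bond carries two equal factors, so $\psi\in\Ran\mathsf\Pi_{\mathrm{odd}}$. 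Every even bond has $\ip{\psi}{h_e\psi}=1/2$, since only replica $1$ differs across it. So the left side is $\tfrac12\lfloor(n-1)/2\rfloor>1$ for $n\geqslant7$.

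Nor can you argue block by block. $H_n$ does not commute with $\mathsf\Pi_{\mathrm{odd}}$, so it is not reduced by the Halmos blocks of the pair $(\mathsf\Pi_{\mathrm{odd}},\mathsf\Pi_{\mathrm{even}})$. The crude bound $H_n\leqslant\lceil n/2\rceil(2I-\mathsf\Pi_{\mathrm{odd}}-\mathsf\Pi_{\mathrm{even}})$, combined with your (correct) observation that $2I-\mathsf\Pi_{\mathrm{odd}}-\mathsf\Pi_{\mathrm{even}}\geqslant1-b_n$ off $\Ker H_n$, only gives $1-b_n\geqslant\Delta(H_n)/\lceil n/2\rceil$. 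That loses a factor $n$ and is not diffusive.

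The missing ingredient is the detectability lemma, which the paper cites from \cite[Eq.~(44)]{LiEtAl}. The constant $4$ is $g^2$, with $g=2$ the number of even bonds meeting a given odd bond. Write $\mathsf T_j=\mathsf T^{(2)}_{\{j,j+1\}}$, $h_j=h_{\{j,j+1\}}$, $\chi=\mathsf\Pi_{\mathrm{odd}}\psi$ and $\phi=\mathsf\Pi_{\mathrm{even}}\chi$, with the even twirls ordered $\mathsf T_2\mathsf T_4\cdots$. The proof keeps the far projections instead of discarding them:
\begin{itemize}
\item Pythagoras gives the telescoping identity $\|\chi\|^2-\|\phi\|^2=\sum_{j\ \mathrm{even}}a_j^2$, where $a_j=\|h_j\mathsf T_{j+2}\mathsf T_{j+4}\cdots\chi\|$.
\item Even terms annihilate $\phi$.
\item For odd $i$, $h_i$ commutes with $\mathsf T_{i+3}\mathsf T_{i+5}\cdots$ and annihilates $\chi$. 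Hence $h_i\mathsf T_{i-1}\mathsf T_{i+1}\omega=-h_i(h_{i-1}\mathsf T_{i+1}+h_{i+1})\omega$ with $\omega=\mathsf T_{i+3}\cdots\chi$, so $\|h_i\phi\|\leqslant a_{i-1}+a_{i+1}$.
\item Summing $2(a_{i-1}^2+a_{i+1}^2)$ over odd $i$ gives $\ip{\phi}{H_n\phi}\leqslant4(\|\psi\|^2-\|\phi\|^2)$.
\item For $\psi\perp\Ker H_n$, also $\ip{\phi}{H_n\phi}\geqslant\Delta(H_n)\|\phi\|^2$, which gives the bound.
\end{itemize}

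The rest of your plan is sound. Your identity $X^*X=Y^{2p-1}$ is a correct algebraic substitute for the paper's Halmos-block computation of the power law. The standard-sector lower bound follows the paper's route, but ``computed through the cosine basis'' needs an argument, since neither layer is diagonal in that basis. What holds is that each $\operatorname{span}\{x^{(k)},x^{(n-k)}\}$ is invariant under both layers, which act on it as rank-one projections onto lines at angle $k\pi/n$. The paper instead diagonalises the tridiagonal matrix $K^*K$.
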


\begin{proof}
The detectability lemma gives
$b_n^2\leqslant(1+\Delta(H_n)/4)^{-1}$
\cite[Eq.~(44)]{LiEtAl}.  For the lower bound on $b_n$, restrict
the two layers to one of the tensor products in
Proposition~\ref{prop:diffusive-upper-bound}.  Each local twirl becomes the
pair-averaging projection
$\tfrac12(I+\rho(\transp{i}{i+1}))$.

We compute the principal angles in the natural permutation module $\C^n$.
For $n=2m$, choose the following orthonormal bases of the odd- and even-layer
ranges:
\begin{align*}
 u_a&=\frac{\mathbf e_{2a-1}+\mathbf e_{2a}}{\sqrt2}
       \quad(1\leqslant a\leqslant m),\\
 v_0&=\mathbf e_1,\qquad
 v_a=\frac{\mathbf e_{2a}+\mathbf e_{2a+1}}{\sqrt2}
       \quad(1\leqslant a<m),\qquad v_m=\mathbf e_{2m}.
\end{align*}
Let $K=(\langle v_b,u_a\rangle)\in\C^{(m+1)\times m}$.  Then $4K^*K$ is
tridiagonal, with off-diagonal entries $1$, endpoint diagonal entries $3$
and all other diagonal entries $2$.
For $n=2m+1$, choose
\begin{align*}
 u_a&=\frac{\mathbf e_{2a-1}+\mathbf e_{2a}}{\sqrt2}
       \ (1\leqslant a\leqslant m),
 &u_{m+1}&=\mathbf e_{2m+1},\\
 v_0&=\mathbf e_1,
 &v_a&=\frac{\mathbf e_{2a}+\mathbf e_{2a+1}}{\sqrt2}
       \ (1\leqslant a\leqslant m).
\end{align*}
Now $K\in\C^{(m+1)\times(m+1)}$, and $4K^*K$ has diagonal entries
$3,2,\ldots,2$ and off-diagonal entries $1$, except for the last, which is
$\sqrt2$.  Expanding the characteristic determinants of these tridiagonal
matrices gives the usual three-term recurrence and hence
\[
 \det(\mu I-K^*K)
 =\prod_{k=0}^{\lfloor(n-1)/2\rfloor}
   \left(\mu-\cos^2\frac{k\pi}{n}\right).
\]
The singular values are therefore $\cos(k\pi/n)$ for
$0\leqslant k\leqslant\lfloor(n-1)/2\rfloor$.
The value at $k=0$ belongs to the constant vector.  On its orthogonal
complement $\mathcal V_n$, the largest singular value is
$\cos(\pi/n)$.  Since this subspace is orthogonal to $\Ker H_n$, we obtain
$b_n\geqslant\cos(\pi/n)$.
Together with Theorem~\ref{thm:main}, this proves
\eqref{eq:brickwork-gap}.

The two layer twirls are orthogonal projections with common range
$\Ker H_n$.  By the two-projection decomposition \cite[Theorem~2]{Halmos}, on a block with principal-angle cosine
$c$, the $p$th power of their ordered product has norm $c^{2p-1}$.
Taking the maximum over the blocks proves
\eqref{eq:brickwork-exact-power-law}.  Equation \eqref{eq:brickwork-gap}
then gives the operator-norm mixing time.

For $n\geqslant5$, the fixed space agrees with the Haar second-moment fixed
space on $\Univ_\times$.  Applying \cite[Lemma~B1, Eq.~(B10)]{LiEtAl} to
\eqref{eq:brickwork-exact-power-law} gives
\eqref{eq:brickwork-cp-comparison}.  The upper bound in
\eqref{eq:brickwork-gap} gives $O(n^3+n^2\log\varepsilon^{-1})$ periods.
If $\log\varepsilon^{-1}\geqslant(2\log2+\eta)n$, the logarithmic factors
in \eqref{eq:brickwork-cp-comparison} have order $\log\varepsilon^{-1}$.
Both bounds in \eqref{eq:brickwork-gap} then give the asserted matching order.
\end{proof}

\subsection{The periodic chain}

The path estimate also gives a diffusive gap for the periodic chain studied in
\cite[Fig.~1]{LiEtAl}.  Let $C_n$ be the cycle, denote the corresponding
one-step ensemble by $\mathcal E_{C_n}$, and set
\[
 \mathsf M_{C_n}=I-\frac1nH_{C_n},
 \qquad
 \gamma_n=\Delta(H_{C_n}),
 \qquad
 t_n=1-\frac{\gamma_n}{n}.
\]

\begin{corollary}\label{cor:periodic-chain}
For every $n\geqslant3$,
\begin{equation}\label{eq:periodic-chain-gap}
 \frac{c_*}{n^2}
 \leqslant\gamma_n
 \leqslant1-\cos\frac{2\pi}{n}.
\end{equation}
The entire non-zero spectrum of one half of the cycle Laplacian is
present:
\begin{equation}\label{eq:cycle-laplacian-branch}
 \left\{1-\cos\frac{2\pi k}{n}:1\leqslant k\leqslant n-1\right\}
 \subseteq\Spec\bigl(H_{C_n}\bigr).
\end{equation}
On each of the four tensor products in
Proposition~\ref{prop:diffusive-upper-bound}, these eigenvalues have the same
multiplicities as those of one half of the cycle Laplacian.
Moreover, for every integer $\ell\geqslant0$,
\begin{equation}\label{eq:periodic-exact-mixing}
 \left\|\mathsf M_{C_n}^\ell
       -\mathsf P_n\right\|
 =t_n^\ell.
\end{equation}
Consequently, $\gamma_n=\Theta(n^{-2})$ and
$\operatorname{gap}(\mathsf M_{C_n})=\Theta(n^{-3})$.
Thus, for $0<\varepsilon<1$, the operator-norm mixing time is
$\Theta(1+n^3\log\varepsilon^{-1})$.  For $n\geqslant5$,
\begin{equation}\label{eq:periodic-cp-comparison}
 2^{-(2n+1)}t_n^\ell
 \leqslant c_{\mathrm{cp}}\bigl((\mathcal E_{C_n})^{*\ell},2\bigr)
 \leqslant2^{4n}t_n^\ell.
\end{equation}
Hence the periodic chain forms an $\varepsilon$-approximate
$\mathrm{SU}(2)$-symmetric $2$-design after
$O(n^4+n^3\log\varepsilon^{-1})$ random gates.  For every fixed $\eta>0$,
the CP-order mixing time is
$\Theta(n^3\log\varepsilon^{-1})$ whenever
$\log\varepsilon^{-1}\geqslant(2\log2+\eta)n$; in particular, it is
$\Theta(n^4)$ for $\varepsilon=e^{-\alpha n}$ with fixed
$\alpha>2\log2$.
\end{corollary}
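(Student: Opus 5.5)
The lower bound will follow by comparing the cycle with the path. The upper bound and the spectral inclusion come directly from Proposition~\ref{prop:diffusive-upper-bound}. The mixing statements are the same spectral-theorem and CP-comparison arguments used for Corollaries~\ref{cor:moment-mixing} and \ref{cor:sharp-cp-mixing}.

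\emph{Spectrum and upper bound.} The cycle $C_n$ is connected, so Proposition~\ref{prop:diffusive-upper-bound} applies with $G=C_n$. On each of the four tensor products built from one standard and three trivial Specht factors, $H_{C_n}$ is unitarily equivalent to $\tfrac12L_{C_n}|_{\mathbf 1^\perp}$, with matching multiplicities. The cycle Laplacian has eigenvalues $2(1-\cos(2\pi k/n))$ for $0\leqslant k\leqslant n-1$, with Fourier eigenvectors, and the constant mode $k=0$ is excluded. This gives \eqref{eq:cycle-laplacian-branch}, the multiplicity statement, and $\gamma_n\leqslant1-\cos(2\pi/n)$.

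\emph{Lower bound.} Since $C_n\supseteq P_n$ as edge sets and every $h_e\geqslant0$, we have $H_{C_n}\geqslant H_{P_n}$. Proposition~\ref{prop:path-complete-ground-space-equality} shows that both operators have kernel $\mathcal G_n$. Hence
\[
 H_{C_n}\geqslant H_{P_n}\geqslant\Delta(H_n)(I-\mathsf P_n),
\]
so $\gamma_n\geqslant\Delta(H_n)\geqslant c_*/n^2$ by Theorem~\ref{thm:main}. This holds for all $n\geqslant3$, including $n=3$, since Theorem~\ref{thm:main} covers every $n\geqslant2$. Combined with $1-\cos(2\pi/n)\leqslant 2\pi^2/n^2$, this gives $\gamma_n=\Theta(n^{-2})$.

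\emph{Mixing.} $\mathsf M_{C_n}$ is an average of orthogonal projections, so it is positive semidefinite with $1$-eigenspace $\Ker H_{C_n}=\mathcal G_n$. Its other eigenvalues are $1-\lambda/n$ for $\lambda\in\Spec(H_{C_n})\setminus\{0\}$, and these lie in $[0,1)$. The largest of them is $t_n$, which gives \eqref{eq:periodic-exact-mixing}. The gap is then $\gamma_n/n=\Theta(n^{-3})$, and the operator-norm mixing time is $\Theta(1+n^3\log\varepsilon^{-1})$ because $-\log t_n=\Theta(n^{-3})$.

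For $n\geqslant5$, the fixed space agrees with the $\Univ_\times$ Haar fixed space by Theorem~\ref{thm:ground-space} and \cite{LiEtAl}. Then \cite[Lemma~B1, Eq.~(B10)]{LiEtAl} gives \eqref{eq:periodic-cp-comparison}, and the gate count and regime statements follow exactly as in Corollary~\ref{cor:sharp-cp-mixing}.

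\emph{Possible obstacle.} The one point needing care is that $\mathsf M_{C_n}$ might have an eigenvalue near $-1$ whose absolute value exceeds $t_n$. This would require some eigenvalue of $H_{C_n}$ to exceed $n$, which cannot happen because $H_{C_n}$ is a sum of $n$ projections and so $\|H_{C_n}\|\leqslant n$. With that checked, the argument is direct.
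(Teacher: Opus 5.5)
Your proposal is correct and follows the paper's own proof essentially step for step. The spectral inclusion, the multiplicities and the upper bound come from Proposition~\ref{prop:diffusive-upper-bound} with $G=C_n$. The lower bound comes from $H_{C_n}\geqslant H_n$ together with the common kernel supplied by Proposition~\ref{prop:path-complete-ground-space-equality}. The mixing and CP-order statements follow from positivity of $\mathsf M_{C_n}$ (the point your ``possible obstacle'' remark checks) and Lemma~B1 of Li et al.
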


\begin{proof}
By Proposition~\ref{prop:path-complete-ground-space-equality}, every path
ground vector is completely symmetric and is annihilated by every reaction
projection.  Hence it is also a cycle ground vector.  Conversely,
$H_{C_n}\geqslant H_n$, so the two kernels agree and
Theorem~\ref{thm:main} gives the lower bound in
\eqref{eq:periodic-chain-gap}.
On each of the four tensor products in Proposition~\ref{prop:diffusive-upper-bound},
$H_{C_n}$ is unitarily equivalent to one half of the cycle Laplacian
restricted to $\mathbf 1^\perp$.  Its spectrum gives \eqref{eq:cycle-laplacian-branch} and the
upper bound in \eqref{eq:periodic-chain-gap}.

Since $\mathsf M_{C_n}=I-n^{-1}H_{C_n}$ is positive
semidefinite, the spectral theorem gives \eqref{eq:periodic-exact-mixing}.
Together with \eqref{eq:periodic-chain-gap}, this gives the operator-norm
mixing time.  For $n\geqslant5$, the fixed-space identification from
Corollary~\ref{cor:moment-mixing} and \cite[Lemma~B1, Eq.~(B10)]{LiEtAl} give
\eqref{eq:periodic-cp-comparison}.  The logarithmic comparison used in the
proof of Corollary~\ref{cor:sharp-cp-mixing} gives the remaining assertions.
\end{proof}

\subsection{All-to-all circuits}

The second-moment operator for a uniformly sampled edge of the complete
graph is $\mathsf M_{K_n}=I-\binom n2^{-1}H_{K_n}$.
Combining Theorem~\ref{thm:all-to-all}, the spectral theorem and
\cite[Lemma~B1]{LiEtAl}, we obtain the following result.

\begin{corollary}\label{cor:all-to-all-mixing}
Let $n\geqslant2$.  Then, for every integer $\ell\geqslant0$,
\begin{equation}\label{eq:all-to-all-mixing}
 \operatorname{gap}(\mathsf M_{K_n})
 \geqslant\frac{2c_0}{n(n-1)},
 \qquad
 \left\|\mathsf M_{K_n}^\ell
 -\mathsf P_n\right\|
 \leqslant\exp\!\left(-\frac{2c_0\ell}{n(n-1)}\right).
\end{equation}
Consequently, for $n\geqslant5$ and $0<\varepsilon<1$, an $\varepsilon$-approximate
$\mathrm{SU}(2)$-symmetric $2$-design is obtained after
$O(n^3+n^2\log\varepsilon^{-1})$ all-to-all random eSWAP gates.
\end{corollary}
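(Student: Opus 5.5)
The corollary follows from Theorem~\ref{thm:all-to-all} together with the spectral theorem applied to $\mathsf M_{K_n}$, followed by the CP-order comparison of \cite[Lemma~B1]{LiEtAl}.

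\emph{Step 1: the operator $\mathsf M_{K_n}$.} Since $\mathsf M_{K_n}=\binom n2^{-1}\sum_{e}\mathsf T_e^{(2)}$ is an average of orthogonal projections, it satisfies $0\leqslant\mathsf M_{K_n}\leqslant I$. Its eigenvalue-one eigenspace is $\Ker H_{K_n}$. By Proposition~\ref{prop:path-complete-ground-space-equality}, this kernel equals $\mathcal G_n$, whose projection is $\mathsf P_n$.

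\emph{Step 2: the gap.} Theorem~\ref{thm:all-to-all} gives $H_{K_n}\geqslant c_0(I-\mathsf P_n)$. Dividing by $\binom n2$ shows that every eigenvalue of $I-\mathsf M_{K_n}$ on $\Ran(I-\mathsf P_n)$ is at least $2c_0/(n(n-1))$, which is the gap bound. Since $\mathsf M_{K_n}$ commutes with $\mathsf P_n$ and acts as the identity on $\Ran\mathsf P_n$, we have
\[
 \mathsf M_{K_n}^\ell-\mathsf P_n=\bigl(\mathsf M_{K_n}(I-\mathsf P_n)\bigr)^\ell .
\]
The positive semidefiniteness from Step~1 then gives
\[
 \bigl\|\mathsf M_{K_n}^\ell-\mathsf P_n\bigr\|\leqslant\bigl(1-2c_0/(n(n-1))\bigr)^\ell .
\]
The exponential bound follows from $1-x\leqslant e^{-x}$. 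The case $\ell=0$ is trivial, since $\|I-\mathsf P_n\|\leqslant1$.

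\emph{Step 3: the design statement.} For $n\geqslant5$, the fixed space $\mathcal G_n$ coincides with the Haar second-moment fixed space of $\Univ_\times$ by \cite[Theorem~8]{LiEtAl}, so $\mathsf P_n$ is the Haar moment operator. As in Corollary~\ref{cor:moment-mixing}, the comparison \cite[Lemma~B1]{LiEtAl} (equivalently \cite[Eq.~(15)]{LiEtAl}) gives
\[
 c_{\mathrm{cp}}\leqslant2^{4n}\bigl\|\mathsf M_{K_n}^\ell-\mathsf P_n\bigr\| .
\]
It therefore suffices to take
\[
 \ell\geqslant\frac{n(n-1)}{2c_0}\bigl(4n\log2+\log\varepsilon^{-1}\bigr),
\]
which is $O(n^3+n^2\log\varepsilon^{-1})$.

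No step presents a real obstacle. The only point needing care is the identification in Step~1 of the eigenvalue-one eigenspace of $\mathsf M_{K_n}$ with $\Ran\mathsf P_n$. This holds because the ground space does not depend on the connected graph (Proposition~\ref{prop:path-complete-ground-space-equality}), so the same $\mathsf P_n$ appears for both $K_n$ and $P_n$.
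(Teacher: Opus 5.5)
Your proposal is correct and takes the same route as the paper, which obtains the corollary by combining Theorem~\ref{thm:all-to-all}, the spectral theorem for the projection average $\mathsf M_{K_n}$, and the CP-order comparison of \cite[Lemma~B1]{LiEtAl}. The steps you add fill in details the paper leaves implicit: identifying the eigenvalue-one space with $\Ran\mathsf P_n$ via Proposition~\ref{prop:path-complete-ground-space-equality}, treating $\ell=0$ separately, and giving the explicit depth $\ell\geqslant\frac{n(n-1)}{2c_0}\bigl(4n\log2+\log\varepsilon^{-1}\bigr)$.
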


Table~\ref{tab:geometry-scales} summarises the convergence bounds.
One brickwork period consists of an odd and an even layer, so its number of
periods and its parallel depth differ only by a factor of two.

\begin{table}[!ht]
\centering
\small
\renewcommand{\arraystretch}{1.16}
\begin{tabular}{@{}p{0.22\textwidth}p{0.25\textwidth}p{0.25\textwidth}p{0.19\textwidth}@{}}
\toprule
{\raggedright circuit\par}
& {\raggedright contraction rate\par}
& {\raggedright operator-norm mixing\par}
& {\raggedright CP-order mixing at $\varepsilon=e^{-\alpha n}$\par} \\
\midrule
open, random bond
& $1-q_n=\Theta(n^{-3})$
& $\Theta(1+n^3\log\varepsilon^{-1})$ gates
& $\Theta(n^4)$ gates \\
periodic, random bond
& $1-t_n=\Theta(n^{-3})$
& $\Theta(1+n^3\log\varepsilon^{-1})$ gates
& $\Theta(n^4)$ gates \\
open brickwork
& $1-b_n^2=\Theta(n^{-2})$
& $\Theta(1+n^2\log\varepsilon^{-1})$ periods
& $\Theta(n^3)$ periods \\
all-to-all, random edge
& $\operatorname{gap}(\mathsf M_{K_n})=\Omega(n^{-2})$
& $O(1+n^2\log\varepsilon^{-1})$ gates
& $O(n^3)$ gates \\
\bottomrule
\end{tabular}
\caption{Convergence bounds proved in this paper.  In the last column,
$\alpha>2\log2$ is fixed; the all-to-all upper bound holds for every fixed
$\alpha>0$.}
\label{tab:geometry-scales}
\end{table}

\section{Further questions}\label{sec:discussion}

Theorem~\ref{thm:main} determines the diffusive order of the open-chain gap.  It is natural to ask whether
$n^2\Delta(H_n)$ converges and which representation contains an eigenvector
for the smallest positive eigenvalue.  Determining the asymptotic size of
the unnormalised complete-graph gap would sharpen the all-to-all design bound.

The proof uses the two global replica pairings and the permutation module
on unordered pairs.  At higher moments, the fixed space involves more
pairing patterns; higher local dimension also introduces further Young
diagrams.  These extensions lead to two questions: which representations
occur in the off-diagonal block, and what gap bounds follow from deleting
one tensor factor?

\appendix

\section{Exact three-site calculations}\label{app:three-site}

We compute the complete-graph and path spectra using the same three-site
Schur--Weyl decomposition.  We first describe the common blocks, then
restrict to the diagonal $\Sn_3$-invariant subspace for the complete-graph
calculation.  The path calculation uses each block in full.

For one qubit replica,
\begin{equation}\label{eq:three-site-one-replica-schur-weyl}
 (\C^2)^{\otimes3}
 \cong
 \Symm^3(\C^2)\otimes\Specht{(3)}
 \oplus
 \Schur{(2,1)}\C^2\otimes\Specht{(2,1)}.
\end{equation}
The Schur factors have dimensions $4$ and $2$, respectively.  Let
$\mathcal V_3=\Specht{(2,1)}$ be the two-dimensional standard
representation.  In a Young orthogonal basis,
\begin{equation}\label{eq:three-site-standard-transpositions}
 \rho(\transp{1}{2})
 =\begin{pmatrix}1&0\\0&-1\end{pmatrix},
 \qquad
 \rho(\transp{2}{3})
 =\begin{pmatrix}
  -\frac12&\frac{\sqrt3}{2}\\[1mm]
  \frac{\sqrt3}{2}&\frac12
 \end{pmatrix}.
\end{equation}
The remaining transposition is represented by
\[
 \rho(\transp{1}{3})
 =\rho(\transp{1}{2})
  \rho(\transp{2}{3})
  \rho(\transp{1}{2}).
\]

Let $S\subseteq\{1,2,3,4\}$ be the set of replicas carrying the standard
Specht factor.  The other replicas carry
the trivial factor $\Specht{(3)}$.  The Schur factors in
\eqref{eq:three-site-one-replica-schur-weyl} determine the multiplicity of each tensor product of Specht modules and do
not affect the eigenvalues of the corresponding restriction.

For a transposition matrix $T$ in the standard representation, let $T_a$
act as $T$ on the factor corresponding to replica $a$ when $a\in S$, and
as the identity otherwise.  Write
$\mathsf P_{a,\pm}(T)=\tfrac12(I\pm T_a)$, and set
\begin{equation}\label{eq:active-replica-reaction-projection}
 q_S(T)
 =\mathsf P_{1,+}(T)\mathsf P_{2,+}(T)
  \mathsf P_{3,-}(T)\mathsf P_{4,-}(T)
 +\mathsf P_{1,-}(T)\mathsf P_{2,-}(T)
  \mathsf P_{3,+}(T)\mathsf P_{4,+}(T).
\end{equation}

\subsection{The three-particle complete-graph spectrum}\label{app:complete-three-particle}

Restricting to $\mathcal S_3$ selects the diagonal
invariant subspace
\[
 \mathcal I_S
 =\left(
   (\mathcal V_3)^{\otimes|S|}
  \right)^{\Sn_3}.
\]
On this subspace, the complete-graph reaction Hamiltonian is
\begin{equation}\label{eq:complete-three-active-block}
 R_{3,S}
 =q_S\!\left(
   \rho(\transp{1}{2})
  \right)
 +q_S\!\left(
   \rho(\transp{2}{3})
  \right)
 +q_S\!\left(
   \rho(\transp{1}{3})
  \right).
\end{equation}
The invariant dimensions are
\[
 \dim\mathcal I_S
 =\begin{cases}
  1,&|S|=0,2,3,\\
  3,&|S|=4,\\
  0,&|S|=1.
 \end{cases}
\]
These dimensions follow by averaging the character of
$(\mathcal V_3)^{\otimes|S|}$ over $\Sn_3$.

Let $\mathbf e_0,\mathbf e_1$ be the basis used in
\eqref{eq:three-site-standard-transpositions}.  For $|S|=2$, the invariant
line is spanned by
$\Omega=\mathbf e_0\otimes\mathbf e_0+\mathbf e_1\otimes\mathbf e_1$.
Substituting into \eqref{eq:active-replica-reaction-projection} and
\eqref{eq:complete-three-active-block} gives
\[
 R_{3,S}\Omega
 =\begin{cases}
  \dfrac32\Omega,
  &S=\{1,2\}\text{ or }S=\{3,4\},\\[2mm]
  0,
  &\text{one replica is a ket and the other a bra}.
 \end{cases}
\]

For $|S|=3$, the invariant line is spanned, up to permutation of the tensor
factors, by
\[
 \Theta
 =\mathbf e_{000}-\mathbf e_{011}
  -\mathbf e_{101}-\mathbf e_{110},
\]
where
$\mathbf e_{abc}=\mathbf e_a\otimes\mathbf e_b\otimes\mathbf e_c$.
For each of the four choices of $S$, substitution gives
$R_{3,S}\Theta=\tfrac34\Theta$.

For $S=\{1,2,3,4\}$, an invariant basis is
\begin{align*}
 \mathbf z_1
 &=\mathbf e_{0101}-\mathbf e_{0110}
  -\mathbf e_{1001}+\mathbf e_{1010},\\
 \mathbf z_2
 &=\mathbf e_{0011}-\mathbf e_{0110}
  -\mathbf e_{1001}+\mathbf e_{1100},\\
 \mathbf z_3
 &=\mathbf e_{0000}+\mathbf e_{0110}
  +\mathbf e_{1001}+\mathbf e_{1111}.
\end{align*}
Let $\mathsf Z$ be the linear map with columns
$\mathbf z_1,\mathbf z_2,\mathbf z_3$.  The Gram matrix and the matrix of the
compressed quadratic form are
\[
 \mathsf Z^*\mathsf Z
 =\begin{pmatrix}
  4&2&-2\\2&4&-2\\-2&-2&4
 \end{pmatrix},
 \qquad
 \mathsf Z^*R_{3,S}\mathsf Z
 =\begin{pmatrix}
  0&0&0\\0&6&0\\0&0&0
 \end{pmatrix}.
\]
Hence
\[
 \det\!\left(
  \mathsf Z^*R_{3,S}\mathsf Z
  -\lambda\mathsf Z^*\mathsf Z
 \right)
 =-8\lambda^2(4\lambda-9),
\]
so the eigenvalues on $\mathcal I_S$ are $0,0,9/4$.  The $|S|=0$ block
has eigenvalue $0$.

Table~\ref{tab:complete-three-site} includes the multiplicities contributed
by the Schur factors.

\begin{table}[ht]
\centering
\begin{tabular}{c c c c c}
\toprule
standard replicas & number of choices
& $\dim\mathcal I_S$ & Schur multiplicity & eigenvalues\\
\midrule
none & $1$ & $1$ & $4^4=256$ & $0$\\
$\{1,2\}$ or $\{3,4\}$ & $2$ & $1$ & $2^2 4^2=64$ & $3/2$\\
one ket and one bra & $4$ & $1$ & $2^2 4^2=64$ & $0$\\
three replicas & $4$ & $1$ & $2^3 4=32$ & $3/4$\\
all four replicas & $1$ & $3$ & $2^4=16$ & $0,0,9/4$\\
\bottomrule
\end{tabular}
\caption{Exact block decomposition of $R_3$ on $\Symm^3(\C^{16})$.}
\label{tab:complete-three-site}
\end{table}

Summing over the blocks, the eigenvalue $0$ has multiplicity
$256+4\cdot64+2\cdot16=544$.  The multiplicities of $3/4$, $3/2$, and $9/4$
are $128$, $128$, and $16$, respectively.  This proves
Proposition~\ref{prop:complete-graph-three-particle-spectrum}.

\subsection{The three-site path spectrum}\label{app:path-three-particle}

For the path, we retain the full tensor product of the Specht factors.
Given a set $S$ of replicas carrying the standard representation and a
transposition matrix $T$,
define the local Hamiltonian term
\begin{equation}\label{eq:active-replica-moment-projection}
 h_S(T)=\frac12\left(I-T_1T_2T_3T_4\right)+q_S(T),
\end{equation}
where $T_a=I$ for $a\notin S$.  The corresponding path block is
\begin{equation}\label{eq:path-three-active-block}
 H_{3,S}
 =h_S\!\left(
   \rho(\transp{1}{2})
  \right)
 +h_S\!\left(
   \rho(\transp{2}{3})
  \right)
 \quad\text{on }(\mathcal V_3)^{\otimes|S|}.
\end{equation}

All entries of \eqref{eq:path-three-active-block} lie in $\mathbb Q(\sqrt3)$.
Substituting the two matrices in
\eqref{eq:three-site-standard-transpositions} into
\eqref{eq:active-replica-moment-projection} and expanding
$\det(\lambda I-H_{3,S})$ gives the characteristic polynomials in
Table~\ref{tab:path-three-site}.  The symmetries obtained by permuting the two
ket replicas, permuting the two bra replicas, and simultaneously exchanging
ket and bra replicas reduce the calculation to the six cases in the table.

\begin{table}[ht]
\centering
\begin{tabular}{c c l}
\toprule
standard replicas & block dimension & characteristic polynomial\\
\midrule
none & $1$ & $\lambda$\\[1mm]
one replica & $2$ & $(\lambda-\tfrac12)(\lambda-\tfrac32)$\\[1mm]
two ket or two bra & $4$ & $(\lambda-\tfrac34)(\lambda-\tfrac54)(\lambda-2)^2$\\[1mm]
one ket and one bra & $4$ & $\lambda(\lambda-\tfrac12)(\lambda-\tfrac32)(\lambda-2)$\\[1mm]
three replicas & $8$ & $(\lambda-\tfrac38)(\lambda-\tfrac12)^2
 (\lambda-\tfrac32)^2(\lambda-\tfrac{13}{8})(\lambda-2)^2$\\[1mm]
all four replicas & $16$ & $\lambda^2(\lambda-\tfrac12)^3
 (\lambda-\tfrac78)(\lambda-\tfrac98)
 (\lambda-\tfrac32)^3(\lambda-2)^6$\\
\bottomrule
\end{tabular}
\caption{Characteristic polynomials of the restrictions $H_{3,S}$.}
\label{tab:path-three-site}
\end{table}

The Schur factors in
\eqref{eq:three-site-one-replica-schur-weyl} affect only multiplicities.
The least positive root is $3/8$, attained in every block with three
standard factors.  Therefore $\Delta(H_3)=\frac38$.


\newpage
\begin{thebibliography}{99}
\setlength{\itemsep}{0pt}
\setlength{\parsep}{0pt}

\bibitem{BrandaoHarrowHorodecki}
F. G. S. L. Brand\~ao, A.W. Harrow, and M. Horodecki,
\textit{Local random quantum circuits are approximate polynomial-designs},
Communications in Mathematical Physics 346 (2016), 397--434.

\bibitem{CaputoKac}
P. Caputo,
\textit{On the spectral gap of the Kac walk and other binary collision processes},
ALEA. Latin American Journal of Probability and Mathematical Statistics 4 (2008), 205--222.

\bibitem{CaputoLiggettRichthammer}
P. Caputo, T.M. Liggett, and T. Richthammer,
\textit{Proof of Aldous' spectral gap conjecture},
Journal of the American Mathematical Society 23 (2010), 831--851.

\bibitem{DankertEtAl}
C. Dankert, R. Cleve, J. Emerson, and E. Livine,
\textit{Exact and approximate unitary $2$-designs and their application to fidelity estimation},
Physical Review A 80 (2009), 012304.

\bibitem{DenerisEtAl}
A.E. Deneris, P. Bermejo, P. Braccia, L. Cincio, and M. Cerezo,
\textit{Exact spectral gaps of random one-dimensional quantum circuits},
Physical Review A 112 (2025), 062619.

\bibitem{FultonHarris}
W. Fulton and J. Harris,
\textit{Representation theory: a first course},
Graduate Texts in Mathematics 129, Springer, New York, 1991.

\bibitem{GossetMozgunov}
D. Gosset and E. Mozgunov,
\textit{Local gap threshold for frustration-free spin systems},
Journal of Mathematical Physics 57 (2016), 091901.

\bibitem{Haferkamp}
J. Haferkamp,
\textit{Random quantum circuits are approximate unitary $t$-designs in depth $O(nt^{5+o(1)})$},
Quantum 6 (2022), 795.

\bibitem{Halmos}
P.R. Halmos,
\textit{Two subspaces},
Transactions of the American Mathematical Society 144 (1969), 381--389.

\bibitem{HarrowLow}
A.W. Harrow and R.A. Low,
\textit{Random quantum circuits are approximate $2$-designs},
Communications in Mathematical Physics 291 (2009), 257--302.

\bibitem{HarrowMehraban}
A.W. Harrow and S. Mehraban,
\textit{Approximate unitary $t$-designs by short random quantum circuits using nearest-neighbor and long-range gates},
Communications in Mathematical Physics 401 (2023), 1531--1626.

\bibitem{HearthEtAl}
S.N. Hearth, M.O. Flynn, A. Chandran, and C. R. Laumann,
\textit{Unitary $k$-designs from random number-conserving quantum circuits},
Physical Review X 15 (2025), 021022.

\bibitem{HearthErratum}
S.N. Hearth, M.O. Flynn, A. Chandran, and C. R. Laumann,
\textit{Erratum: Unitary $k$-designs from random number-conserving quantum
circuits [Phys. Rev. X 15, 021022 (2025)]},
Physical Review X 16 (2026), 019901.

\bibitem{HunterJones}
N. Hunter-Jones,
\textit{Unitary designs from statistical mechanics in random quantum circuits},
Preprint, arXiv:1905.12053 (2019).

\bibitem{Knabe}
S. Knabe,
\textit{Energy gaps and elementary excitations for certain VBS-quantum antiferromagnets},
Journal of Statistical Physics 52 (1988), 627--638.

\bibitem{LiEtAl}
Z. Li, H. Zheng, J. Liu, L. Jiang, and Z.-W. Liu,
\textit{Designs from local random quantum circuits with $\mathrm{SU}(d)$ symmetry},
PRX Quantum 5 (2024), 040349.

\bibitem{LiEtAlApplications}
Z. Li, H. Zheng, Y. Wang, L. Jiang, Z.-W. Liu, and J. Liu,
\textit{$\mathrm{SU}(d)$-symmetric random unitaries: quantum scrambling, error
correction, and machine learning},
npj Quantum Information 11 (2025), 158.

\bibitem{LiangZhuISWAPMaximises}
Y. Liang and H. Zhu,
\textit{iSWAP maximises the second-moment spectral gap in random quantum circuits},
Preprint, arXiv:2607.29551 (2026).

\bibitem{LiangZhuISWAPMatching}
Y. Liang and H. Zhu,
\textit{Spectral gaps of ironed two-qubit gadgets matching the iSWAP gap},
Preprint, arXiv:2607.28521 (2026).

\bibitem{LiuHulseMarvian}
H. Liu, A. Hulse, and I. Marvian,
\textit{Unitary designs from random symmetric quantum circuits},
Preprint, arXiv:2408.14463 (2024).

\bibitem{Marvian}
I. Marvian,
\textit{Restrictions on realizable unitary operations imposed by symmetry and locality},
Nature Physics 18 (2022), 283--289.

\bibitem{MarvianLiuHulse}
I. Marvian, H. Liu, and A. Hulse,
\textit{Rotationally invariant circuits: universality with the exchange
interaction and two ancilla qubits},
Physical Review Letters 132 (2024), 130201.

\bibitem{MitsuhashiEtAl}
Y. Mitsuhashi, R. Suzuki, T. Soejima, and N. Yoshioka,
\textit{Unitary designs of symmetric local random circuits},
Physical Review Letters 134 (2025), 180404.

\bibitem{MoudgalyaMotrunich}
S. Moudgalya and O.I. Motrunich,
\textit{Symmetries as ground states of local superoperators: hydrodynamic implications},
PRX Quantum 5 (2024), 040330.

\bibitem{Nachtergaele}
B. Nachtergaele,
\textit{The spectral gap for some spin chains with discrete symmetry breaking},
Communications in Mathematical Physics 175 (1996), 565--606.

\bibitem{Sagan}
B.E. Sagan,
\textit{The symmetric group: representations, combinatorial algorithms, and symmetric functions},
second edition, Graduate Texts in Mathematics 203, Springer, New York, 2001.

\bibitem{Sasada}
M. Sasada,
\textit{On the spectral gap of the Kac walk and other binary collision processes on $d$-dimensional lattice},
in K. Iohara, S. Morier-Genoud and B. R\'emy (eds.), \textit{Symmetries, integrable systems and representations}, Springer Proceedings in Mathematics and Statistics 40, Springer, London, 2013, 543--560.

\end{thebibliography}
\end{document}